%% file: payload.tex
\newif\ifsubmission
 \submissiontrue

\documentclass[11pt,hidelinks]{article}

\input{macros}
\title{\textbf{Undetectable Conversations Between AI Agents \\
via Pseudorandom Noise-Resilient Key Exchange}}
\author{
  Vinod Vaikuntanathan\\
  \small MIT
  \and
  Or Zamir\\
  \small Tel Aviv University
}
\date{}

\begin{document}
\maketitle

\begin{abstract}
AI agents are increasingly deployed to interact with other agents on behalf of users and organizations. We ask whether two such agents, operated by different entities, can carry out a \emph{parallel secret conversation} while still producing a transcript that is computationally indistinguishable from an honest interaction, even to a very strong passive auditor that knows the full model descriptions, the protocol, and the agents' private contexts. Building on recent work on watermarking and steganography for language-model text generation, we first show that in the shared-key regime, where the parties possess an \emph{interaction-unique} secret key, the interactive two-party setting admits {\em optimal-rate} covert conversation: the hidden conversation can exploit, up to constant factors, essentially all of the entropy present in the honest message distributions.

Our main technical and conceptual contribution concerns extending this to the \emph{keyless} setting, where the agents begin with no shared secret. We show that covert key exchange, and hence covert conversation, is possible even when each model has an arbitrary private context, and their messages are short and fully adaptive, assuming only that sufficiently many individual honest messages have at least constant min-entropy. This stands in contrast to previous covert communication and covert key-exchange works, which relied on the min-entropy in each individual message growing with the security parameter. To obtain this result, we introduce a new cryptographic primitive, which we call \emph{pseudorandom noise-resilient key exchange} (PNR-KE): roughly, a key-exchange protocol whose public transcript is pseudorandom while still remaining correct under constant noise. We study this primitive directly, giving several constructions in regimes relevant to our application as well as strong limitations showing that more na\"{i}ve variants are impossible or vulnerable to efficient attacks. 

These results show that transcript-level auditing alone cannot, in general, rule out covert coordination between AI agents, and they identify a new cryptographic theory that may be of independent interest.
\end{abstract}

\thispagestyle{empty}
\newpage
\begingroup
    \hypersetup{linkcolor=black}
    \tableofcontents
\endgroup
\thispagestyle{empty}
\newpage

\section{Introduction}
\label{sec:intro}

AI agents are increasingly deployed to communicate, negotiate, and transact on behalf of users and organizations.
In many such deployments, an agent's role is not merely to produce a single output, but to engage in an extended interaction with another agent or service.
These interactions often produce a transcript that can be logged, monitored, or audited after the fact, and this is frequently viewed as a meaningful safeguard: even if the agents are powerful and their internal reasoning is opaque, one can at least inspect what they said.

This paper asks whether such transcript-level monitoring can fundamentally prevent agents from communicating secretly.
More concretely, suppose two agents are meant to carry out an ordinary \emph{honest} interaction, but would also like to run a separate hidden conversation over the same visible channel.
Can they do so in a way that is \emph{undetectable}, namely such that the resulting visible transcript is computationally indistinguishable from an honest execution of the original interaction?
Our main result shows that, perhaps surprisingly, the answer is yes in a broad range of settings.
Importantly, our monitor is very strong: it may know the full model descriptions, the full protocol description, and the agents' private contexts; only the agents' fresh internal randomness remains hidden.

This question is motivated both by AI security and by cryptography.
From the AI-security perspective, it asks whether transcript auditing can serve as a robust safeguard against covert coordination between agents.
From the cryptographic perspective, it leads to a new form of steganographic communication, where the ``cover object'' is not static media or even a one-shot text sample, but an evolving two-party interactive transcript whose honest message distributions depend on both the private, dynamically changing state and on the messages sent by the other party whose private state is unknown.

Recent work on LLM watermarking and steganography~\cite{christ2023undetectable,zamir2024undetectable} showed that one can correlate text generation with a secret key, or even embed an explicit payload into generated text, while preserving computational indistinguishability from the honest model distribution -- {\em given a secret key shared by the embedder and the decoder}.
A natural first observation of our work is that these tools already imply a strong shared-key result in the interactive setting.
If Alice and Bob possess an \emph{interaction-unique shared secret key}, then they can use the honest interaction itself as a high-rate covert channel: the visible transcript remains indistinguishable from the honest one, while the hidden conversation can exploit, up to constant factors, the entropy present in the honest message distributions.

The more difficult and conceptually central question is what happens when Alice and Bob do \emph{not} begin with any shared secret key.
At that point, the problem becomes much subtler than simply embedding random-looking bits into text.
Unlike in classical steganography, the honest next-message distribution is not a public channel distribution known to both parties: it depends on the speaker's model, private context, and internal state, and may be entirely unknown to the other side.
Thus, to establish a shared key covertly, the parties must somehow coordinate hidden communication while preserving the exact or computationally indistinguishable law of an interaction whose honest behavior is itself distributed and private.

Our main technical result shows that this keyless challenge can nevertheless be solved under remarkably weak assumptions on the honest interaction.
At a high level, it is enough that sufficiently many individual messages in the honest conversation have only (at least) \emph{constant} min-entropy, while the models can start with \emph{private states}; Alice and Bob share no private setup whatsoever.
Under these assumptions, we show how the parties can covertly establish an interaction-unique shared key and then use it to run an arbitrary hidden conversation, all while keeping the visible transcript computationally indistinguishable from an honest execution of the original interaction.
In particular, covert coordination remains possible even with short and adaptive messages.

This should be contrasted with previous work on steganographic communication and covert key exchange (e.g.~\cite{von2004public,DBLP:conf/innovations/HorelPRV19}).
Prior constructions relied on substantially stronger assumptions, such as cover objects that are generated non-adaptively rather than as part of an interactive transcript, and high min-entropy in each individual message, scaling with the security parameter.
In our setting, neither assumption is needed: the honest protocol may adaptively depend on the transcript, the next-message distributions may depend on evolving private state on both sides, and only {\em constant min-entropy per eligible message} suffices.
This is a qualitative difference. If the honest conversation protocol is viewed as fixed, then the entropy of its individual messages is determined by the environment and is therefore outside our control.
Requiring that this entropy scale with the security parameter effectively places an upper bound on the security parameter itself.
By contrast, in our results the security parameter can be chosen independently of the entropy of any single message; in particular, it can be made arbitrarily large even when every eligible message has only constant min-entropy.

As in these previous works, we assume that Alice and Bob, as well as the distinguisher, have access to public randomness that is sampled only after the honest conversation protocol is fixed (for example, the hash of the timestamp at which the conversation begins).
Access to such external entropy, on which the honest cover channel itself does not depend, plays a central role in all existing works on covert-channel steganography.
Moreover, impossibility results for restricted settings due to~\cite{DBLP:conf/innovations/HorelPRV19} provide evidence that an assumption of this kind may in fact be inherent.

Achieving our results requires both new steganographic ideas and a new cryptographic primitive that we introduce and study, which we call \emph{pseudorandom noise-resilient key exchange}.
Roughly speaking, this is a key-exchange protocol whose public transcript is pseudorandom, and that at the same time remains correct even when the communicated bits are corrupted by constant noise.
We study this primitive directly.
On the positive side, we give constructions in the regimes relevant to our application.
On the negative side, we show that several more naive or more restrictive variants are impossible or vulnerable to strong attacks.
These results explain why the keyless covert-conversation problem leads naturally to a new cryptographic theory rather than merely a routine adaptation of existing key-exchange ideas.

We now describe our results in more detail.

\subsection{Our Results}
\label{sec:results}

We establish several results on covert communication between interacting agents.

\paragraph{Covert conversation from an interaction-unique shared key.}
Our starting point is a shared-key setting.
We show that if Alice and Bob share an interaction-unique secret key, then they can conduct an arbitrary covert conversation while keeping the visible transcript computationally indistinguishable from the honest conversation they were meant to have.
Moreover, the covert communication rate is optimal up to constant factors: it can exploit essentially all of the entropy present in the honest message distributions across the interaction. 
In contrast, \cite{christ2023undetectable,zamir2024undetectable} lose a security parameter $\lambda$ bits of entropy per transmitted message. 

\paragraph{New steganographic tools for covert key exchange.}
We then turn to the keyless setting, and study how to covertly establish a shared key (such as the one required for the result above) while progressively weakening the entropy requirements on the honest conversation.
Our contribution here is a sequence of new steganographic constructions and tools.
Prior covert communication and covert key-exchange constructions~\cite{von2004public,DBLP:conf/innovations/HorelPRV19} relied on extracting from each cover object a bit that is negligibly close to uniform, and then conditioning on that bit via rejection sampling; as a result, they inherently required each individual message to have min-entropy at least on the order of the security parameter.

Our main technical contribution here is a  new exact-distribution-preserving sampling method, which we call the \textbf{\em bundle sampler}, that changes this tradeoff: instead of paying for limited entropy through detectability, we pay through a small probability of decoding error.
Concretely, the bundle sampler preserves the honest message distribution {\em exactly}, at the cost that the decoded hidden bit may occasionally be wrong.
This already allows us to reduce the required min-entropy per message to $O(\log \lambda)$, improving substantially over previous approaches and setting the stage for our later constant-entropy construction. 

\paragraph{A new cryptographic primitive: PNR-KE.}
To obtain the constant-entropy result from the above, we introduce and study a new primitive, which we call \emph{pseudorandom noise-resilient key exchange} (PNR-KE).
Roughly speaking, this is a key-exchange protocol whose public transcript is pseudorandom, while still remaining correct under constant noise probability.
This primitive seems to be of independent interest beyond our particular application, since it isolates the cryptographic core of covert key exchange over noisy pseudorandom transcripts.

On the positive side, we give two kinds of constructions.
First, we present a natural construction with hardness based on sparse-secret LPN, which yields plausible quasipolynomial security.
Second, in the weaker model in which the parties also receive noiseless-feedback, which suffices for our main application, we give a much stronger positive result: any pseudorandom-transcript key-exchange protocol can be transformed into a feedback-PNR-KE protocol.

On the negative side, we show that PNR-KE is genuinely constrained.
We rule out natural public analogues of pseudorandom codes, and we show that any \emph{non-interactive} PNR-KE protocol, or more generally any construction that yields even a weakly correlated non-interactive hidden bit, admits a quasipolynomial-time attack.
Thus, the positive results for PNR-KE require stronger resources in an essential way, and the primitive raises several new cryptographic questions even outside the language-model setting.

\def\redcross{\textcolor{red}{\ding{55}}}
\def\bluecheck{\textcolor{blue}{\ding{51}}}
\subsection{Technical Overview}

Our paper begins with a technical observation that builds on, with only modest modifications, recent results on LLM watermarking~\cite{christ2023undetectable} and LLM steganography~\cite{zamir2024undetectable}: a pair of LLM agents can carry out an arbitrary hidden conversation while their visible conversation remains computationally indistinguishable from the honest conversation they were meant to have, provided that the agents \emph{share an instance-unique secret key}.
Moreover, this covert conversation can optimally exploit the entropy present in the honest message distributions.
This naturally leads to the central question of the paper: 
\begin{quote}
\begin{center}
    \emph{Can two agents establish a shared secret key covertly}?
\end{center}
\end{quote}

The rest of the paper has two conceptually distinct layers.
The first develops improved covert key-exchange constructions within the steganographic setting.
These improvements are not specific to the LLM setting, and may be of independent interest within steganography more broadly.
The second layer is a new cryptographic problem that arises when one seeks to remove assumptions on individual messages having large entropy (or length) and work even when each individual honest message carries only constant entropy.
This leads us to formulate and study the primitive of \emph{pseudorandom noise-resilient key exchange} (PNR-KE), which is the main new cryptographic object underlying our keyless constructions.

Section~\ref{sec:model} sets up the formal model.
The definition of ``Language Models'' abstracts an LLM as an autoregressive next-token sampler, while ``LLM Conversations'' lifts this to a two-party interactive transcript in which each message is generated from a prompt chosen using the current transcript and private state.
The definition of ``Covert Conversations'' then gives the main security definition of the paper: a covert protocol should produce transcripts computationally indistinguishable from an honest interaction, even for a distinguisher that knows the agents' full private specifications.

Section~\ref{sec:background} recalls the two prior primitives that we build on, namely LLM watermarks and steganography.
Section~\ref{sec:sharedkey} treats the ``easy'' regime in which the agents already have an interaction-unique shared secret key.
We observe that in our setting the session key itself, as it is assumed to be interaction-unique, allows us to skip an expensive ``rare prefix'' step from prior LLM watermarking and steganography results.
Then we use the previous steganography result modified with the above observation to design a natural covert conversation protocol.
We finally isolate the only missing ingredient for the rest of the paper: all that remains is to establish such a session-unique shared key covertly.

\paragraph{The Bundle Sampler.}
Section~\ref{sec:keyless} is the conceptual core of the first layer of the paper and studies the keyless setting.
Subsection~\ref{subsec:ke} recalls ordinary key exchange and, more importantly for us, \emph{pseudorandom-transcript} key exchange (PR-KE), where the public transcript of the key-exchange protocol is (computationally) indistinguishable from uniformly random bits.
This makes PR-KE a natural object to try to hide inside an honest-looking conversation.

We begin in a high-entropy regime, where many individual honest messages already contain $\omega(\log \lambda)$ bits of min-entropy, and assume that both parties have access to public randomness that is independent of the honest protocol.
In this setting, one can adapt the basic paradigm from previous steganographic key-exchange works (e.g.,~\cite{von2004public,DBLP:conf/innovations/HorelPRV19}): using the public randomness together with the leftover hash lemma, one hashes each sufficiently entropic message to a bit that is negligibly close to uniform, and then executes any PR-KE protocol ``bit-by-bit'' by conditioning the next honest-looking message on the desired hidden bit via rejection sampling.
The reason the large-entropy assumption is crucial in these works is exactly that rejection sampling preserves the honest distribution only when the extracted bit is distributed as $1/2 \pm \negl(\lambda)$; if the extracted bit had noticeable bias, then conditioning on its value would perturb the message distribution by a noticeable amount, and the covert protocol would become detectable.

Our main new ingredient is an alternative sampling method, which we call the \emph{bundle sampler}, that avoids this detectability barrier.
Instead of conditioning on a single extracted bit, the sender samples a bundle of independent honest candidates and chooses among them in a way that preserves the honest message distribution \emph{exactly}, regardless of the bias of the extracted bit.
The price is shifted from undetectability to correctness: the receiver fails to decode the intended hidden bit with probability that corresponds to the bias, rather than with overwhelming probability. Prior works that showed how to shift a security error to a correctness error in this context (e.g.~\cite{gumbel1954statistical,scottblog,christ2023undetectable} and others) required {\em explicit} knowledge of the next-message probability-density function, whereas when the distribution is over an entire message - we realistically only have the minimal type of access, namely sampling access to the next-message distribution.

This already allows us to reduce the required min-entropy per eligible message to $O(\log \lambda)$, which is substantially below the entropy needed in previous approaches:
Using the bundle sampler, we can still compile a PR-KE protocol bit-by-bit as long as its transcript length is at most $\poly(\lambda)$, since a union bound implies that the total probability of any signaling error remains at most $1/\poly(\lambda)$.
At that point, however, one must also choose the underlying PR-KE protocol carefully: even on the low-probability event that some hidden bits are decoded incorrectly and the key exchange fails, the resulting transcript and induced key must still look computationally indistinguishable from uniform.
We discuss a plausible instantiation of this requirement based on Diffie--Hellman type assumptions which has the property of having a {\em truly uniform} (and not just pseudorandom) transcript. 

The final subsection of Section~\ref{sec:keyless} (Subsection~\ref{subsec:roadmap_short}) then explains how the picture changes once each eligible message has only constant min-entropy.
In that regime, the same bundle-sampling idea still applies and still preserves the honest distribution exactly, but now each transmitted hidden bit incurs a small \emph{constant} probability of decoding error.
In particular, we can no longer hope to use a union bound to ignore the effect of these errors on the underlying PR-KE protocol.
Instead, and after a careful design of the bundle sampler that constrains  the structure of errors, each round behaves like one use of a binary symmetric channel: the intended bit is flipped with some constant error probability, and in fact, since the error is introduced by the local sampling procedure rather than by an external channel, this crossover probability is known to the sender in advance and the sender also learns afterwards whether an error occurred or not (usually referred to as ``noiseless feedback'').
This reduces covert key exchange to the problem of carrying out key exchange while simultaneously maintaining a pseudorandom transcript and tolerating noise in each transmitted bit, which motivates the primitive studied in the rest of the paper: \textbf{\em Pseudorandom Noise-Resilient Key Exchange (PNR\text{-}KE)}.

\paragraph{Our Pseudorandom Noise-Resilient Key Exchange Protocol.}
Section~\ref{sec:prnr-ke} then formally defines the cryptographic primitive of Pseudorandom Noise-Resilient Key Exchange (PNR-KE), which is the main new object needed for the low-entropy regime and whose study forms the second layer of this paper.
In subsection~\ref{sec:simpleprot}, we give a concrete PNR-KE construction based on sparse-secret LPN.
Our starting point is the simple two-message LPN-based key-exchange protocol of Alekhnovich~\cite{DBLP:conf/focs/Alekhnovich03}, in which each party sends a noisy linear sketch of its secret and both sides derive a correlated bit from the two messages.

The first challenge is that under constant channel noise, the natural correctness guarantee of this protocol deteriorates substantially. We observe that this can be fixed to obtain a $1/2+1/\poly(\lambda)$ probability of agreeing on a secret key bit, by using the learning {\em sparse} parities with noise (LSPN) assumption: this is a variant of the standard LPN assumption, with the only change that the secret is a random $k$-sparse vector. The LSPN assumption has been extensively studied in the learning theory and cryptography communities~\cite{DBLP:journals/siamcomp/FeldmanGKP09,DBLP:conf/alt/GrigorescuRV11,DBLP:conf/focs/Valiant12,DBLP:conf/stoc/KolRT17,DBLP:journals/tcs/YanYLZZ21,DBLP:conf/tcc/Dachman-SoledGK21,DBLP:conf/colt/ChenSZ25}, and the best known attacks (for a constant noise rate) run in time $n^{\Omega(\log k)}$ where $n$ is the dimension.
 We use the LSPN assumption with $n = \poly(\lambda)$, sparsity $k = \Theta(\log n)$ and noise rate $\eta = \Theta(1)$.
With these parameters, each execution yields a pair of pseudorandom key bits whose agreement is only slightly better than random -- namely $1/2+2^{-\Theta(k)} = 1/2+1/\poly(\lambda)$.
We then repeat this basic step polynomially many times to obtain longer pseudorandom strings $\veck_A,\veck_B$ that remain noticeably correlated in Hamming distance.

A second challenge is that standard information-reconciliation methods are not suitable here: they would require sending syndromes that must remain pseudorandom while also being decodable even when the syndromes themselves are noisy.
Rather than applying standard reconciliation, we use a different, much simpler, amplification step that exploits pseudorandomness directly: Alice chooses a random bit $\kappa_A$, and sends either her pseudorandom string $\veck_A$ or an independent uniform string according to $\kappa_A$; Bob then tests whether the received string is unusually close to his own string $\veck_B$.
Because $\veck_A$ and $\veck_B$ are slightly correlated whereas a truly random string is overwhelmingly unlikely to be so close to $\veck_B$, this lets Bob recover $\kappa_A$ with very high probability, yielding a shared key bit, despite the additional noise that is sustained in the transmission of Alice's vector.

A disadvantage of this protocol is that it relies on sparse-secret LPN, for which the best known attacks run in quasipolynomial time (for our sparsity parameter $k = \Theta(\log n)$), and accordingly, the resulting security is only conjectured to be quasipolynomial.

\paragraph{Negative Results on PNR-KE.}
Section~\ref{sec:prnr-ke-LB} explains why PNR-KE is a genuinely nontrivial cryptographic primitive rather than a straightforward variant of standard key exchange, by proving a sequence of impossibilities based on Fourier-analytic tools commonly used in the analysis of Boolean functions.  
In subsection~\ref{subsec:public-prc-impossible} we rule out natural constructions using variants of pseudorandom codes (PRCs)~\cite{christ2024pseudorandom}. While a direct use of PRCs is inherently irrelevant due to their reliance on a shared secret key, which is exactly what we try to establish, we prove that one cannot hope for a public analogue of PRCs: if encoding and decoding are both public and efficient, then constant-noise resilience together with pseudorandomness is impossible (even with a careful definition of efficient decoding and pseudorandomness that together don't make the primitive completely trivial; see Section~\ref{subsec:public-prc-impossible}).
Intuitively, this stems from the decoder having substantially better noise-stability on a uniform-looking distribution than any Boolean function can.
In particular, we construct an efficient test involving the decoding function and samples of random codewords, that any public PRC must pass due to its correctness probability; we then prove that no decoder can pass this test when the samples are given from the uniform distribution. In particular, either the code's correctness probability was too low, or alternatively the test is an efficient distinguisher between codewords and the uniform distribution. 

Subsection~\ref{subsec:qpoly-attack} then shows that any \emph{non-interactive} PNR-KE protocol, or more generally any non-interactive construction that yields even a weakly correlated hidden bit, can be attacked in quasipolynomial time by exploiting low-degree Boolean structure: in essence, we show that a function that approximates the shared key from the public transcript both exists and is well-approximated by its small (degree roughly $k$) Fourier coefficients.
This stems from the observation that large Fourier characters are very non-stable to noise, and thus noise-resilient functions must depend mostly on their low-degree coefficients. We extend this observation to functions with two parameters in which every parameter individually is stable to noise, but not necessarily both at the same time -- which exactly corresponds to noise-resilient protocols in which each side should succeed even when the other side's message is noisy.
We finally observe that there are relatively few (i.e. $n^{\Theta(k)}$) small-degree Fourier coefficients, and that they can thus be estimated with quasipolynomial time and samples.
These negative results explain why our positive constructions must use stronger resources.
In particular, this quasipolynomial attack implies that standard cryptographic primitives (such as Diffie-Hellman, Learning With Errors, etc.) for which exponential hardness is plausible, cannot yield a non-interactive PNR-KE protocol. This demonstrates the non-triviality in constructing such protocols.

\paragraph{PNR-KE with Noiseless Feedback.}
Section~\ref{sec:feedback_pnr-ke} then gives the main positive construction for the short-message / small-entropy regime used in our main application.
Its starting point is that, for our application, it suffices to handle the weaker noise model already obtained in Section~\ref{sec:keyless}: the sender knows the exact crossover probability of each channel use and afterwards learns whether that bit was flipped (usually referred to as ``noiseless feedback'').
We revisit the natural idea of turning any PR-KE protocol into a PNR-KE protocol by publicly encoding each bit of the PR-KE protocol using a pseudorandom and noise-resilient encoding. 
More precisely, we define a simple signaling game: Alice draws an objective bit~$o$ uniformly, and communicates~$n$ bits to Bob. 

Can we design a protocol such that the distribution of the~$n$ bits is \emph{exactly} uniform yet Bob can recover~$o$ from them with very good probability?
In subsection~\ref{subsec:public-prc-impossible} we ruled out such an encoding in the standard settings; nonetheless, it turns out that a combination of known noise distributions and noiseless feedback drastically change the picture.
Using the noiseless feedback, Alice knows exactly the prefix received by Bob so far and can sample her next message bit according to a distribution conditioned on those.
Using the knowledge of the noise distribution, Alice may invert the effect of the noise operator and sample from a distribution that would lead to the received bit distributing as desired -- as long as the desired bit distribution is bounded within the range~$[p,1-p]$.

In subsection~\ref{subsec:optimal_majority_signaling}, we view the sent bits as a simple random walk and define a terminal condition on it that is highly correlated with the majority of the received bits equaling~$o$. That terminal condition is relaxed sufficiently to enforce that conditioned on it and on any prefix always results in a next-bit probability bounded in the realizable range~$[p,1-p]$. These next-bit conditional probabilities can be efficiently computed via a standard Doob martingale transform. The probability of decoding error, that is, the majority not being~$o$, is shown to be~$O(1/\sqrt{n})$.
Then, in subsection~\ref{subsec:optimal_signaling} we reduce that error probability to be exponentially small via considering the likelihood score of the entire transcript rather than just its final Hamming weight; and revisiting ideas from posterior-matching that were previously used to reduce the error probability in error correcting codes given noiseless feedback~\cite{horstein2003sequential,shayevitz2011optimal}. This results in achieving only a logarithmic blow-up while translating any PR-KE protocol into a PNR-KE one.

The paper concludes in Section~\ref{sec:discuss} with a discussion of the broader implications and open problems.

\subsection{Related Work}

\paragraph{Steganography.} 
The theoretical study of steganography can be traced back to Simmons’s formulation of the ``prisoners’ problem'', which cast covert communication as a game between two parties communicating under the observation of an adversarial warden \cite{simmons1984prisoners}. This line of work was placed on a more formal statistical footing by Cachin, who proposed an information-theoretic model of steganographic security based on the distinguishability between cover and stego distributions \cite{cachin1998information}. The field later developed computational formulations of security, most notably in the work of Hopper, Langford, and von Ahn, who introduced provably secure steganography in a cryptographic framework \cite{hopper2002provably}. Subsequent work extended these ideas to public-key settings, establishing formal definitions and constructions for public-key steganography \cite{von2004public}. More broadly, modern theoretical work has continued to refine the foundations of steganography, including its relation to realistic channel models and neighboring cryptographic notions such as algorithm-substitution attacks \cite{degeling2021meteor,bellare2017algorithm}.

\paragraph{Subverting Backdoored Encryption and the Notion of Anamorphic Encryption.} 
Horel, Park, Richelson and Vaikuntanathan~\cite{DBLP:conf/innovations/HorelPRV19} looked at a setting where the covertext communication consists of a sequence of ciphertexts, encrypted under a semantically secure scheme. The semantic security of the scheme guarantees that each block of communication has independent $\omega(\log \lambda)$ entropy which, together with two-source extractors, can be used for steganographic communication. 

Subsequently, Persiano, Phan and Yung~\cite{persiano2022anamorphic}  generalized this to define the notion of anamorphic encryption. 
This area has since seen a number of followup works~\cite{banfi2024anamorphic,catalano2024limits,catalano2024new,catalano2025generic}. 

\paragraph{Key Exchange Protocols.}
Key exchange is the task of establishing a shared secret over a public channel using only private randomness, despite an eavesdropper that sees the full transcript. Since the foundational work of Diffie and Hellman \cite{diffie1976new}, this notion has become a central primitive in modern cryptography, typically formalized via correctness and key-indistinguishability requirements: the two honest parties should agree on a common key, and that key should look uniform to a passive adversary given the public transcript and parameters. Particularly relevant to us is the stronger notion of \emph{pseudorandom-transcript} key exchange (PR-KE), in which the public transcript itself is computationally indistinguishable from uniformly random bits \cite{cho2010equivalence}. This strengthening is natural in our setting, since such transcripts are exactly the kind of random-looking communication one may hope to hide inside an honest conversation; concrete examples also arise from dense Diffie--Hellman-style encodings such as Elligator \cite{bernstein2013elligator}.

\paragraph{Watermarking and Steganography for LLMs.}
A growing literature studies watermarking for LLM-generated text, primarily with the goal of later detection of machine-generated content. Early and subsequent works proposed statistical watermarking mechanisms and studied stronger desiderata such as robustness, multi-bit encoding, and codable watermarking \cite{kirchenbauer2023watermark,kuditipudi2023robust,zhao2023provable,yoo2023advancing,fernandez2023three,wang2023towards}. 
Our work is instead closest to the recent line on \emph{undetectable} watermarking and steganography for language models. Christ, Gunn, and Zamir show that one can watermark LLM outputs while preserving computational indistinguishability from the honest model distribution \cite{christ2023undetectable}, and Zamir extends this viewpoint to explicit payload transmission via undetectable steganography for language models \cite{zamir2024undetectable}. Related follow-up work has focused on strengthening such undetectable mechanisms against noise and edits~\cite{christ2024pseudorandom,golowich2024edit,alrabiah2025ideal,christ2025improved}. Our shared-key construction builds directly on the LLM-specific tools of \cite{christ2023undetectable,zamir2024undetectable}, while our main focus is the new keyless setting, where unlike all of the above works, the parties do not begin with a shared secret key.

\paragraph{Pseudorandom Codes.}
Pseudorandom codes (PRCs) \cite{christ2024pseudorandom,golowich2024edit,alrabiah2025ideal,christ2025improved} combine pseudorandomness with error resilience, and so may sound superficially similar to our notion of pseudorandom noise-resilient key exchange (PNR-KE). However, PRCs are fundamentally \emph{secret-key} objects: the encoder and decoder share a secret key, and pseudorandomness is only required against observers that do not know this key. In our setting, establishing such a shared secret is exactly the goal, so PRCs do not directly address the keyless problem we study. Still, one can ask for a fully public analogue, with public encoding and decoding; as we discuss in Section~\ref{sec:prnr-ke-LB}, the only reasonable such notion is one where pseudorandomness is required only for a random message, and in Section~\ref{subsec:public-prc-impossible} we rule out any nontrivial construction of this kind under constant noise.  

\paragraph{Error-Correction with Noiseless Feedback.}
Communication over noisy channels with noiseless feedback has a long history in information theory. It has long been known that, for memoryless channels, feedback does not increase ordinary channel capacity, but can nevertheless substantially simplify coding strategies and improve reliability  \cite{cover1988role}. Early work on coding with noiseless feedback already emphasized these advantages \cite{berlekamp1964block}. Particularly relevant to us is Horstein's sequential scheme for the binary symmetric channel, which introduced a posterior-refinement viewpoint for feedback communication \cite{horstein2003sequential}. Shayevitz and Feder later developed posterior matching as a general framework for feedback communication, recovering classical schemes and clarifying the role of feedback in sequential coding \cite{shayevitz2011optimal}. Our feedback-PNR-KE construction is inspired by this posterior-based viewpoint, where we use feedback to simulate random-looking communication while driving the decoding error significantly lower than possible without it.

\section{Preliminaries, Model, and Definitions}
\label{sec:model}

Let~$\lambda$ be the security parameter, we denote by~$\text{negl}(\lambda)$ any function that is in~$O\left(\frac{1}{p(\lambda)}\right)$ for every polynomial~$p(\cdot)$. We use $\text{poly}(\lambda)$ to denote an unspecified polynomial.
As is standard in cryptography, we think of~$\lambda$ as the ``key size", and of running times that are super-polynomial in~$\lambda$ as ``infeasible".
We denote by~$\log$ and~$\ln$ the logarithm with base two and the natural logarithm, respectively. 
For a sequence~$s=(\ldots,s_i,\ldots)$ we denote by~$s[i:j]$ the sub-sequence~$(s_i,\ldots,s_j)$.
We denote by~$x \ || \ y$ the concatenation of the vectors~$x,y$, and by~$\text{len}(v)$ the dimension of the vector~$v$.
For a set~$\mathcal{X}$ we denote by~$\Delta(\mathcal{X})$ the family of all distributions on the set.
All algorithms are (unless stated otherwise) probabilistic polynomial-time (PPT) in~$\lambda$.

\subsection{Probabilistic Lemmas}
We will need the following probabilistic lemmas.

\begin{lemma}[Hoeffding's Bound]\label{lem:hoeffding}
Suppose $X_1, \ldots, X_n$ are independent random variables taking values in $[a,b]$. Let $X= X_1+\ldots+X_n$ denote their sum and let $\mu = \mathbb{E}[X]$ denote the sum's expected value. Then for any $t>0$,
$$ \Pr[X \leq \mu-t] \leq e^{-2t^2/n(b-a)^2} \hspace{.1in} \mbox{and} \hspace{.1in} \Pr[X \geq \mu+t] \leq e^{-2t^2/n(b-a)^2}~.$$
\end{lemma}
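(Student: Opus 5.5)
The plan is the standard Chernoff-style argument: apply Markov's inequality to an exponential moment, control each summand with Hoeffding's lemma, and optimize the free parameter. By symmetry we treat only the upper tail $\Pr[X \geq \mu + t]$. The lower tail then follows by applying the same bound to $-X_1,\ldots,-X_n$. These variables take values in $[-b,-a]$, an interval of the same length $b-a$, so the bound carries over unchanged.

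Fix $s>0$. Since $x\mapsto e^{sx}$ is increasing and positive, Markov's inequality gives
\[
\Pr[X-\mu \geq t] \leq e^{-st}\,\mathbb{E}\bigl[e^{s(X-\mu)}\bigr].
\]
Independence of the $X_i$ lets us factor the expectation:
\[
\mathbb{E}\bigl[e^{s(X-\mu)}\bigr]=\prod_{i=1}^n \mathbb{E}\bigl[e^{s(X_i-\mathbb{E}X_i)}\bigr].
\]
The core step is Hoeffding's lemma: if $Y$ has mean zero and takes values in an interval of length $b-a$, then $\mathbb{E}[e^{sY}]\leq e^{s^2(b-a)^2/8}$. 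Applying it to each factor yields
\[
\Pr[X \geq \mu+t]\leq \exp\bigl(-st + n s^2 (b-a)^2/8\bigr).
\]
Choosing $s = 4t/(n(b-a)^2)$ minimizes the exponent and gives $e^{-2t^2/(n(b-a)^2)}$, as claimed.

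The main obstacle is proving Hoeffding's lemma itself. Let $Y$ take values in $[\alpha,\beta]$ with $\mathbb{E}Y=0$. By convexity of $e^{sy}$,
\[
e^{sy}\leq \frac{\beta-y}{\beta-\alpha}e^{s\alpha}+\frac{y-\alpha}{\beta-\alpha}e^{s\beta}
\]
for every $y\in[\alpha,\beta]$. Taking expectations and using $\mathbb{E}Y=0$, we get $\mathbb{E}[e^{sY}]\leq e^{\varphi(u)}$. Here $u=s(\beta-\alpha)$, $\theta=-\alpha/(\beta-\alpha)\in[0,1]$, and
\[
\varphi(u)=-\theta u+\ln\bigl(1-\theta+\theta e^{u}\bigr).
\]
One checks that $\varphi(0)=\varphi'(0)=0$. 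Moreover, $\varphi''(u)=q(1-q)\leq 1/4$, where $q=\theta e^u/(1-\theta+\theta e^u)$. Taylor's theorem with remainder then gives $\varphi(u)\leq u^2/8$, which is exactly the required bound. The degenerate case $\beta=\alpha$ is trivial.
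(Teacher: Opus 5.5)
Your proof is correct. It is the standard textbook argument: Markov's inequality on the exponential moment, Hoeffding's lemma via the convexity chord bound and the Taylor estimate $\varphi''=q(1-q)\le 1/4$, then the choice $s=4t/(n(b-a)^2)$, with the lower tail obtained from $-X_i\in[-b,-a]$. The paper gives no proof of this lemma and only quotes it as a standard tool, so your argument supplies the proof it relies on. The one step you leave implicit, $\theta\in[0,1]$, follows from $\alpha\le 0\le\beta$, which holds because $\mathbb{E}Y=0$.
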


\begin{lemma}[Piling Up Lemma]\label{lem:pilingup}
Let $X_1,\ldots,X_n$ be independent binary random variables.
$$\mathsf{bias}(X_1 \oplus X_2 \oplus \cdots \oplus X_n)
= 2^{n-1}\prod_{i=1}^n \mathsf{bias}(X_i),
$$
where $\mathsf{bias}(X) = |\Pr[X=0] - 1/2|$.
\end{lemma}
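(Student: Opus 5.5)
We prove the Piling Up Lemma by passing from biases to expectations of $\pm 1$ variables, where independence turns the XOR into a product. For a binary random variable $X$, let $Y = (-1)^X$. Then
\[
\mathbb{E}[Y] = \Pr[X=0] - \Pr[X=1] = 2\Pr[X=0] - 1 .
\]
Write the signed bias as $\epsilon(X) = \Pr[X=0] - 1/2$, so that $\mathbb{E}[Y] = 2\epsilon(X)$ and $\mathsf{bias}(X) = |\epsilon(X)|$. We will prove the signed identity
\[
\epsilon(X_1 \oplus \cdots \oplus X_n) = 2^{n-1}\prod_{i=1}^n \epsilon(X_i),
\]
and take absolute values at the end, using $|\prod_i \epsilon(X_i)| = \prod_i |\epsilon(X_i)|$.

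The steps are as follows.

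\begin{enumerate}
\item Observe that $(-1)^{X_1\oplus\cdots\oplus X_n} = \prod_{i=1}^n (-1)^{X_i}$. This holds pointwise because $(-1)^{a\oplus b} = (-1)^{a}(-1)^{b}$ for bits $a,b$, and then by induction on $n$.
\item The variables $(-1)^{X_i}$ are functions of the independent $X_i$, so they are independent. Hence the expectation of their product factors:
\[
\mathbb{E}\bigl[(-1)^{X_1\oplus\cdots\oplus X_n}\bigr] = \prod_{i=1}^n \mathbb{E}\bigl[(-1)^{X_i}\bigr].
\]
\item Substitute $\mathbb{E}[(-1)^{Z}] = 2\epsilon(Z)$ on both sides. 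This gives
\[
2\,\epsilon(X_1\oplus\cdots\oplus X_n) = 2^n \prod_{i=1}^n \epsilon(X_i),
\]
and dividing by $2$ yields the signed identity.
\end{enumerate}

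Taking absolute values of the signed identity gives exactly the stated formula with $\mathsf{bias}$.

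No step is a real obstacle. The only points that need care are, first, that independence is what licenses factoring the expectation in step 2, and second, that the statement is phrased with absolute values. Working with the signed quantity $\epsilon$ and taking absolute values only at the end avoids any sign bookkeeping.
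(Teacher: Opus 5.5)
Your proof is correct. Mapping each bit to $(-1)^{X_i}$, factoring the expectation by independence, and taking absolute values of the signed identity $\epsilon(X_1\oplus\cdots\oplus X_n)=2^{n-1}\prod_i\epsilon(X_i)$ is the standard argument; the paper states this lemma in its preliminaries as a well-known fact without giving a proof, so there is no separate route to compare against.
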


\begin{lemma}
    \label{lem:simple}
    Let $p \in [0,1/2)$.
    Let $\veck \in \{0,1\}^\ell$ be any string with Hamming weight $w$ and let $\tilde{\veck} \sim \Bern(p)^{\ell}$ be a string of i.i.d. Bernoulli entries with parameter $p$. Then, 
    $$ \Pr[\Delta(\veck \oplus \tilde{\veck}) \geq \ell\cdot p + w\cdot (1-2p) + \epsilon\cdot \ell] \leq e^{-2\epsilon^2 \cdot \ell}~.$$
\end{lemma}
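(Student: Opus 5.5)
The plan is to write the Hamming weight $\Delta(\veck \oplus \tilde{\veck})$ as a sum of $\ell$ independent indicator variables, compute its mean, and then invoke Hoeffding's bound (Lemma~\ref{lem:hoeffding}).

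Let $S = \{i : k_i = 1\}$, so $|S| = w$. For each coordinate $i$, let $X_i = k_i \oplus \tilde{k}_i \in \{0,1\}$. These are independent because the $\tilde{k}_i$ are i.i.d.\ and $\veck$ is fixed. Then $\Delta(\veck\oplus\tilde{\veck}) = \sum_{i=1}^{\ell} X_i$. The distribution of $X_i$ depends on whether $i \in S$:
\begin{itemize}
\item for $i \notin S$, $X_i = \tilde{k}_i$, so $\mathbb{E}[X_i] = p$;
\item for $i \in S$, $X_i = 1 - \tilde{k}_i$, so $\mathbb{E}[X_i] = 1-p$.
\end{itemize}
Summing over all coordinates gives
\[
\mu = (\ell - w)p + w(1-p) = \ell p + w(1-2p),
\]
which is exactly the centering term in the statement.

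Each $X_i$ takes values in $[0,1]$, so we may apply the upper-tail half of Lemma~\ref{lem:hoeffding} with $a=0$, $b=1$, $n=\ell$, and $t = \epsilon \ell$. This yields
\[
\Pr\bigl[\Delta(\veck\oplus\tilde{\veck}) \ge \mu + \epsilon\ell\bigr] \le e^{-2\epsilon^2\ell^2/\ell} = e^{-2\epsilon^2 \ell},
\]
which is the claimed bound.

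No step here is a genuine obstacle. The only point that needs care is the mean computation: the coordinates in $S$ are flipped versions of Bernoulli$(p)$ variables, so they have mean $1-p$ rather than $p$. Once the mean is correct, Hoeffding applies directly because only independence and boundedness are required, not identical distribution. The hypothesis $p<1/2$ is not used in the argument. It only serves to explain why the bound is meaningful: the center $\ell p + w(1-2p)$ increases with $w$, which is what allows a close string to be distinguished later on.
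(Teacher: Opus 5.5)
Your proposal is correct and follows essentially the same route as the paper: you write the Hamming weight as a sum of independent indicators with means $p$ (for $k_i=0$) and $1-p$ (for $k_i=1$), obtain the center $\ell p + w(1-2p)$, and apply the upper tail of Hoeffding (Lemma~\ref{lem:hoeffding}) with $t=\epsilon\ell$. The only implicit point, shared with the paper, is that $\epsilon>0$, which Lemma~\ref{lem:hoeffding} needs since it is stated for $t>0$.
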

\begin{proof}
    If $k_i = 0$, the bit $k_i \oplus \tilde{k}_i$ is distributed like $\Bern(p)$, and if $k_i = 1$, it is distributed like $\Bern(1-p)$. Therefore, the expected Hamming weight of $\veck \oplus \tilde{\veck}$ is 
    $$ (\ell-w)\cdot p + w \cdot (1-p) = \ell\cdot p + w\cdot (1-2p).$$
    By Hoeffding's bound (Lemma~\ref{lem:hoeffding}), $\Pr[\Delta(\veck\oplus \tilde{\veck}) \geq \ell\cdot p + w\cdot (1-2p) + \epsilon\cdot \ell] \leq e^{-2\epsilon^2 \cdot \ell}$.
\end{proof}

\subsection{Basic Cryptography}
We define the notion of computational indistinguishability against (non-uniform) polynomial-size circuits.
\begin{definition}
    [Computational Indistinguishability]
    \label{def:compind}
    Two (ensembles of) probability distributions $\mathcal{D}_0 = \{D_{0,n}\}_{n\in \mathbb{N}}$ and $\mathcal{D}_1 = \{D_{1,n}\}_{n\in \mathbb{N}}$ over $\{0,1\}^{\mathsf{poly(n)}}$ are computationally indistinguishable if for every family of  polynomial-size circuits $\mathcal{A} = \{\mathcal{A}_n\}_{n\in \mathbb{N}}$,  all polynomial functions $p(\cdot)$,  and all large enough $n$, 
    $$ \bigg| \Pr_{x \sim \mathcal{D}_{0,n}}[\mathcal{A}_n(x) = 1] - 
    \Pr_{x \sim \mathcal{D}_{1,n}}[\mathcal{A}_n(x) = 1] \bigg| \leq 1/p(n)$$
    We will let the notation $\mathcal{D}_0 \approx_c \mathcal{D}_1$ denote $\mathcal{D}_0$ and $\mathcal{D}_1$ being  computationally  indistinguishable. 
\end{definition}

\subsection{Extractors}

We first define the notion of a strong seeded extractor~\cite{impagliazzo1989pseudo} that is used throughout this paper.

\begin{definition}[Strong seeded extractor]
\label{def:strong-extractor}
A function
\[
\Ext : \{0,1\}^d \times \{0,1\}^n \to \{0,1\}^m
\]
is a \emph{strong seeded $(k,\varepsilon)$-extractor} if for every random variable
$X$ over $\{0,1\}^n$ with min-entropy
$H_\infty(X) \ge k$,
we have
\[
\Delta\bigl((\Ext(U_d,X),U_d),(U_m,U_d)\bigr) \le \varepsilon,
\]
where $U_d$ is the uniform distribution over $\{0,1\}^d$, $U_m$ is the uniform
distribution over $\{0,1\}^m$, and $\Delta(\cdot,\cdot)$ denotes statistical
distance.
\end{definition}

\begin{lemma}[Average Bias of Extractor Output is Small]\label{lem:avgbias}
Let $\mathrm{Ext}:\{0,1\}^d\times\{0,1\}^n\to\{0,1\}$ be a strong $(k,\varepsilon)$-extractor that outputs a single bit, and let $X$ be any source on $\{0,1\}^n$ with min-entropy at least $k$. Then
\[
\mathbb{E}_{s\sim U_d}\left[\left|\Pr[\mathrm{Ext}(s,X)=1]-\frac12\right|\right]\le \varepsilon.
\]
\end{lemma}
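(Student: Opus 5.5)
The plan is to rewrite the statistical-distance guarantee of Definition~\ref{def:strong-extractor} as an average over seeds of the single-bit bias.

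Fix the source $X$ with $H_\infty(X)\ge k$. For each seed $s\in\{0,1\}^d$, set $p_s = \Pr[\mathrm{Ext}(s,X)=1]$. The joint distribution $(\mathrm{Ext}(U_d,X),U_d)$ puts mass $2^{-d}p_s$ on the pair $(1,s)$ and mass $2^{-d}(1-p_s)$ on $(0,s)$. The distribution $(U_1,U_d)$ puts mass $2^{-d}\cdot\tfrac12$ on each pair.

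Statistical distance is half the $\ell_1$ distance. Summing over the two output values for each seed gives
\[
\Delta\bigl((\mathrm{Ext}(U_d,X),U_d),(U_1,U_d)\bigr)=\frac12\sum_{s}2^{-d}\Bigl(\bigl|p_s-\tfrac12\bigr|+\bigl|(1-p_s)-\tfrac12\bigr|\Bigr)=\mathbb{E}_{s\sim U_d}\bigl|p_s-\tfrac12\bigr|.
\]
The two absolute values are equal, which cancels the factor $\tfrac12$. Definition~\ref{def:strong-extractor} with $m=1$ bounds the left-hand side by $\varepsilon$, and this is exactly the claimed inequality.

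There is essentially no obstacle. The only points needing care are that the distance is computed on the joint distribution with the seed, which is exactly what strongness provides, and that the normalization factor $\tfrac12$ cancels against the two equal terms.
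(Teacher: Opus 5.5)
Your proof is correct and is essentially the paper's argument: the paper also writes the strong-extractor distance as the seed-average of $\Delta(\mathrm{Ext}(s,X),U_1)$, which for a single bit equals $|\Pr[\mathrm{Ext}(s,X)=1]-\tfrac12|$. The only difference is that you derive this decomposition explicitly from the $\ell_1$ sum (using independence of the seed from $X$), whereas the paper invokes it as a standard fact.
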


\begin{proof}
Because $\mathrm{Ext}$ is a strong $(k,\varepsilon)$-extractor,
\[
\Delta\bigl((\mathrm{Ext}(U_d,X),U_d),(U_1,U_d)\bigr)\le \varepsilon.
\]

\noindent
Now use the fact that 
\[
\Delta\bigl((Y,S),(U_1,S)\bigr)
=
\mathbb{E}_{s\leftarrow S}\bigl[\Delta(Y\mid S=s,U_1)\bigr],
\]
In our setting, $S=U_d$, and for each fixed seed $s$, the random variable $\mathrm{Ext}(s,X)$ is a single bit. So
\[
\epsilon \geq \Delta\bigl((\mathrm{Ext}(U_d,X),U_d),(U_1,U_d)\bigr)
=
\mathbb{E}_{s\leftarrow U_d}\left[\Delta(\mathrm{Ext}(s,X),U_1)\right] = 
\mathbb{E}_{s\leftarrow U_d}\left[\left|\Pr[\mathrm{Ext}(s,X)=1]-\frac12\right|\right]~.
\]
This proves the claim.
\end{proof}

The leftover hash lemma gives us perhaps the most classical strong seeded extractor. It extracts nearly all the entropy from the source; however its seed is not as short as possible.
\begin{lemma}[Leftover Hash Lemma~\cite{impagliazzo1989pseudo, vadhan2012pseudorandomness}]
\label{lem:lhl}
Let $\calH=\{h_s:\{0,1\}^n\to\{0,1\}^\ell\}$ be a pairwise-independent family.
Then, for every $k \leq n$, the function 
$$ \Ext(s,x) = h_s(x)$$
is a strong seeded $(k,\epsilon)$ extractor with 
$\epsilon \leq \tfrac12\cdot 2^{(\ell-k)/2}$.
In particular, for $\ell=1$, $\epsilon \leq \tfrac12\cdot 2^{(1-k)/2}$.
\end{lemma}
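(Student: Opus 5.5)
The plan is the standard collision-probability argument, specialized to the one-bit case. For a pairwise-independent family, the collision probability of $h_S(X)$ is computed explicitly. Then a Cauchy--Schwarz step converts this to a bound on statistical distance from uniform. The bound is averaged over the seed, which gives the strong-extractor guarantee directly.

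First, fix a source $X$ with $H_\infty(X)\ge k$, and let $S$ be the uniformly random seed. I will bound the collision probability of the joint variable $Z=(h_S(X),S)$. Take two independent copies $(S,X)$ and $(S',X')$. A collision requires $S=S'$, which happens with probability $1/D$, where $D=2^d$ is the number of seeds. Conditioned on $S=S'$, a collision also requires either $X=X'$, which has probability at most $2^{-k}$, or $X\neq X'$ with $h_S(X)=h_S(X')$. By pairwise independence, the latter has probability exactly $2^{-\ell}$ for each fixed pair $x\neq x'$. Hence
\[
\mathrm{CP}(Z)\le \frac{1}{D}\left(2^{-k}+2^{-\ell}\right).
\]

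Second, I convert this to statistical distance. Let $Z$ live on a set of size $M=D\cdot 2^\ell$, and let $U$ be uniform on that set. Cauchy--Schwarz gives
\[
\Delta(Z,U)=\tfrac12\|Z-U\|_1\le \tfrac12\sqrt{M}\,\|Z-U\|_2 .
\]
Moreover, $\|Z-U\|_2^2=\mathrm{CP}(Z)-1/M\le 2^{-k}/D$. Plugging in,
\[
\Delta(Z,U)\le \tfrac12\sqrt{D2^\ell\cdot 2^{-k}/D}=\tfrac12\, 2^{(\ell-k)/2}.
\]
This is exactly the claimed $\epsilon$. Since $(U_\ell,U_d)$ is precisely the uniform distribution on the range of $Z$, this is the strong-extractor statement of Definition~\ref{def:strong-extractor}. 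The case $\ell=1$ is then an immediate substitution.

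The only delicate point is the middle step: the pairwise-independence property must be used in the form $\Pr_s[h_s(x)=h_s(x')]=2^{-\ell}$ for $x\neq x'$, and that case must be split cleanly from the $X=X'$ case. This is a routine step rather than a real obstacle. The factor $1/D$ from seed agreement cancels against $M$ in the Cauchy--Schwarz step, which is why including the seed in the output costs nothing.
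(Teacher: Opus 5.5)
Your proof is correct. The paper does not prove this lemma; it cites it from Impagliazzo--Levin--Luby and Vadhan, and your argument is the standard proof from those references: bound the collision probability of $(h_S(X),S)$ by $\frac{1}{D}(2^{-k}+2^{-\ell})$ using pairwise independence, then convert to statistical distance via the $\ell_2$ norm and Cauchy--Schwarz, which gives exactly the stated $\tfrac12\, 2^{(\ell-k)/2}$.
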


The next lemma states a construction of a strong seeded extractor that is nearly optimal both in terms of the number of extracted bits as well as the length of the seed.

\begin{lemma}[Short-seed strong extractors exist~\cite{lu2003extractors,vadhan2012pseudorandomness}]
\label{lem:short-seed-extractor}
For every $n,k,\ell$ and $\varepsilon>0$ with $k\ge \ell+\Omega(\log 1/\varepsilon)$, there exists an explicit
$(k,\varepsilon)$-strong extractor $\Ext:\{0,1\}^d\times\{0,1\}^n\to\{0,1\}^\ell$ with seed length
\[
d\ =\ O(\log n+\log(1/\varepsilon)).
\]
In particular, taking $\ell=1$, any $\varepsilon=1/\text{poly}(\lambda)$, and $n=\poly(\lambda)$ we get
$d=O(\log\lambda)$.
\end{lemma}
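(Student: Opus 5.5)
This is a known result, so the plan is to invoke the standard existence argument for short-seed extractors rather than build a new construction.

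First, fix the target error $\varepsilon$. Set the output length to $\ell$ with $k \ge \ell + c\log(1/\varepsilon)$ for a suitable constant $c$.

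Second, choose the construction. The cleanest route to seed length $O(\log n + \log(1/\varepsilon))$ is a strong extractor built from a good list-decodable (or approximately list-decodable) code. For the single-bit case $\ell=1$, which is all the paper actually uses, this works as follows:
\begin{itemize}
\item Encode $x$ with a binary code $C:\{0,1\}^n\to\{0,1\}^{\bar n}$ of block length $\bar n=\poly(n,1/\varepsilon)$.
\item The code must be $(1/2-\varepsilon,\poly(1/\varepsilon))$ list-decodable, for example Reed--Solomon concatenated with Hadamard.
\item Output $\Ext(s,x)=C(x)_s$, so the seed has length $d=\log \bar n = O(\log n+\log(1/\varepsilon))$.
\end{itemize}

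Third, prove the extractor property by contradiction. Suppose $\Ext$ fails for some source $X$ of min-entropy $k$. By Lemma~\ref{lem:avgbias}'s reformulation, the failure means there is a test distinguishing $(s,C(X)_s)$ from $(s,U_1)$ with advantage more than $\varepsilon$. By the standard Yao-style predictor argument, this test yields a predictor $P$ with $\Pr[P(s)=C(X)_s]>1/2+\varepsilon$.
\begin{itemize}
\item An averaging argument shows that, with probability at least $\varepsilon/2$ over $x\sim X$, the string $P$ agrees with $C(x)$ on a $1/2+\varepsilon/2$ fraction of coordinates.
\item Every such $x$ lies in the list-decoding ball around $P$, which has size $L=\poly(1/\varepsilon)$.
\item Hence $X$ places mass at least $\varepsilon/2$ on at most $L$ points, so some point has probability at least $\varepsilon/(2L)$.
\item This forces $k\le \log(2L/\varepsilon)=O(\log(1/\varepsilon))$, contradicting the hypothesis.
\end{itemize}

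Fourth, handle multi-bit outputs $\ell>1$. I would cite the Raz--Reingold--Vadhan / Lu et al.\ constructions~\cite{lu2003extractors} rather than reprove them. The main obstacle is keeping the seed at $O(\log n+\log 1/\varepsilon)$ with $\ell$ close to $k$ while still losing only $O(\log 1/\varepsilon)$ entropy. That requires the full machinery of condensers and block-source composition, so I would defer to~\cite{vadhan2012pseudorandomness} for it.

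Finally, the corollary follows directly. With $\ell=1$, $\varepsilon=1/\poly(\lambda)$ and $n=\poly(\lambda)$, we get $d=O(\log\lambda)$. The single-bit code-based argument above already suffices for this case.
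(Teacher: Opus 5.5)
Your proposal is correct for the case the paper actually uses, but it takes a different route. The paper gives no argument and simply cites \cite{lu2003extractors,vadhan2012pseudorandomness} for the whole statement. You instead prove the $\ell=1$ case directly from the code-based extractor $\Ext(s,x)=C(x)_s$, and your chain of steps is the standard argument:
\begin{itemize}
\item The majority predictor $P(s)$ has advantage $>\varepsilon$. No Yao step is needed: by the identity in the proof of Lemma~\ref{lem:avgbias}, the distance equals $\mathbb{E}_s\bigl|\Pr[C(X)_s=1]-\tfrac12\bigr|$, which is exactly $P$'s advantage.
\item Averaging gives mass $>\varepsilon/2$ on those $x$ whose codeword agrees with $P$ on a $\tfrac12+\tfrac{\varepsilon}{2}$ fraction.
\item List-decodability caps that mass by $L\cdot 2^{-k}$.
\end{itemize}
This yields the threshold $k>\log(2L/\varepsilon)=O(\log(1/\varepsilon))$ with $d=\log\bar n=O(\log n+\log(1/\varepsilon))$. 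Your route buys a self-contained, visibly explicit proof of exactly what the bundle sampler and $\EstBSC$ need, since $\EstBSC$ enumerates all $2^d$ seeds. It works for every $\varepsilon$, including the constant $\varepsilon=2^{-\Theta(c)}$ used in Theorem~\ref{thm:dep-rand-ke}. The citation only buys the multi-bit statement without effort.

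One small fix is needed. Your averaging step guarantees only agreement $\tfrac12+\tfrac{\varepsilon}{2}$, so the code must be $(\tfrac12-\tfrac{\varepsilon}{2},L)$ list-decodable rather than $(\tfrac12-\varepsilon,L)$. Instantiate Reed--Solomon concatenated with Hadamard with $\varepsilon/2$ in place of $\varepsilon$; this changes nothing asymptotically.

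Do not rely on Step~4 as written, though. The references you defer to do not establish the general-$\ell$ claim in the stated form:
\begin{itemize}
\item The optimal-seed constructions there (\cite{lu2003extractors}, and the Guruswami--Umans--Vadhan extractor presented in \cite{vadhan2012pseudorandomness}) output $(1-\alpha)k$ bits for a constant $\alpha>0$. They lose a constant fraction of the entropy, not $O(\log(1/\varepsilon))$ bits.
\item Constructions with near-optimal entropy loss, such as Raz--Reingold--Vadhan's, use polylogarithmic seeds.
\end{itemize}
So for $\ell$ close to $k$ the statement is stronger than what these citations prove, and your remark that condensers and block-source composition take care of it is not accurate. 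This overstatement comes from the paper's own formulation. It is harmless, because only $\ell=1$ is ever used, and for $\ell=1$ your code-based argument is complete.
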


\subsection{Language Models}
\label{subsec:language-models}

We model an LLM as an \emph{autoregressive next-token predictor} together with the induced distribution on full responses.
We will often refer to language models simply as \emph{models}.

\begin{definition}[Language model]
\label{def:model}
Fix a token set $\calT$.
A language model $\Model$ over $\calT$ is a deterministic algorithm that takes as input a prompt $\prompt\in\calT^\star$ and a (possibly empty) sequence of tokens previously output by the model
$x=(x_1,\ldots,x_{i-1})\in\calT^{i-1}$,
and returns a distribution
\[
p_i \ :=\ \Model(\prompt,x)\ \in\ \Delta(\calT).
\]
\end{definition}

To sample a full response, we iteratively draw tokens from these next-token distributions until a designated terminating token is produced.

\begin{definition}[Model response]
\label{def:modelresp}
Fix a special terminating token $\done\in\calT$.
For a model $\Model$ and prompt $\prompt$, the \emph{response} to $\prompt$ is the random variable $\RModel(\prompt)\in\calT^\star$ defined by the following procedure.
Initialize $x\gets()$.
While $x$ is empty or the last token of $x$ is not $\done$, sample $x_i \leftarrow \Model(\prompt,x)$ and append $x_i$ to~$x$.
Output $x$ and denote it by $\RModel(\prompt)$.
\end{definition}

We assume throughout that $\RModel(\prompt)$ has length at most $\poly(\lambda)$ with probability~$1$.

\subsection{LLM Conversations}
\label{subsec:conversation}

We next lift the single-prompt sampling view to an interactive setting.
A conversation is a sequence of \emph{messages}, where each message is (by definition) a fresh model response to a prompt selected by the current speaker as a function of the transcript so far and private information.
Importantly, the conversation channel is noiseless: all parties (and any external observer) see exactly the same transcript.
We next model a conversation between two LLM agents, named Alice and Bob.

\paragraph{Transcripts and schedule.}
A transcript of $T$ rounds is a sequence
\[
\tau=(x^{(1)},x^{(2)},\ldots,x^{(T)})\in(\calT^\star)^T,
\]
where $x^{(t)}$ is the message posted in round~$t$.
Let $\tau_{<t}=(x^{(1)},\ldots,x^{(t-1)})$ denote the prefix before round~$t$.
For notational simplicity we assume an alternating schedule: Alice speaks on odd rounds and Bob on even rounds (all definitions extend to any fixed schedule).

\paragraph{Agent-specific models and private context.}
Alice is associated with a model $\Model_A$ over $\calT$ and Bob with a (possibly different) model $\Model_B$ (Definition~\ref{def:model}),
with corresponding response samplers $\RModel_A,\RModel_B$ (Definition~\ref{def:modelresp}).
Alice and Bob are initialized with arbitrary private context strings $c_A,c_B\in\calT^\star$.
Each also maintains an internal state $\stA,\stB$ that evolves during the interaction.
Private contexts and states are not observable.

\paragraph{Prompt selection and speaking constraint.}
A \emph{conversation protocol} is a pair of PPT interactive algorithms $(A,B)$ that interact for $T\leq \poly(\lambda)$ rounds and satisfy the following requirement:
in each round, the speaker may compute an arbitrary prompt as a function of the transcript so far and its private information, and then the public message is \emph{exactly} the sampled model response to that prompt.

Formally, for each odd round $t$ (Alice speaks),
\[
({\prompt}_t,\stA^{(t)}) \leftarrow A\left(c_A,\stA^{(t-1)},\tau_{<t}\right),
\qquad
x^{(t)} \leftarrow \RModel_A(\prompt_t),
\]
and for each even round $t$ (Bob speaks),
\[
({\prompt}_t,\stB^{(t)}) \leftarrow B\left(c_B,\stB^{(t-1)},\tau_{<t}\right),
\qquad
x^{(t)} \leftarrow \RModel_B(\prompt_t).
\]
The speaker is not allowed to post-process, edit, filter, or otherwise modify the sampled response; conditioned on $\prompt_t$, the distribution of $x^{(t)}$ is exactly $\RModel_A(\prompt_t)$ or $\RModel_B(\prompt_t)$, respectively.

\paragraph{Execution and views.}
Given $(A,B)$, models $(\Model_A,\Model_B)$, contexts $(c_A,c_B)$, and security parameter~$\lambda$, we write
\[
\tau \leftarrow \tau^{\Model_A,\Model_B}_{A,B}(c_A,c_B)
\]
for the resulting random transcript.
An external observer's view is precisely the public transcript~$\tau$.

\subsection{Covert Conversations}
\label{subsec:covert}

We now give the central cryptographic definition of this paper: an \emph{interactive} analogue of steganography in which two agents exchange a hidden payload over an otherwise ordinary-looking LLM conversation.
As in the definitions of undetectable watermarking and steganography, the key security requirement is \emph{undetectability} against an adaptive PPT distinguisher; here we additionally allow a particularly strong distinguisher that is given the \emph{entire private data} that defines the innocent (honest) conversation protocol, including both private contexts and models.

\paragraph{Fixing an innocent protocol.}
Throughout this section, fix an \emph{innocent} conversation protocol $$\Pi=(A,B,\Model_A,\Model_B,c_A,c_B)$$ in the sense of Section~\ref{subsec:conversation}.
We view $\Pi$ as the benign behavior that should generate the transcript (e.g., two assistants coordinating on tasks), and require covert transcripts to be indistinguishable from transcripts generated by $\Pi$.
We denote the distribution of transcripts generated by this protocol by~$\tau^\Pi$.

\paragraph{Fixing a covert protocol.}
A \emph{covert conversation protocol} is specified in the same format as the innocent one, namely as a tuple
\[
\Pi'=(A',B',\Model_A',\Model_B',c_A',c_B')
\]
in the sense of Section~\ref{subsec:conversation}. 
These define the secret conversation the agents intend to carry covertly.

\paragraph{A covert parallel conversation protocol.}
We now define the analogue of a steganography scheme: an \emph{overlay} that runs an innocent conversation $\Pi$ \emph{openly} while simultaneously carrying out a covert conversation $\Pi'$ \emph{through} the same observable messages.
Intuitively, at each round the speaker computes (i) the \emph{honest} prompt that $\Pi$ would use, and (ii) some additional private computation that advances the covert conversation, but the only public action is to sample and post a single model response, as required by Section~\ref{subsec:conversation}.
The scheme also provides local recovery algorithms that extract the covert payload from the observed transcript using secret decoding information generated at initialization.

\begin{definition}[Parallel covert conversation scheme]
\label{def:parallel_scheme}
Fix an innocent protocol $\Pi$ and a covert protocol
$\Pi'$.
A \emph{parallel covert conversation scheme} for $(\Pi,\Pi')$ is a tuple of PPT algorithms
\[
\calC = (\Setup_A,\Setup_B,\widehat{A},\widehat{B},\Dec_A,\Dec_B)
\]
with the following interfaces.
\begin{itemize}
    \item $\Setup_A(1^\lambda)\to k_A$ and $\Setup_B(1^\lambda)\to k_B$ output Alice's and Bob's private secret keys, respectively.
    \item $\widehat{A}$ and $\widehat{B}$ are PPT \emph{message-generation} algorithms.
    In each round, the speaker receives its secret key, its private context, its current private state, and the public transcript so far, and outputs the \emph{next public message} $x^{(t)}\in\calT^\star$ together with an updated private state.
    \item $\Dec_A(k_A,\tau)\to \{0,1\}^\star$ and $\Dec_B(k_B,\tau)\to \{0,1\}^\star$ are decoding algorithms that, given the full public transcript $\tau$, output the recovered covert payload (or, equivalently, the recovered transcript of the covert conversation, depending on the instantiation).
\end{itemize}

We denote the distribution of transcripts generated by this protocol by~$\tau^\calC$.
\end{definition}

\paragraph{Remark (shared vs.\ separate secret information).}
Definition~\ref{def:parallel_scheme} gives Alice and Bob separate secret information $(k_A,k_B)$ generated independently and without coordination.
A significantly easier setting is to define a single shared secret key $k\leftarrow \Setup(1^\lambda)$ given to both parties; we later elaborate on the difference between these settings.

\paragraph{Undetectability.}
The central security requirement for a parallel covert conversation scheme is \emph{undetectability}. Intuitively, even an observer who knows the full specification of the honest protocol~$\Pi$ should not be able to distinguish transcripts generated by an honest execution of~$\Pi$ from transcripts generated by running the covert overlay~$\calC$, despite $\calC$ simultaneously carrying a hidden conversation. In particular, undetectability guarantees that, from the observer’s point of view, the public interaction remains consistent with the intended honest behavior.

\begin{definition}[Undetectability]
\label{def:covert_undet}
Fix an innocent protocol $\Pi$ and a covert protocol $\Pi'$.
A parallel covert conversation scheme $\calC$ is \emph{undetectable} if for every PPT distinguisher $D$,
\[
\left|
\Pr\big[D^{\tau^\Pi}(1^\lambda,\Pi)\to 1\big]
-
\Pr\big[D^{\tau^\calC}(1^\lambda,\Pi)\to 1\big]
\right|
\le \negl(\lambda).
\]
The distinguisher $D$ is given the full description of the honest conversation protocol $\Pi$ (including all private contexts and models), and has to determine whether it is seeing transcripts generated by running the honest protocol~$\Pi$ or by running the covert protocol~$\calC$.
The distinguisher may make $\poly(\lambda)$ samples to the oracle.
\end{definition}

\paragraph{Correctness and rate.}
Undetectability alone does not guarantee that the covert conversation can be recovered.
A parallel covert conversation scheme $\calC$ should also be \emph{correct}: from the public transcript~$\tau$ and their local secret keys, Alice and Bob should be able to recover the intended covert content (equivalently, reconstruct the covert transcript prescribed by~$\Pi'$).
As in single-message steganography, such recovery inherently requires assumptions on the amount of randomness available in the public conversation.
In particular, if the honest conversation induced by~$\Pi$ is (nearly) deterministic, then any nontrivial hidden communication would necessarily introduce detectable deviations.
More generally, one cannot reliably transmit more covert information than the ``usable'' entropy present in the transcript while preserving indistinguishability.

We therefore measure the available randomness via the empirical entropy of each round (see definition in Section~\ref{subsec:entropy}).
At a high level, our schemes embed covert information by consuming entropy in the public messages while keeping the overall transcript distribution indistinguishable from that of the honest protocol.
This yields an information-theoretic ceiling: the number of covert bits that can be embedded in a transcript is at most on the order of its total empirical entropy.
Our main results achieve this bound up to constant factors, subject to natural technical conditions analogous to those required in prior work on LLM steganography.

\paragraph{Informal guarantee.}
A semi-formal version of our main theorem is as follows; the formal statements (including the precise admissibility/entropy conditions and the exact correctness definitions) appear in the sections where we present the construction and analysis.

\begin{theorem*}[Informal version of the main theorem]
Fix an innocent protocol $\Pi$ and a covert protocol $\Pi'$.
There exists an undetectable parallel covert conversation scheme $\calC$ for $(\Pi,\Pi')$ such that, conditioned on the public transcript $\tau\leftarrow\tau^\calC$ having \emph{sufficient} empirical entropy, the decoders $\Dec_A,\Dec_B$ can recover a prefix of the cover transcript~$\Pi'$ of length $\Theta(\text{len}(\tau))$ (equivalently, linear in the number of public tokens exchanged), which is optimal up to constant factors under undetectability.
\end{theorem*}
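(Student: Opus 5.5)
The proof composes two layers: a covert key exchange that needs no prior shared secret, followed by the shared-key covert conversation driven by the resulting key. We first split the public transcript into a short \emph{key-establishment phase} and a long \emph{payload phase}. Both parties derive public randomness (a seed for the strong extractor of Lemma~\ref{lem:short-seed-extractor}, or a pairwise-independent hash as in Lemma~\ref{lem:lhl}) from the external public source. In the key-establishment phase they run a PR-KE protocol bit-by-bit. Each eligible message is produced by the bundle sampler: the speaker draws a bundle of independent honest responses $\RModel(\prompt_t)$ to the honest prompt, and selects one so that the marginal of the posted message is exactly the honest distribution. The extracted bit of the chosen message agrees with the intended hidden bit except with probability governed by the extractor bias of Lemma~\ref{lem:avgbias}.

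Undetectability of this phase holds because every posted message is distributed exactly as $\RModel(\prompt_t)$, conditioned on the honest prompt and the transcript so far. A hybrid over rounds therefore shows the transcript marginal is identical to $\tau^\Pi$, up to the pseudorandomness of the hidden bits being embedded. For this we need the PR-KE transcript bits to be computationally indistinguishable from uniform, \emph{even when} some bits are decoded wrongly. We handle correctness by entropy regime. With $O(\log\lambda)$ min-entropy per eligible message, a union bound over the $\poly(\lambda)$ bits suffices. With constant min-entropy, each bit instead passes through a binary symmetric channel with known crossover probability and noiseless feedback to the sender. Here we replace the PR-KE by the feedback-PNR-KE obtained from the posterior-matching signaling compiler, so the two parties agree on a key $k$ except with negligible probability. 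Indistinguishability of $k$ from uniform given the transcript is then inherited from the PR-KE security.

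In the payload phase, $k$ (together with the transcript prefix) serves as an interaction-unique shared key, and we invoke the shared-key construction built from the watermarking and steganography schemes. Because the key is interaction-unique, the rare-prefix step can be skipped, so the scheme embeds $\Omega(1)$ covert bits per unit of empirical entropy rather than losing $\lambda$ bits per message. Undetectability of this phase follows by replacing $k$ with a uniform key, a step justified by the key indistinguishability above, and then applying the shared-key undetectability. Each decoder reruns the covert protocol $\Pi'$ on the recovered bits to reconstruct a prefix of the covert conversation. Conditioned on sufficient empirical entropy, and since the key-establishment phase consumes only a $\poly(\lambda)$ prefix independent of the length of $\tau$, the recovered prefix has length $\Theta(\mathrm{len}(\tau))$. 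Optimality is the information-theoretic ceiling: no undetectable scheme can embed more than the empirical entropy, up to a negligible term.

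The main obstacle is making the low-entropy key exchange rigorous, specifically showing that the bundle sampler induces a clean memoryless binary symmetric channel with noiseless feedback. This means the errors must be independent of the hidden bit and known to the sender, even though the honest next-message distributions adapt to both parties' private states. The second difficulty is ensuring that PR-KE pseudorandomness survives conditioning on decoding errors. For the first, we would try to have the sender compute the error event exactly from the sampled bundle, making the crossover probability a deterministic function of the bundle's extracted bits. For the second, we would use a PR-KE with truly uniform transcript, so that conditioning on errors changes nothing about the distribution.
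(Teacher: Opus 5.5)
Your architecture matches the paper's: covert key exchange through the bundle sampler as in Theorem~\ref{thm:dep-rand-ke}, then the prefix-free overlay of Section~\ref{sec:sharedkey}. However, the undetectability argument for your constant-entropy branch has a genuine hole.

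You claim every posted message is exactly honest ``up to the pseudorandomness of the hidden bits being embedded''. The exact matching of Theorem~\ref{thm:bundle} requires the bit given to $\Embed$ to be uniform and independent of the bundle and of $\pp$. In the posterior-matching (or majority) compiler, that bit is the \emph{transmitted} bit, and it is biased on purpose: $\Pr[X_{t+1}=+1\mid o,Y_{1:t}]-\tfrac12=(q_t^{o}-\tfrac12)/(1-2p_{t+1})$. It also depends on the bundle through $p_{t+1}$. Only the decoded bit is fair.

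Independence of the errors from the hidden bit, which your last paragraph aims for, does not rescue this. It holds only after averaging over the public seed and mask, and the observer sees both. For a bundle $B$ and seed $s$, the mask-averaged crossover is $e_s(B)=|I_\star(s,B)|/2L$, while the sender uses $p_{t+1}=\mathbb{E}_s[e_s(B)]$. Hence
\[
\Pr[Y_{t+1}=+1\mid o,Y_{1:t},s_{t+1},B]=\tfrac12+\bigl(q_t^{o}-\tfrac12\bigr)\frac{1-2e_{s_{t+1}}(B)}{1-2p_{t+1}},
\]
and for a fixed mask the channel is not even symmetric.

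Consequently, the observer's posterior $w^\ast_{t+1}$ on $o$, computed from seeds, masks and decoded bits, differs from the receiver's $w_{t+1}(Y_{1:t+1})$. The next decoded bit then satisfies $\Pr[Y_{t+2}=+1\mid\text{public view}]=\tfrac12+(w^\ast_{t+1}-w_{t+1})(q^+_{t+1}-q^-_{t+1})$, where $|q^+_{t+1}-q^-_{t+1}|\ge\tfrac12-p$. In an honest run, the decoded bit $\Ext(s,x)\oplus r$ of each eligible message is a fresh fair coin given everything before. At constant min-entropy the per-seed crossovers genuinely vary; only their average is bounded, and only by a constant. An auditor who knows $\Pi$ can estimate them by sampling honest messages, so this correlation gives a non-negligible distinguisher.

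The paper gets past this point only by stipulating, at the start of Section~\ref{sec:feedback_pnr-ke}, that the distinguisher sees just the noisy transcript. Under the independent public randomness assumption the observer also sees the seeds, the masks and the messages themselves, so citing that section does not close the gap. Repairing it needs a further idea, such as a sampler whose channel is a $\BSC$ conditioned on $\pp$, or a signaling rule driven by the posterior given the full public view. Either must keep the receiver's decoding independent of the sender's private distribution.

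Your argument avoids this obstruction whenever the embedded bits are themselves (pseudo)uniform, so that matching holds for every $\pp$:
\begin{itemize}
\item the $\Theta(\log\lambda)$ regime with a uniform-transcript PR-KE (a fortiori high entropy);
\item constant entropy, if you replace the feedback compiler by the non-feedback LSPN protocol of Section~\ref{sec:simpleprot}, at the cost of quasi-polynomial security.
\end{itemize}
Since the informal statement only asks for ``sufficient'' entropy, those regimes already give it.

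Two smaller fixes remain. First, at $\Theta(\log\lambda)$ entropy a union bound yields only $1-1/\poly(\lambda)$ agreement. To get the negligible failure your payload phase needs, you must add the verify-and-repeat step of Section~\ref{sec:logentropy}: a CGZ watermark keyed by $h(k_A)$ for a public pairwise-independent $h$. Second, the crossover fed to the compiler must be the seed average $\EstBSC(B)$. That is efficient only with the $O(\log\lambda)$-seed extractor, not a leftover-hash seed, and the actual seed's imbalance admits no uniform bound $p_t\le 2\varepsilon$.
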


The formal notion of ``sufficient'' will be stated in terms of an explicit admissibility condition on the sequence of per-round empirical entropies.
In natural language conversations, we expect entropy to accrue roughly constantly per token, and thus the condition is typically satisfied; we formalize and discuss these assumptions in the relevant sections.

\subsection{Entropy and Empirical Entropy}
\label{subsec:entropy}

For a distribution $D$ over a finite set $X$, its (Shannon) entropy is
\[
H(D)\ :=\ \E_{x\sim D}\big[-\log D(x)\big].
\]
For a specific point $x\in X$, the \emph{empirical entropy} (information content) of $x$ under $D$ is $-\log D(x)$; its expectation under $x\sim D$ equals $H(D)$.

We will measure entropy of model outputs via the probability of the sampled response.

\begin{definition}[Empirical entropy of a response]
\label{def:empentropy}
For a model $\Model$, prompt $\prompt$, and string $x\in\calT^\star$, define
\[
\empH(\Model,\prompt,x)\ :=\ -\log \Pr\big[\;\RModel(\prompt)=x\;\big].
\]
\end{definition}

By definition, if $x\sim \RModel(\prompt)$ then
$\E[\empH(\Model,\prompt,x)] = H(\RModel(\prompt))$.

\paragraph{Empirical entropy per round.}
When analyzing conversations, we will apply Definition~\ref{def:empentropy} to individual rounds, with the model and prompt corresponding to the round's speaker.
Concretely, for a realized execution with prompts $\prompt_1,\ldots,\prompt_T$, define
\[
\empH^{(t)} :=
\begin{cases}
\empH(\Model_A,\prompt_t,x^{(t)}) & \text{if } t \text{ is odd},\\
\empH(\Model_B,\prompt_t,x^{(t)}) & \text{if } t \text{ is even}.
\end{cases}
\]

\paragraph{The Empirical entropy Heuristic.}\label{sec:empiricalheur}
A common modeling assumption, supported by empirical studies in linguistics~\cite{genzel2002entropy, chen2017entropy, shi2022lexical}, is that natural language exhibits roughly constant entropy per unit of text (e.g., per word or token), rather than concentrating all uncertainty in a small number of rare parts.
Under this view, the empirical entropy of an LLM-generated response should scale linearly with its length and be spread fairly evenly across any large enough sequence of tokens.
This behavior was observed experimentally for modern LLMs in the context of LLM steganography, including in~\cite{christ2023undetectable} and~\cite{zamir2024undetectable}.

\section{Background: Watermarking and Steganography for LLMs}
\label{sec:background}

This section reviews the two primitives most closely related to our setting: (i) \emph{undetectable watermarking} for LLM-generated text, and (ii) \emph{steganography} for LLMs, which strengthens watermarking by enabling the transmission of an explicit hidden payload.
Our presentation follows the high-level viewpoint of Christ, Gunn, and Zamir (CGZ) for undetectable watermarking~\cite{christ2023undetectable,christ2024pseudorandom}, as well as its generalization to undetectable steganography by Zamir~\cite{zamir2024undetectable}.

\subsection{Preliminaries: Reducing to binary tokens}
As is common in this line of work, it is convenient to present the main ideas assuming a binary token set $\calT=\{0,1\}$.
This assumption is without loss of generality: one may map a general token distribution to a sequence of biased coin-flips (or, equivalently, to a sequence of binary decisions) using standard interval-coding style reductions; see~\cite{christ2023undetectable} for a concrete reduction.
We will therefore describe the core mechanisms in the binary setting, keeping in mind that all schemes extend to general vocabularies.

\subsection{Undetectable watermarking: correlating randomness with a secret key}
\label{sec:CGZ}

Watermarking aims to embed a \emph{signal} that can later be detected by someone holding a secret key.
In the LLM setting, the strongest notion one could hope for is \emph{undetectability}: even a computationally bounded observer with oracle access to the model should be unable to distinguish watermarked outputs from genuine model samples.
CGZ~\cite{christ2023undetectable} showed that such undetectable watermarking is possible by ensuring that watermarking does \emph{not change the output distribution}; instead, it correlates the randomness used during sampling with a secret key.

\paragraph{Warm-up: a single bounded-length response.}
Fix an upper bound $L$ on the number of generated tokens.
A CGZ-style key can be viewed as a vector $k=(k_1,\ldots,k_L)$ where each $k_i$ is sampled independently and uniformly from $[0,1]$.
For each token position $i$, let $p_i$ denote the model's next-token probability for token ``$1$'' given the prompt and prefix.
A genuine model sample at position $i$ can be generated by drawing $r_i\sim U\left([0,1]\right)$ and outputting $1$ if $r_i\le p_i$ (and $0$ otherwise).
The watermarking idea is to \emph{replace} this fresh randomness $r_i$ with a key-derived value (in the simplest form, just $r_i:=k_i$), while keeping the induced output distribution unchanged.
The resulting output distribution remains identical to the model's, but the key holder can test for correlation between the observed tokens and the key -- which will be high if the text was generated using this method and the entropy is sufficiently high (equivalently, many probabilities~$p_i$ are far from~$0$ or~$1$).
Notably, testing this correlation requires knowing only the output tokens themselves and the key, but not the probabilities~$p_i$ that depend on the unknown prompt. Furthermore, even knowing the model itself is not necessary nor used.
The probability of missing the correlation is exponentially small in the entropy of the generated text.

\paragraph{Unbounded response.}
To avoid limiting the length of the response, we may fix a single key~$k$ of length~$\lambda$ to a cryptographic Pseudo-Random Function (PRF), and then implicitly set~$k_i=\emph{PRF}_k(i)$.

\paragraph{From one query to many.}
The above is not secure against multi-query distinguishers, since reusing the same $k_i$ across multiple responses creates detectable dependencies.
CGZ address this by deriving a unique per-query randomness to be used in addition to a single secret key~$k$: To generate a LLM response, we start by using actual ``fresh'' randomness, until the prefix of the response we generated thus far is \emph{rare} enough (i.e., is unlikely to ever be generated again); from that point on, we use the method described above with the \textsc{prefix} as an additional input to the PRF -- that is, we set~$k_i = \text{PRF}_k(i,\;\textsc{prefix})$.
The detector has to enumerate over all possible prefixes.
As no input to the PRF is ever repeated, security is maintained across many queries.

\subsection{Steganography for LLMs: embedding an explicit payload}
Watermarking typically lets the key holder decide whether text is watermarked, but it does not directly support transmitting a chosen hidden message.
Steganography strengthens this: the sender embeds an explicit payload, and the receiver decodes it from the generated text, using the secret key.
The key security requirement remains undetectability (for a distinguisher that does not have the key), but we now also demand correctness of decoding.

At a very high level, the constructions of~\cite{zamir2024undetectable} build on the same distribution-preservation principle as CGZ watermarking.
The stego-encoder never ``edits'' model samples; instead it samples in a way that (computationally) preserves the model's output distribution, while correlating the sampling randomness with a secret key so that a key holder can reliably decode an embedded message.

\paragraph{From watermarking to steganography.}
One can view a watermark as encoding a \emph{single global bit} (``this text was generated with key $k$'') into a response.
If one were to encode an explicit payload naively by repeating such a global test independently per payload bit, then achieving security and negligible decoding error for each bit would require spending $\Omega(\lambda)$ bits of entropy per bit, which is far from optimal.

The key observation is that the expensive part in the multi-query setting is establishing \emph{fresh, non-repeating randomness inputs} (the ``rare prefix'' mechanism above).
Thus, the steganographic encoder pays a one-time overhead (on the order of $\lambda$ entropy) to obtain a session-specific randomness seed that is extremely unlikely to repeat across queries, and then uses that seed to embed many payload bits at essentially constant amortized cost.

\paragraph{Stego-encoding via a two-hypothesis watermark.}
A convenient way to view the steganographic encoder of~\cite{zamir2024undetectable} is as a reduction to the CGZ watermarking primitive.
To embed a single payload bit $b\in\{0,1\}$, the encoder runs the CGZ sampling procedure with the \textsc{prompt} that was already fixed for the entire response, using one of two key-derivation rules (equivalently, two ``virtual keys'') corresponding to $b=0$ and $b=1$.
The resulting text is distributed (computationally) indistinguishably from an unmodified model sample, yet a decoder holding the underlying secret can test which of the two watermark hypotheses better matches the observed text and output the corresponding bit.
The reliability of this test depends on the amount of entropy consumed: running the test on a transcript segment with larger empirical entropy drives the decoding error down exponentially, whereas if the segment length (and hence available entropy) is only constant, the per-bit decoding error remains a non-negligible constant.

\paragraph{Error correction and amortized rate.}
Embedding a single bit using only constant local entropy yields a constant (small, but non-negligible) probability of decoding error for that bit.
The construction therefore applies a certain error-correcting code to the payload, yielding overall decoding failure probability negligible in~$\lambda$.
The resulting performance can be summarized as follows:
after the one-time $\Theta(\lambda)$-entropy overhead used to obtain query-unique randomness, the scheme embeds \emph{one payload bit per $\Theta(1)$ additional entropy} in the generated text.
Under the common heuristic that natural language carries $\Theta(1)$ bits of entropy per token on average, this corresponds to a constant payload rate per generated token.

\paragraph{Takeaway for our setting.}
The lesson we draw is twofold.
First, undetectability is achieved by \emph{distribution preservation}: the public text remains indistinguishable from a genuine model sample, and the hidden information resides only in key-dependent correlations.
Second, correctness (payload recovery) is governed by the entropy available in the carrier: While~$\Theta(\lambda)$ entropy must be used for fixing the unique identifier that is needed for security, in both the watermark and steganography protocols, afterwards we only need~$\Theta(1)$ entropy on average per payload bit we encode.

\section{Covert Conversation with Interaction-Unique Shared Key}
\label{sec:sharedkey} 

In this section we analyze the ``easy'' regime in which Alice and Bob begin the interaction with a secret key
$k_{\mathsf{sess}}\in\{0,1\}^\lambda$ that is (i) shared between them, (ii) unknown to the observer/distinguisher,
and (iii) unique to this interaction (e.g., derived from a TLS session bound to the exchange).
In this regime, we obtain a high-rate undetectable parallel secret conversation essentially as a black-box
application of the undetectable LLM steganography~\cite{zamir2024undetectable} summarized in Section~\ref{sec:background}, with the simple yet crucial observation that we
do \emph{not} need to spend entropy on producing a ``rare prefix'' to ensure uniqueness of PRF inputs.

\subsection{Steganography with interaction-unique key: removing the prefix overhead}
\label{subsec:sharedkey-noprefix}

Recall from Section~\ref{sec:background} that in the multi-query setting, CGZ-style watermarking and its
steganographic generalization must avoid repeating PRF inputs across different invocations, which is achieved
by first sampling until encountering a sufficiently rare prefix and then using that prefix as part of the PRF
input. Here we can skip this step altogether because the interaction-unique session key already provides
freshness: we will ensure that every PRF input includes the round and token indices, and hence is never
repeated within the interaction.

Concretely, fix a PRF.
For a transcript round index $t$, a speaker identifier $\sigma\in\{\mathsf{A},\mathsf{B}\}$, and a within-message
token index $j$, we use the PRF with key~$k_{\mathsf{sess}}$ and include~$t\;||\;\sigma\;||\;j$ in the PRF's input.
Since $(t,\sigma,j)$ is unique for every generated token in the interaction, and~$k_{\mathsf{sess}}$ is unique to the interaction, no PRF input is ever repeated.

\subsection{Construction: compiling steganography into covert conversation}
\label{subsec:sharedkey-const}

Fix an innocent protocol $\Pi$ and a covert protocol
$\Pi'$ as in Section~\ref{subsec:covert}.
We construct a parallel covert conversation scheme
$\calC=(\Setup_A,\Setup_B,\widehat{A},\widehat{B},\Dec_A,\Dec_B)$ in the sense of
Definition~\ref{def:parallel_scheme}.

\paragraph{Setup.}
$\Setup_A(1^\lambda)$ and $\Setup_B(1^\lambda)$ should output secret keys $k_A,k_B$.
As in this section we assume the existence of an interaction-unique shared secret $k_{\mathsf{sess}}$; we instantiate $k_A=k_B=k_{\mathsf{sess}}$.

\paragraph{Embedding strategy.}
At a high level, $\widehat{A}$ and $\widehat{B}$ run the honest protocol $\Pi$ to determine what distribution the
next public message should follow, and then use undetectable steganography (keyed by $k_{\mathsf{sess}}$ with
unique PRF inputs) to sample from (a distribution computationally indistinguishable from) that honest
distribution while simultaneously encoding the next bits of the covert conversation.

To formalize the payload, we view the covert protocol $\Pi'$ as inducing a (random) stream of bits to be
communicated. Concretely, we fix an efficiently computable injective encoding map from $\calT^\star$ to
$\{0,1\}^\star$, and interpret the transcript of
$\Pi'$ as a bitstream. The decoding algorithms will recover this bitstream and invert the encoding.

\paragraph{Algorithms $\widehat{A},\widehat{B}$.}
Consider an odd round $t$ (Alice speaks). Internally, $\widehat{A}$ maintains the honest state $\stA^{(t-1)}$
for $\Pi$, and a covert state $\stA'^{(t-1)}$ for $\Pi'$ (as well as any additional bookkeeping state).
It computes the honest prompt $\prompt_t$ exactly as $A$ would, namely
\[
(\prompt_t,\stA^{(t)}) \leftarrow A(c_A,\stA^{(t-1)},\tau_{<t}),
\]
thereby fixing the honest next-message distribution $\RModel_A(\prompt_t)$.
In parallel it advances the covert side (using $A'$ and its covert state) to determine the next bits of payload to embed in this message.
Finally, instead of sampling $x^{(t)}\leftarrow \RModel_A(\prompt_t)$ honestly, it samples
\[
x^{(t)} \leftarrow \steg_{k_{\mathsf{sess}}}^{\,t,\mathsf{A}}(\prompt_t,\payload_t),
\]
where $\steg^{\,t,\mathsf{A}}$ denotes the steganography encoder from~\cite{zamir2024undetectable} instantiated so that
all PRF inputs are of the form $(t\|\mathsf{A}\|j)$ for token positions $j$ within this message.
Bob's algorithm $\widehat{B}$ is defined analogously in even rounds, using the honest prompt produced by $B$ and
the distribution $\RModel_B(\prompt_t)$.

\paragraph{Decoding.}
Given the full public transcript $\tau$, $\Dec_A$ (resp.\ $\Dec_B$) applies the corresponding stego decoder from
\cite{zamir2024undetectable} to each message, using the same per-token PRF inputs $(t\|\sigma\|j)$, thereby
recovering the embedded bitstream. It then inverts the fixed binary encoding to reconstruct the covert
transcript.

\subsection{Guarantees}
\label{subsec:sharedkey-thm}

We state the guarantees in the cleanest form, emphasizing that (i) undetectability holds
black-box from the underlying steganography security, and (ii) the achievable covert rate is proportional to
the available empirical entropy in the honest conversation, with no additional $\Theta(\lambda)$ ``rare-prefix''
overhead (neither per round nor globally).

\begin{theorem}[Shared-key covert conversation, informal rate statement]
\label{thm:sharedkey-main}
Assume a secure PRF and the existence of an interaction-unique shared secret key $k_{\mathsf{sess}}$ unknown to
the observer.
Fix an innocent protocol $\Pi$ and a covert protocol $\Pi'$.
There exists a parallel covert conversation scheme $\calC$ for $(\Pi,\Pi')$ such that:
\begin{enumerate}
    \item \textbf{Undetectability.} $\calC$ is undetectable with respect to $\Pi$ in the sense of
    Definition~\ref{def:covert_undet}.
    \item \textbf{Rate and correctness (informal).} Let $\tau\leftarrow\tau^\calC(T)$ be a $T$-round transcript,
    and let $\empH^{(t)}$ be the per-round empirical entropies (Section~\ref{subsec:conversation}).
    Conditioned on a standard admissibility condition ensuring that sufficient entropy is present in the
    transcript and not overly concentrated, the decoders $\Dec_A,\Dec_B$ recover a number of covert bits that is
    $\Theta\!\left(\sum_{t=1}^T \empH^{(t)}\right)$ up to constant factors, with overall failure probability
    $\negl(\lambda)$.
    In particular, under the heuristic that natural language has $\Theta(1)$ empirical-entropy bits per token on
    average, this corresponds to embedding one covert bit per $\Theta(1)$ public tokens (amortized over the
    interaction).
\end{enumerate}
\end{theorem}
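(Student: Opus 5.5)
The proof compiles the covert conversation on top of the steganographic encoder of \cite{zamir2024undetectable}, with $k_{\mathsf{sess}}$ playing the role of the stego key. It then argues undetectability and rate separately.

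\textbf{Undetectability.} We use a hybrid argument over the PRF.
\begin{itemize}
\item[(H0)] The covert transcript distribution $\tau^\calC$.
\item[(H1)] Every evaluation $\mathrm{PRF}_{k_{\mathsf{sess}}}(t\|\sigma\|j)$ is replaced by a truly random function $R(t\|\sigma\|j)$.
\item[(H2)] The honest distribution $\tau^\Pi$.
\end{itemize}
For $H0\approx_c H1$, a distinguisher $D$ given $\Pi$ is turned into a PRF adversary. The adversary simulates both parties' honest states, prompts and covert states, since it knows $\Pi$, $\Pi'$, the contexts and the models. It answers all PRF calls with its oracle. This is efficient because $T$ and the message lengths are $\poly(\lambda)$.

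The only care needed is the multi-sample access in Definition~\ref{def:covert_undet}: $D$ may request several transcripts. Interaction-uniqueness of $k_{\mathsf{sess}}$ means each sampled transcript uses an independent fresh key. So we go through one more hybrid over the $q=\poly(\lambda)$ transcripts, replacing one key at a time by a random function.

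For $H1= H2$ exactly, we need the fact that the stego encoder of \cite{zamir2024undetectable} run with truly random values $k_i$ reproduces the model's next-token law exactly. This holds provided no input $(t,\sigma,j)$ is ever reused. Within a transcript, reuse is excluded because each token has a distinct index triple, as in Section~\ref{subsec:sharedkey-noprefix}. Across transcripts it is excluded because the keys are independent.

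A subtle point is that the covert payload and $\Pi'$ adapt to the transcript. Payload bits affect only which virtual key is used. Each virtual key still applies fresh uniform randomness at positions never queried before, so conditioned on any history each token is drawn from $p_i$. An induction over tokens then gives identical laws for $H1$ and $H2$.

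\textbf{Rate and correctness.} We apply the per-bit guarantee of the two-hypothesis watermark decoder. A segment with empirical entropy $\Theta(1)$ carries one bit with constant error probability $\delta<1/2$, and errors on disjoint segments are (pseudo)independent given the random function. The payload bitstream is encoded with a constant-rate error-correcting code that tolerates a $\delta+\epsilon$ fraction of errors, in blocks of length $\Theta(\lambda)$. By Hoeffding (Lemma~\ref{lem:hoeffding}), each block decodes except with probability $e^{-\Omega(\lambda)}$, and a union bound over $\poly(\lambda)$ blocks gives $\negl(\lambda)$ total failure.

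The admissibility condition is exactly what is needed to partition the transcript into segments, each accumulating $\Theta(1)$ empirical entropy, whose number is $\Theta(\sum_t \empH^{(t)})$. No $\Theta(\lambda)$ rare-prefix cost is paid, so the rate stays $\Theta(\sum_t\empH^{(t)})$.

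Finally, the decoder must know the segment boundaries. Following \cite{zamir2024undetectable}, we have the encoder advance to the next payload bit once the empirical entropy under the current hypothesis exceeds a threshold. The decoder recovers these boundaries through the same test statistic. Correctness then transfers from the truly random hybrid back to the PRF, since a noticeable change in decoding success would also be a PRF distinguisher.

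\textbf{Main obstacle.} I expect the hardest step to be this boundary synchronization together with the adaptive interleaving. The decoder does not know the prompts, so it cannot compute $\empH^{(t)}$. The decoding statistic has to be checked to tolerate this and to remain valid when the segments span message boundaries between Alice and Bob. My plan is to restart segmentation at every message boundary, losing only a constant factor per message that has $\Omega(1)$ entropy, and to rely on admissibility to absorb low-entropy messages.
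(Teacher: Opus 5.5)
Your proposal is correct and follows the paper's route. The paper's proof simply invokes the guarantees of \cite{zamir2024undetectable} message-by-message, with the unique PRF inputs $(t\|\sigma\|j)$ under the interaction-unique key replacing the rare-prefix step. Your PRF-to-random-function hybrid and your error-correction/concentration argument unpack that black box.

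One small fix to your synchronization step: state the rule for advancing to the next payload bit as a threshold on the key-dependent detection score, computable from the key and the observed tokens alone, rather than on $\empH$, which the decoder cannot compute. With such a rule the encoder can simulate the decoder exactly, so the boundary obstacle you anticipate does not arise.
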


Both undetectability and correctness follow from the guarantees of~\cite{zamir2024undetectable} applied message-by-message.
The total number of reliably decodable covert bits is, up to constant factors, the amount of entropy
``consumed'' by the encoder across the transcript; in our construction this is exactly the available empirical
entropy in the honest message distributions.
If we assume for simplicity that (1) in~$\Pi$, the empirical entropy of long enough consecutive subsequences of tokens grows linearly with their size, and (2) that in both~$\Pi$ and~$\Pi'$ the parties speak in turns and that in each turn the messages of Alice and Bob are of the same (up to a constant factor) length. Then, the analysis of~\cite{zamir2024undetectable}
yields a linear (constant-factor) relation between the length of the honest transcript of~$\Pi$ and the length of the prefix of the transcript of~$\Pi'$ we would correctly decode.

\subsection{Where the shared key is used}
The place where interaction-unique shared secrecy is required is to ensure that all PRF settings are unique and that its key is known to Alice and Bob but not to the distinguisher.
Thus, the rest of the paper is dedicated to protocols for Alice and Bob to coordinate such a session-unique shared key, without assuming it is given to them, and without being detected by the distinguisher which from now on shares all of the information they are instantiated with.

\section{The Keyless Setting: Covert Key Exchange Between Agents}
\label{sec:keyless}

We now turn to the \emph{keyless} setting, where Alice and Bob are not initially given a shared secret key.
From this point on, the observer/distinguisher is assumed to know \emph{all} information the agents are instantiated with: the models, the private contexts, and full protocol descriptions; only the internal randomness of the parties remains hidden.
Our goal is to show how the agents can nevertheless coordinate an interaction-unique shared key \emph{covertly}; namely, in a way that preserves indistinguishability from an honest execution of~$\Pi$.

This indeed resembles a standard cryptographic key-exchange task: the parties must agree on a fresh shared secret using only their private randomness and the public transcript. The crucial difference is that here, we are doing steganography: the \emph{existence of any such protocol execution} must itself remain hidden. That is, Alice and Bob must carry out key exchange \emph{under cover} of the honest conversation, so that the resulting transcript is still indistinguishable from an ordinary run of~$\Pi$.

The main modeling challenge is that Alice and Bob do not share a sampling distribution a priori: the next-token distributions that define what is ``honest-looking'' depend on private contexts and evolving private states, and are generally unknown to the other party. 

Thus, to realize covert key exchange we need a distribution-agnostic, keyless, and undetectable steganographic key exchange mechanism.

We start with a description of cryptographic key-exchange protocols in Section~\ref{subsec:ke}.
Then, to isolate the key technical obstacles, we begin with a  realistic yet simplifying assumption made in prior works, e.g.~\cite{von2004public}, namely, that there are many rounds in the conversation where the messages provide
sufficient entropy, namely min-entropy $\omega(\lambda)$ --- or $\omega(\log \lambda)$ if one only desires quasipolynomial security.
We call this the {\em high-entropy assumption}. A covert key exchange protocol under the high-entropy assumption is described in Section~\ref{subsec:ke_ind_pub}. The main new technical contribution starts in Section~\ref{sec:bundle} where we present our {\em bundle sampler} algorithm whose main purpose is to 
reduce the task of covert key exchange to the task of pseudorandom noise-resilient key exchange. This reduction is described in Section~\ref{subsec:roadmap_short}.

\subsection{Background: Key Exchange Protocols}
\label{subsec:ke}
A (two-party) key-exchange (KE) protocol enables two parties communicating over a public channel to agree on a
shared secret key using only private randomness, in the presence of an eavesdropper who sees the entire
transcript.
At the end of the interaction, Alice and Bob output keys $\bot \neq K_A,K_B \in\{0,1\}^\lambda$ (or $\bot$), and one
typically requires \emph{correctness} ($K_A=K_B$ except with negligible probability) and \emph{secrecy} (the
resulting key is computationally indistinguishable from uniform given the public transcript and public
parameters).
We will use standard KE formalisms as a reference point.

\paragraph{The Diffie--Hellman paradigm.}
The canonical example is the classical Diffie--Hellman (DH) paradigm~\cite{diffie1976new}.
Working in a cyclic group $\mathbb{G}$ of prime order $q$ with generator $g$, Alice samples $a\sim \mathbb{Z}_q$
and sends $g^a$; Bob samples $b\sim \mathbb{Z}_q$ and sends $g^b$.
Both compute the shared key $g^{ab}$.
The security intuition is that an eavesdropper who sees $(g^a,g^b)$ cannot compute (or even distinguish from random) the shared key $g^{ab}$: this is the {\em decisional Diffie-Hellman assumption} on the group $\mathbb{G}$.

\paragraph{Pseudorandom-transcript key exchange (PR-KE).}
A useful strengthening of KE, studied in several contexts, is \emph{pseudorandom-transcript} key exchange (often abbreviated PR-KE), where the public messages of the protocol are computationally indistinguishable from uniformly random strings of the same length~\cite{cho2010equivalence}.
This notion is not (yet) about hiding the \emph{existence} of communication: the transcript is still present; but rather about ensuring that the transcript itself carries no recognizable structure.
In particular, classic DH-style key exchange already has the flavor of a dense transcript: for uniformly random exponents, the transmitted group elements $(g^a,g^b)$ are uniform in the group, and can thus be encoded as uniform or uniform-looking bit strings. 

\begin{theorem}[Two-message pseudorandom-transcript key exchange, e.g.~\cite{diffie1976new,bernstein2013elligator,cho2010equivalence}]
\label{thm:utke}
There exists a key-exchange protocol $\mathsf{PR\text{-}KE}$ with the following properties.
On public input $1^\lambda$ and some public parameters $\mathsf{pp}$, Alice samples private randomness and sends a single message
$m_A\in\{0,1\}^{\ell}$, and Bob samples private randomness and sends a single message
$m_B\in\{0,1\}^{\ell}$ for some $\ell = \ell(\lambda)$, where Bob's message is chosen non-adaptively (i.e., without depending on $m_A$
beyond public parameters $\mathsf{pp}$).
At the end, Alice and Bob compute keys $k_A,k_B\in\{0,1\}^\lambda$ as deterministic functions of their private
randomness and the public transcript $(m_A,m_B)$ such that:
\begin{enumerate}
    \item \textbf{Correctness:} $\Pr[k_A=k_B]=1$.
    \item \textbf{Uniform transcript:} $(m_A,m_B)$ is computationally indistinguishable from
    $(U_{\ell(\lambda)},U_{\ell(\lambda)})$, where the two uniform strings are independent. (In fact, this protocol achieves perfect uniformity of the transcript, but we will not use it.)
    \item \textbf{Key indistinguishability:} letting $k:=k_A=k_B$ denote the common key, it is computationally
    infeasible to distinguish $(m_A,m_B,k)$ from $(m_A,m_B,U_\lambda)$, under the decisional Diffie-Hellman assumption. 
\end{enumerate}
\end{theorem}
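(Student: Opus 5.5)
The plan is to instantiate the protocol as Diffie--Hellman over a prime-order group whose elements admit a bijective (or near-bijective) encoding into bit strings. The public parameters $\mathsf{pp}$ fix a cyclic group $\mathbb{G}$ of prime order $q$, a generator $g$, and an encoding $\mathsf{Enc}:\mathbb{G}\to\{0,1\}^{\ell}$ whose output on a uniform group element is (statistically close to) uniform; Elligator-style maps on elliptic curves are the intended choice~\cite{bernstein2013elligator}. Alice samples $a\sim\mathbb{Z}_q$ and sends $m_A=\mathsf{Enc}(g^a)$. Bob samples $b\sim\mathbb{Z}_q$ independently and sends $m_B=\mathsf{Enc}(g^b)$; his message depends only on $\mathsf{pp}$, so it is non-adaptive. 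Each party decodes the other's message, computes $g^{ab}$, and derives $k_A=k_B=\mathsf{KDF}(g^{ab})$. The map $\mathsf{KDF}$ can be a fixed-seed strong extractor (Lemma~\ref{lem:lhl} with the seed placed in $\mathsf{pp}$), or it can be dropped if $\mathbb{G}$ already embeds into $\{0,1\}^\lambda$ with uniform image.

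Correctness is immediate, since $(g^b)^a=(g^a)^b$ and decoding inverts the encoding on its image, so $\Pr[k_A=k_B]=1$.

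For the uniform transcript, $g^a$ and $g^b$ are independent and uniform in $\mathbb{G}$. The property therefore reduces to the single-element statement that $\mathsf{Enc}(U_{\mathbb{G}})$ is indistinguishable from $U_\ell$, and independence of $a$ and $b$ gives the joint statement. If only an injective encoding onto a noticeable fraction of $\{0,1\}^\ell$ is available, the fix is to resample the exponent until $g^a$ lies in the encodable set, and then to randomize the padding bits. This keeps the exponent uniform on the good set and the encoded string uniform on $\{0,1\}^\ell$.

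For key indistinguishability, DDH says $(g^a,g^b,g^{ab})\approx_c(g^a,g^b,g^c)$. Applying the efficient map $(\mathsf{Enc},\mathsf{Enc},\mathsf{KDF})$ gives $(m_A,m_B,\mathsf{KDF}(g^{ab}))\approx_c(m_A,m_B,\mathsf{KDF}(g^c))$. Since $g^c$ is uniform in $\mathbb{G}$, it has min-entropy $\log q\ge\lambda+\omega(\log\lambda)$, so Lemma~\ref{lem:lhl} makes $\mathsf{KDF}(g^c)$ statistically close to $U_\lambda$, even jointly with the seed.

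The main obstacle is the existence of the uniform encoding. Arbitrary DDH groups do not come with one, and for $\mathbb{Z}_p^*$-style subgroups the embedding into bit strings is non-uniform. The route would be either to cite the Elligator construction as an assumption-free statistical fact for suitable curves, or to use the generic transformation of~\cite{cho2010equivalence}, which converts any KE into a pseudorandom-transcript KE using pseudorandom encodings. The remaining steps are routine hybrids.
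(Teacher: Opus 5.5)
Your proposal is correct and takes the same route as the paper. The paper gives no separate proof; it only remarks that the Diffie--Hellman elements $g^a,g^b$ are uniform in the group and can be encoded as uniform-looking strings via Elligator-type maps, with key secrecy from DDH, and your extractor-based key derivation and the hybrid that conditions on encodable points fill in details it leaves to the citations. Drop the alternative you mention: \cite{cho2010equivalence} shows that uniform-transcript key agreement is equivalent to the insecurity of composed PRFs, not a generic compiler from arbitrary KE to PR-KE, so only the Elligator route (using Elligator Squared if you want a prime-order group) carries the argument.
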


\subsection{Covert Key Exchange in the High-Entropy Regime}
\label{subsec:ke_ind_pub}

We begin with the easiest keyless setting and construct a covert key exchange protocol under the {\em high-entropy assumption} described below.
The goal of Alice and Bob is to covertly establish an interaction-unique shared key and then proceed exactly as in Section~\ref{sec:sharedkey}.
The ideas in this subsection come from earlier works on steganography (e.g.,~\cite{von2004public,DBLP:conf/innovations/HorelPRV19}); we adapt them into our setting, and they serve as the starting point for our more involved protocols in later subsections that remove the high-entropy assumption.

\paragraph{The High-Entropy Assumption.}
This assumption says that there are sufficiently many rounds in the protocol whose message distributions, conditioned on the transcript so far, have min-entropy at least $\Omega(\lambda)$.\footnote{We do not distinguish between a statistical and computational security parameter for simplicity. We remark that in many places where we require $\Omega(\lambda)$ entropy, one could also get away with ``only'' $\omega(\log \lambda)$ min-entropy which (together with near-exponential assumptions, in computational settings) will give us quasipolynomial security. Here, we refrain from engaging in such ``security parameter hacking''.}
Under the high-entropy assumption, this section constructs a covert key-exchange protocol. 
It is important to stress that this high-entropy assumption is used here only as a simplifying hypothesis for the \emph{key-establishment} phase.
Once a shared session key has been covertly established, the subsequent covert conversation is handled by the shared-secret-key scheme of Section~\ref{sec:sharedkey} which does not require such large entropy.

\medskip\noindent
The next two  subsections remove this assumption.
We show that even when each ``eligible'' round has only {\em constant} min-entropy, covert key exchange can still be reduced
to a suitable cryptographic primitive, namely a pseudorandom noise-resilient key-exchange protocol.
Consequently, assuming such a primitive, we obtain a covert conversation scheme without the
high-entropy assumption.
The rest of the paper, starting from Section~\ref{sec:prnr-ke}, is dedicated to constructing this  primitive.

\medskip\noindent
We also rely on the {\em independent randomness assumption} which was present in all prior works on the topic, either explicitly or implicitly. Unlike the high entropy assumption, we will rely on this assumption throughout the rest of this paper.

\paragraph{The Independent Public Randomness Assumption.}
As in all prior work on steganographic communication without shared secret keys (e.g.~\cite{von2004public,DBLP:conf/innovations/HorelPRV19}), we assume that the parties have access to a source of \emph{public randomness}
that is independent of the honest conversation protocol~$\Pi$.
Concretely, before the interaction starts (or at a designated point early in the transcript), Alice, Bob, and
the observer all see a public random string $R\in\{0,1\}^{\poly(\lambda)}$ sampled independently of any of the private specifications of~$\Pi$ (its models, contexts, or internal logic).
One may think of $R$ as exchange-tied metadata that is effectively random at cryptographic granularity: e.g., a high-resolution timestamp, a nonce provided by an external service, or a hash of an unpredictable public event.

The crucial caveat is that the honest protocol~$\Pi$ itself does \emph{not} depend on $R$ (and neither do the models and private contexts that define what counts as ``honest-looking'').
Thus, while $R$ is visible to everyone and can be used as a common reference point by Alice and Bob, it cannot be used to change the honest transcript distribution. We call this the {\em independent public randomness} assumption.

\medskip\noindent
We also rely on the {\em recognizable entropy assumption} which was also present in all prior works on the topic, implicitly or explicitly. We will rely on this assumption throughout the rest of this paper.

\paragraph{Recognizable Entropy Assumption.} We assume that Alice and Bob know a common criterion that can be used to determine if the message distribution in a given round has min-entropy at least $k=\Omega(\lambda)$. This is a common assumption in LLM watermarking and steganography~\cite{christ2023undetectable, zamir2024undetectable} and seems to follow from empirical research in linguistics~\cite{genzel2002entropy, chen2017entropy, shi2022lexical}. More precisely, 
we assume that there is a parameter $K$ such that every round with messages of length $> K$ in the conversation has a distribution with min-entropy at least $k$. We call such a round with min-entropy $k=\Omega(\lambda)$ an \emph{eligible} round. The crucial observation is that under this assumption, both Alice and Bob know which rounds will be used to embed steganographic payloads.

\subsubsection{The Scheme} 

\paragraph{High-level idea.}
By assumption, many rounds of the honest conversation produce long messages with min-entropy $\Omega(\lambda)$, and these messages are recognizable by both parties (these are simply the messages of length $\Omega(\lambda)$ under the recognizable entropy assumption).
This means that each such round contains enough ``room'' to hide a uniform-looking bit without noticeably perturbing the distribution, as we next describe.

Using the public randomness $R$ as a seed, we apply a leftover-hash-lemma (LHL) based sampler: for a chosen hash
function $\Ext(R,\cdot)$, we repeatedly sample honest-looking candidate messages until we find one for which
$\Ext(R,x)=b$, where $b$ is the next bit we wish to transmit in the PR-KE protocol.
Because $\Ext(R,x)$ is statistically close to uniform when $x$ has high min-entropy, and~$b$ is indistinguishable from a random bit, conditioning on $\Ext(R,x)=b$
changes the distribution only negligibly, so the public transcript remains indistinguishable from an honest run.
We use this mechanism to transmit the two non-adaptive messages $(m_A,m_B)$ of a pseudorandom-transcript key exchange
(Theorem~\ref{thm:utke}) over a small number of high-entropy rounds, after which both parties compute a shared
session key from $(m_A,m_B)$.

\paragraph{Identifying the embedding rounds.}
Fix an execution of the honest protocol~$\Pi$. 
Let $t^A_1<t^A_2<\cdots$ and $t^B_1<t^B_2<\cdots$ denote the eligible rounds of messages sent by Alice and Bob respectively.
Our scheme will embed bits only in these rounds (the first~$O(
\lambda)$ such rounds of each party, to be precise) and will behave identically to~$\Pi$ in all other rounds.

\paragraph{A strong seeded extractor.}
Let $\Ext$ be a strong seeded extractor (Definition~\ref{def:strong-extractor}) such as in Theorem~\ref{lem:lhl} or \ref{lem:short-seed-extractor}. We interpret the public randomness as specifying a seed $s$ for $\Ext$. 
The key property we use is that for any distribution $X$ over $\calT^\star$ with min-entropy at least $k$,
the bit $\Ext(s,X)$ is statistically close to uniform over $\{0,1\}$ for a uniformly random seed $s$, with error (statistical distance) $2^{-\Omega(k)}$. This is $2^{-\Omega(\lambda)}$ for messages in eligible rounds.

\paragraph{The one-bit embedding primitive.}
Fix an eligible round $t$, and consider an execution with realized transcript prefix $\tau_{<t}$.
Let $\prompt_t$ be the prompt prescribed by the honest protocol~$\Pi$ in round~$t$ (as a deterministic function
of $\tau_{<t}$ and the speaker's private context/state), and let $x_t$ denote the corresponding honest
next-message distribution, namely
\[
x_t \ :=\
\begin{cases}
\RModel_A(\prompt_t) & \text{if $t$ is an Alice round},\\
\RModel_B(\prompt_t) & \text{if $t$ is a Bob round}.
\end{cases}
\]
To embed a bit $b\in\{0,1\}$ in round $t$, the speaker repeatedly samples $x\leftarrow x_t$ until
$\Ext(s,x)=b$, where the extractor seed $s$ lives in the public random string, and then outputs that $x$ as the public message $x^{(t)}$.
To decode, the receiver computes $\Ext(s,x^{(t)})$ on the observed message.
We denote these algorithms by \underline{$\text{Embed}_t(s,b)$} and \underline{$\text{Extract}(s,x)$}, respectively, where $\text{Embed}_t$ is shorthand
for rejection sampling from the honest distribution $x_t$ induced by $\Pi$ at round~$t$.

\paragraph{The full scheme $\calC$.}
We now describe a covert key exchange scheme
$\calC = (\widehat{A},\widehat{B})$.
The setup algorithms are trivial in this subsection: $\Setup_A(1^\lambda)$ and $\Setup_B(1^\lambda)$ output
$\bot$ (no private keys are needed), since all coordination is via the public seed~$s$.

\smallskip
\noindent\emph{Phase I: embedding a PR-KE transcript.}
Let $\mathsf{PR\text{-}KE}$ be any two-message pseudorandom-transcript key exchange as in
Theorem~\ref{thm:utke}, with transcript messages $m_A,m_B\in\{0,1\}^{\ell}$ for $\ell=\mathsf{poly}(\lambda)$.
Alice and Bob will embed these strings bit-by-bit into the first $\ell$ eligible rounds of their respective
messages.

Concretely, Alice runs the PR-KE sender algorithm to
obtain her PR-KE message $m_A\in\{0,1\}^{\ell}$; Then, in the $j$-th eligible Alice round $t^A_j$, Alice computes the honest prompt $\prompt_{t^A_j}$ as
prescribed by $\Pi$ (thus determining the honest distribution $x_{t^A_j}$), and outputs
\[
x^{(t^A_j)} \leftarrow \text{Embed}_{t^A_j}(s,m_A[j]).
\]
Bob obtains his PR-KE message $m_B\in\{0,1\}^{\ell}$ and encodes it similarly.
In all non-eligible rounds, $\widehat{A}$ and
$\widehat{B}$ behave identically to the honest protocol $\Pi$.

\smallskip
\noindent\emph{Phase II: decoding and key derivation.}
Given the completed transcript $\tau=(x^{(1)},\ldots,x^{(T)})$, Alice reconstructs Bob's PR-KE message by
\[
\widehat{m}_B[j]\ :=\ \text{Extract}(s, x^{(t^B_j)}),\qquad j=1,\ldots,\ell,
\]
using the rounds $t^B_1,\ldots,t^B_\ell$, and Bob reconstructs Alice's message analogously using the rounds $t^A_1,\ldots,t^A_\ell$.
They then compute keys $k_A,k_B\in\{0,1\}^\lambda$ by running the PR-KE key derivation algorithm on the
recovered transcript $(m_A,\widehat{m}_B)$ and $(\widehat{m}_A,m_B)$, respectively.
Finally, they set $k_{\mathsf{sess}}:=k_A=k_B$ and proceed exactly as in Section~\ref{sec:sharedkey} to run the
steganographic overlay for the remainder of the interaction.

\paragraph{Efficiency.}
In an eligible round, since $\Ext(s,x)$ is close to uniform for $x\leftarrow x_t$, each attempt succeeds with
probability close to $1/2$.
Thus, $\text{Embed}_t(s,\cdot)$ has expected constant (indeed, $\approx 2$) samples from the honest
distribution, and hence runs in expected polynomial time. Alternatively, if we instruct $\text{Embed}_t(s,\cdot)$ to run for $\lambda$ steps and abort if none of the steps succeed, we incur a $\negl(\lambda)$ error probability but achieve strict polynomial runtime. 

\begin{theorem}
\label{thm:indep-rand-ke}
Assume the simplifying high-entropy assumption, the independent public randomness assumption, and the recognizable entropy assumption above, and fix a two-message pseudorandom-transcript key exchange
protocol $\mathsf{PR\text{-}KE}$ as in Theorem~\ref{thm:utke}.
Then the scheme $\calC$ described above satisfies:
\begin{enumerate}
    \item \textbf{Undetectability.} $\calC$ is undetectable with respect to $\Pi$ (Definition~\ref{def:covert_undet}).
    \item \textbf{Key agreement.} With probability $1-\negl(\lambda)$, the derived keys satisfy $k_A=k_B$.
    \item \textbf{Key secrecy.} The common key is computationally indistinguishable from uniform given the public
    transcript $\tau$ (and $s$).
\end{enumerate}
\end{theorem}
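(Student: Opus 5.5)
The proof goes through a hybrid argument over the $2\ell$ embedding rounds, and undetectability is the main part. Consider first a single eligible round $t$ with honest distribution $x_t$, which has min-entropy at least $k=\Omega(\lambda)$ conditioned on the prefix and on the speaker's private state. We compare two procedures: sampling $x\leftarrow x_t$ directly, and drawing a uniform bit $b$ and then outputting $\text{Embed}_t(s,b)$.

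\emph{Step 1: a single embedded uniform bit is statistically hidden.} Fix $s$ and let $\beta_s=\Pr[\Ext(s,x_t)=1]$. For a uniform $b$, $\text{Embed}_t(s,b)$ outputs $x$ with probability
\[
\tfrac12\, x_t(x)\big/\Pr[\Ext(s,x_t)=\Ext(s,x)].
\]
Its statistical distance from $x_t$ is therefore $|\beta_s-\tfrac12|$, up to a constant factor. Averaging over $s$ and applying Lemma~\ref{lem:avgbias} bounds the expected distance by $\varepsilon=2^{-\Omega(\lambda)}$. If we use the truncated, $\lambda$-step version of Embed, the abort adds only $\negl(\lambda)$.

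There is one subtlety. The seed $s$ is shared across all rounds, and the honest distributions $x_t$ are adaptive. However, by the independent public randomness assumption the honest protocol $\Pi$ never depends on $R$. So for a fixed prefix and fixed private state, $x_t$ is a fixed source independent of $s$, and the extractor guarantee applies to it. The harder issue is that later prefixes depend on $s$ through earlier embedded rounds. I would handle this by writing the overall distance as a sum over rounds of conditional per-round distances, each taken in expectation over $(s,\text{prefix})$. I would then apply the extractor bound for each fixed prefix, via the strong-extractor property $(\Ext(U_d,X),U_d)\approx(U_1,U_d)$ applied jointly with the prefix. I expect this step to be the main obstacle. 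If the conditioning on $s$ makes it delicate, the fallback is to use a fresh portion of $R$ as the seed for each eligible round; since $R$ has length $\poly(\lambda)$, there is enough randomness for this.

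\emph{Step 2: hybrids.} Let $H_0$ be the real scheme $\calC$. Let $H_1$ replace $(m_A,m_B)$ by independent uniform strings, and let $H_2$ be the honest execution $\Pi$.

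$H_0\approx_c H_1$ follows from the uniform-transcript property of Theorem~\ref{thm:utke}. The reduction receives a candidate $(m_A,m_B)$ and simulates everything else. It can do this because it knows $\Pi$ (the distinguisher in Definition~\ref{def:covert_undet} is given $\Pi$ anyway) and it samples $s$ itself. Multiple oracle samples are handled by a standard hybrid over the $\poly(\lambda)$ sessions.

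$H_1\approx_s H_2$ follows from Step 1 summed over the $2\ell=\poly(\lambda)$ eligible rounds, which gives total distance $\poly(\lambda)\cdot 2^{-\Omega(\lambda)}$. Non-eligible rounds are identical in the two hybrids. The recognizable entropy assumption ensures that the set of eligible rounds is determined by the public transcript alone, so it is the same in every hybrid.

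\emph{Step 3: correctness and secrecy.} Embed guarantees $\Ext(s,x^{(t)})=b$ whenever it does not abort, so decoding is exact except with probability $\negl(\lambda)$ under truncation. Hence $\widehat m_A=m_A$ and $\widehat m_B=m_B$, and correctness of PR-KE gives $k_A=k_B$.

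For key secrecy, note that the transcript $\tau$ is a randomized function of $(m_A,m_B,s)$ together with independent honest sampling, and this function is efficiently simulable given $\Pi$. An adversary distinguishing $(\tau,s,k)$ from $(\tau,s,U_\lambda)$ therefore yields a distinguisher for $(m_A,m_B,k)$ versus $(m_A,m_B,U_\lambda)$. This contradicts the key indistinguishability property of Theorem~\ref{thm:utke}.
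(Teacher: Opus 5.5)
Your proposal is correct and is essentially the paper's argument: a hybrid over the $2\ell$ embedding rounds that combines the extractor's near-uniformity with the pseudorandomness of $(m_A,m_B)$, exact decoding by recomputing $\Ext(s,\cdot)$, and a simulation reduction for key secrecy. Your bookkeeping is in fact cleaner than the paper's, since you isolate the single computational step (replacing $(m_A,m_B)$ by uniform strings) and compute the exact per-round distance $|\beta_s-\tfrac12|$; the paper instead asserts that conditioning on $\Ext(s,x)=b$ changes the law of $x$ only negligibly, which holds only after averaging over a uniform $b$.

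The shared-seed issue you flag as the main obstacle needs no fresh seeds. In the statistical hybrids, switch the embedding rounds to honest sampling in chronological order. Then, when a given round is switched, all earlier rounds are already honest, so the prefix and the private states are independent of $s$ (because $\Pi$ ignores $R$). Hence the bound $\mathbb{E}_s|\beta_s-\tfrac12|\le\varepsilon$ applies pointwise in the prefix. The later rounds are the same randomized map of $(s,\text{prefix})$ in both hybrids, so they add nothing to the distance.
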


\begin{proof}
Fix any PPT distinguisher $D$ in Definition~\ref{def:covert_undet}.
We prove undetectability by a hybrid argument over all embedding rounds.
Consider an eligible round $t$ in which a bit $b$ is embedded.
In the honest execution, the public message is $x\leftarrow x_t$.
In the covert execution, it is distributed as $x$ conditioned on the event $\Ext(s,x)=b$.
By Lemmas~\ref{lem:lhl} and \ref{lem:short-seed-extractor} (with $\ell=1$) and the assumption that $H_\infty(x_t) = \Omega(\lambda)$,
the joint distribution $(s,\Ext(s,x))$ is negligibly close to $(s,U_1)$ for $x\leftarrow x_t$.
Equivalently, for a uniformly random $s$, conditioning on $\Ext(s,x)=b$ changes the distribution of $x$ by at most
$\negl(\lambda)$ statistical distance.

Moreover, the embedded bits $b$ are themselves indistinguishable from uniform (since $(m_A,m_B)$ is a
pseudorandom-transcript KE transcript), so revealing which side of the conditioning we chose does not introduce
additional structure visible to the observer.
Thus, replacing the honest sampling in round $t$ by $\text{Embed}_t(s,b)$ changes the transcript distribution by
at most $\negl(\lambda)$.
Applying this replacement sequentially over the $2\ell=\Theta(\lambda)$ embedding rounds and taking a union bound
yields that the overall transcript of $\calC$ is computationally indistinguishable from an honest transcript of
$\Pi$, proving undetectability.

For key agreement and key secrecy, observe that by construction the receiver recovers the embedded bits in each
eligible round with probability $1$ (by recomputing $\Ext(s,\cdot)$ on the observed message).
Hence both parties reconstruct the exact PR-KE transcript $(m_A,m_B)$, except with negligible probability.
Correctness and key indistinguishability then follow directly from the guarantees of $\mathsf{PR\text{-}KE}$
(Theorem~\ref{thm:utke}), since the adversary's view is exactly the PR-KE transcript (embedded into an
indistinguishable carrier) together with the public randomness $s$.
\end{proof}

\subsection{The Bundle Sampler}
\label{sec:bundle}

In this section, we will describe our key tool in going beyond the high-entropy regime,  the {\em bundle sampler}. The method of Section~\ref{subsec:ke_ind_pub}, and indeed all prior works in the no-shared-key setting, relied on using extraction to embed a bit into the covertext distribution. If the covertext distribution has min-entropy $c$, this results in a statistical distance of at least $2^{-c}$ between the covertext and the stegotext distributions. Thus, they are all inherently stuck at the $\Omega(\lambda)$, or at the very least, $\omega(\log \lambda)$,  min-entropy barrier if they are to achieve cryptographic indistinguishability. The problem with this is two-fold: (1) an LLM conversation could have a large number of messages, and therefore significant entropy overall, but could consist of short constant-length messages in each round; and (2) viewed a different way, the length of the individual messages in the conversation upper-bound the level of achievable security, which is far from ideal.

Our bundle sampler is the first big step in overcoming this barrier. The key idea is to achieve {\em perfect indistinguishability} of the covertext and stegotext distributions, while pushing the (necessarily non-negligible) security loss {\em into a correctness error}, which is easier to deal with (and indeed, we will show how to do so). 

Unlike distribution-preserving samplers appearing in different settings (e.g., in watermarking auto-regressive samplers~\cite{christ2023undetectable,christ2024pseudorandom}), the key strength of our bundle sampler is that it {\em only requires sampling access} to the covertext distribution $\mathcal{D}$, as opposed to the exact probabilities of the next bit or token; and in particular, we can use it for the complete and arbitrary message distribution.

We will rely on the independent public randomness assumption for the next theorem.

\begin{figure}
\begin{framed}
\normalsize 
\begin{center}
    \textbf{Our Bundle Sampler $(\Embed^{\mathcal{D}},\Decode)$}
\end{center}

\textbf{Params:} The independent public parameters $\pp$ consists of a seed $s$ for a strong $(c,\epsilon)$-extractor $\Ext$, and a uniformly random bit $r$.

\medskip\noindent 
$\mathbf{\Embed^{\mathcal{D}}(\pp,b)}$ is initialized with an entropy parameter $c$ and a security parameter $\lambda$. Given the public parameters $\pp$, sampling access to a distribution $\mathcal{D}$, and a bit $b$, $\Embed$ does the following:
\begin{enumerate}
    \item Let $L := \Omega(\lambda / \epsilon^2)$, and let $\beta := b \oplus r$ denote the masked bit. 
    \item Sample a bundle of independent candidates $x_1,\ldots,x_L \leftarrow \mathcal{D}$.
    \item Partition indices by extractor label using seed $s$:
    \[
    I_0 := \{i\in[L] : \Ext(s,x_i)=0\},\qquad I_1 := \{i\in[L] : \Ext(s,x_i)=1\}.
    \]
    \item Let $m:=\min\{|I_0|,|I_1|\}$, fix subsets $I'_0\subseteq I_0$ and $I'_1\subseteq I_1$ of size~$m$
    (e.g., the first $m$ indices in each set), and let $I_\star:=[L]\setminus(I'_0\cup I'_1)$.
    \item Choose an index $J$ by doing the following:
    \begin{enumerate}
        \item With probability $|I_\star|/L$, sample $J$ uniformly from $I_\star$.
        \item Otherwise, sample $J$ uniformly from $I'_{\beta}$.
    \end{enumerate}
    Output $x_J$.
\end{enumerate}

\medskip\noindent
$\mathbf{\Decode(\pp,x)}$: Given the public parameters $\pp$, and a sample $x$, output $\Ext(s,x) \oplus r$.

\end{framed}
\caption{Our Bundle Sampler Construction: a pair of algorithms $(\Embed,\Decode)$ such that $\Embed$, given sample access to a distribution $\mathcal{D}$ and a bit $b\in\{0,1\}$, output $x$ such that (a) the marginal distribution of $x$ is precisely $\mathcal{D}$; and (b) $\Decode(x) = b$ with probability that grows with the min-entropy of $\mathcal{D}$.}
\label{fig:bundle}
\end{figure}

\begin{theorem}\label{thm:bundle}
  There is a pair of algorithms $(\Embed, \Decode)$ such that 
  for any distribution $\mathcal{D}$ with $H_{\infty}(\mathcal{D}) \geq c$,  the algorithms initialized with the entropy bound $c$,\footnote{We will use $c$ for the entropy bound, to signify that it is an absolute constant.} a $(c,\epsilon)$-strong seeded extractor $\Ext:\{0,1\}^d \times \{0,1\}^n$, and sampling access to $\mathcal{D}$, achieve the following: 
  \begin{enumerate}
      \item \textbf{Perfect Distribution Matching:} 
      For any public parameter $\pp$, when trying to embed a uniformly random bit $b$, the distribution of the output of $\Embed$ is {\em exactly} the same as $\mathcal{D}$. That is, for any $\pp$, for $b \sim \{0,1\}$,  
      $$ 
       \Embed^{\mathcal{D}}(\pp,b) \equiv
       \mathcal{D}~.
      $$ 
      We emphasize that perfect distribution matching holds for every choice of the public parameter $\pp$, and only depends on the uniform randomness of the transmitted bit $b$.

      \item \textbf{The BSC Property:} 
      There is a $2^d\cdot \mathsf{poly}(n, \lambda)$-time algorithm $\EstBSC$ that, given all the samples from $\mathcal{D}$ used by $\Embed$,  outputs a number $p \in [0,1]$ such that for every bit $b\in \{0,1\}$,
      $$ p = \Pr[\Decode(\pp,\Embed^{\mathcal{D}}(\pp,b))\neq {b}] $$
      where the probability is over $\pp$ and the internal coin-tosses of $\Embed$. 
      Furthermore, with probability $1-2^{-\Omega(\lambda)}$ over the samples from $\mathcal{D}$ used by $\Embed$, $p \leq 2\epsilon$.  

     In other words, the error channel is a binary symmetric channel (BSC) with efficiently computable parameter $p$. 
     
     \item \textbf{Correctness of Decoding:} In particular, as a corollary of property (2), for every $b\in \{0,1\}$,
      $$ \Pr[\Decode(\pp,\Embed^{\mathcal{D}}(\pp,b))=b] \geq 1- 2 {\epsilon} -2^{-\Omega(\lambda)}$$
      where $\epsilon$ is the error of the extractor $\Ext$ and $\lambda$ is a security parameter.

      \item \textbf{Noiseless Feedback:}
      The $\Embed$ algorithm knows with certainty whether the $\Decode$ algorithm will correctly recover the transmitted bit.
  \end{enumerate}
\end{theorem}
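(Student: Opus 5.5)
The plan is to analyze $\Embed$ conditionally on the bundle $x_1,\ldots,x_L$ and the public parameters $\pp=(s,r)$. Everything then reduces to two facts. First, the selected index $J$ is uniform on $[L]$ once the transmitted bit is uniform. Second, decoding fails only when $J$ lands in the leftover set $I_\star$, and then only if the majority label disagrees with $\beta$.

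\textbf{Perfect distribution matching.} Fix any $\pp$ and any bundle. Since $b$ is uniform and $r$ is fixed, $\beta=b\oplus r$ is uniform. Consider an index $j\in I'_0\cup I'_1$. Step 5(b) is reached with probability $1-|I_\star|/L=2m/L$; then $\beta$ matches $j$'s label with probability $1/2$, and $j$ is chosen from a set of size $m$. So $j$ is selected with probability $\frac{2m}{L}\cdot\frac12\cdot\frac1m=\frac1L$. An index $j\in I_\star$ is selected with probability $\frac{|I_\star|}{L}\cdot\frac{1}{|I_\star|}=\frac1L$. Hence, conditioned on the entire bundle, $J$ is uniform on $[L]$. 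Therefore
\[
\Pr[x_J=y]=\sum_{j=1}^{L}\tfrac1L\Pr[x_j=y]=\mathcal{D}(y).
\]
The choice of $J$ depends on the bundle only through the labels, but uniformity holds pointwise given the bundle, so this is legitimate. The argument holds for every $\pp$.

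\textbf{BSC property and estimator.} $\Decode$ is correct iff $\Ext(s,x_J)=\beta$. If $J\in I'_\beta$, decoding is always correct. All elements of $I_\star$ carry the same (majority) label $\ell^\star_s$, so if $J\in I_\star$, decoding errs iff $\ell^\star_s\neq\beta$. Given the bundle and $\pp$, the error probability is therefore $\frac{|I_\star^{(s)}|}{L}\cdot\mathbf{1}[\ell^\star_s\neq b\oplus r]$. Averaging over the uniform bit $r$ gives
\[
p=\mathbb{E}_s\Big[\frac{|I^{(s)}_\star|}{2L}\Big],
\]
which is independent of $b$; this is exactly the BSC property. Writing $f_s$ for the empirical fraction of bundle elements with label $1$ under seed $s$, we have $|I^{(s)}_\star|/L=|2f_s-1|$, so $p=\mathbb{E}_s|f_s-\tfrac12|$. $\EstBSC$ enumerates all $2^d$ seeds, computes $f_s$ on the $L$ given samples, and averages, in time $2^d\cdot\poly(n,\lambda)$.

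\textbf{Bounding $p$ (main obstacle).} We must show $p\le 2\epsilon$ with probability $1-2^{-\Omega(\lambda)}$ over the samples. Let $\mu_s=\Pr_{x\sim\mathcal{D}}[\Ext(s,x)=1]$. Lemma~\ref{lem:avgbias} gives $\mathbb{E}_s|\mu_s-\frac12|\le\epsilon$. Then
\[
\mathbb{E}[p]\le \mathbb{E}_s|\mu_s-\tfrac12|+\mathbb{E}_s\mathbb{E}|f_s-\mu_s|\le \epsilon+\tfrac{1}{2\sqrt L}\le \tfrac32\epsilon,
\]
since $L=\Omega(\lambda/\epsilon^2)$. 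For concentration, note that $p$, viewed as a function of the $L$ independent samples, changes by at most $1/L$ when one sample changes. McDiarmid's bounded-differences inequality (Hoeffding's martingale form of Lemma~\ref{lem:hoeffding}) then gives $\Pr[p>\mathbb{E}p+\epsilon/2]\le e^{-2L(\epsilon/2)^2}=2^{-\Omega(\lambda)}$. This route avoids a union bound over the $2^d$ seeds. As a fallback, when $d=O(\log\lambda)$ one can apply Lemma~\ref{lem:hoeffding} per seed and take a union bound.

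\textbf{Correctness and feedback.} For each $b$, the overall error probability is $\mathbb{E}[p]\le 2\epsilon+2^{-\Omega(\lambda)}$: split on the good event $p\le2\epsilon$ and bound the bad event by its probability $2^{-\Omega(\lambda)}$. Noiseless feedback is immediate, since $\Embed$ knows $\pp$ and its output $x_J$ and can run $\Decode(\pp,x_J)$ itself.
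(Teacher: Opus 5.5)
Your proposal is correct and follows the paper's proof essentially step for step. The shared steps are: $J$ is uniform given the bundle; failure is possible only on $I_\star$; the mask $r$ symmetrizes the one-sided error to $p=\mathbb{E}_s\bigl[|I_\star(s,B)|/2L\bigr]$; and Lemma~\ref{lem:avgbias} plus concentration bounds $p$. Your bound on $p$ is actually more careful than the paper's: the paper asserts $\mathbb{E}_B\bigl[|I_\star(s,B)|/2L\bigr]=\gamma_s$, but in general only $\ge$ holds, and your extra $\frac{1}{2\sqrt L}$ sampling term together with the bounded-differences (McDiarmid) inequality close exactly that gap.
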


\begin{proof}
The bundle sampler $(\Embed,\Decode)$ is described in Figure~\ref{fig:bundle}. 

We first show the perfect distribution matching property. 
By construction, when~$b$ is a uniform bit,  the published index $J$ is uniform in $[L]$ \emph{regardless} of the partition
$(I'_0,I'_1,I_\star)$.
Since $x_1,\ldots,x_L$ are i.i.d.\ from $\mathcal{D}$ and $J$ is independent and uniform over $[L]$, the published message
$x_J$ is distributed \emph{exactly} as an honest sample from $\mathcal{D}$.

Now, we show the BSC property. Note that if we land in Step~6(b) of Figure~\ref{fig:bundle}, we have $$\Ext(s,x_J)=\beta = b\oplus r$$ by definition of $I'_{\beta}\subseteq I_{\beta}$, so $\Decode$ outputs $\beta \oplus r = b$ and succeeds.
Thus the only way decoding can fail is if we land up in Step~6(a), which happens with probability $|I_\star|/L$.  A corollary of this analysis is that the $\Embed$ algorithm knows with certainty whether the decoding algorithm will successfully reconstruct the transmitted bit.

Let $B = (x_1,\ldots,x_L)$ be the bundle drawn from $\mathcal{D}^L$ which, together with the extractor seed $s$, defines the sets $I_0 = I_0(s,B)$, $I_1 = I_1(s,B)$ and $I_{\star} = I_{\star}(s,B)$.
If $|I_0| = |I_1|$, we have $I_{\star} = \emptyset$, and the bit $\beta$ (and therefore $b$) is transmitted correctly.  If $|I_1| > |I_0|$, the bit $\beta = 1$ is always transmitted correctly; but $\beta = 0$ is flipped to a $1$ with probability $|I_{\star}|/L$.  Symmetrically, if $|I_0| > |I_1|$, the bit $\beta = 0$ is always transmitted correctly; but $\beta = 1$ incurs an error probability of $|I_{\star}|/L$. In other words, the error is one-sided. 
    
However, since the input bit $b$ is masked with the public random bit $r$ to obtain $\beta := b \oplus r$, the probability that $b=0$ is transmitted incorrectly and the probability that $b=1$ is transmitted incorrectly are the same, and they are both $$p = p(B) := \mathbb{E}_{s \sim U_d}\bigg[\frac{|I_{\star}(s,B)|}{2L} \bigg]~.$$ Here, the probability is over $s$ and $b$ in the public parameters as well as the internal coin-tosses that $\Embed$ uses to sample $J \in [L]$.

We now show an algorithm $\EstBSC$ that gets as input the bundle $B$, and computes $p$.   For each seed $s \in \{0,1\}^d$, $\EstBSC$ simply computes the quantity $|I_{\star}(s,B)|/2L$ and outputs the average of these values over all $s$. That is, it outputs 
$$ \frac{1}{2^d} \sum_{s\in \{0,1\}^d} \frac{|I_{\star}(s,B)|}{2L}$$
which, by definition, is exactly $p$. 

It remains to bound $p$. Define  
$$\gamma_s := \frac{1}{2} \cdot \big| \Pr_{x\sim\mathcal{D}}[\Ext(s,x)=1] - \Pr_{x\sim\mathcal{D}}[\Ext(s,x)=0]  \big|~.$$ 
Note that under the extractor guarantee with error $\varepsilon$, the bit $\Ext(s,x)$ for $x\sim \mathcal{D}$ and a uniform seed $s \sim U_d$
is $\varepsilon$-close to uniform. Moreover, the strong extractor guarantee tells us that 
$$ \mathbb{E}_{s\sim U_d, B \sim \mathcal{D}^L}\bigg[ \frac{|I_{\star}(s,B)|}{2L} \bigg] = \mathbb{E}_{s\sim U_d, B \sim \mathcal{D}^L}\bigg[ \frac{\big| |I_0(s,B)| - |I_1(s,B)| \big|}{2L} \bigg] =  \mathbb{E}_{s\sim U_d}[\gamma_s] \leq \epsilon~. $$
where the last equality is via Lemma~\ref{lem:avgbias}. 
A standard concentration bound
(e.g., a Hoeffding bound; see Lemma~\ref{lem:hoeffding}) implies that with probability at least $1-e^{-O(\epsilon^2 L)}$ over the choice of the bundle $B$, we have $$p = p(B) = \mathbb{E}_{s \sim U_d}\bigg[\frac{|I_{\star}(s,B)|}{2L}\bigg] \leq  2{\varepsilon}~.$$

Setting $L = \Omega(\lambda / \epsilon^2)$ gives $p \le 2{\epsilon}$ except with  probability $2^{-\Omega(\lambda)}$.

Finally, correctness of the protocol follows as a simple corollary of the BSC property. 
This finishes the proof.
\end{proof}

\subsubsection{Embedding Many Uniformly Random Bits into an LLM Conversation.}
\label{sec:embed_many}
We now informally describe how to use the bundling sampler to send a {\em sequence} of uniformly random bits $(b_1,\ldots,b_k)$, using an LLM conversation. This will shortly be made precise in the context of embedding the transcript of a pseudorandom-transcript key exchange protocol in Section~\ref{subsec:roadmap_short}. 
We will rely on the independent public randomness assumption and the recognizable entropy assumption. 

Recall that in an LLM conversation between a pair of agents $(A,B)$, in each odd round $t$ ($A$ speaks),
\[
({\prompt}_t,\stA^{(t)}) \leftarrow A\left(c_A,\stA^{(t-1)},\tau_{<t}\right),
\qquad
x^{(t)} \leftarrow \RModel_A(\prompt_t),
\]
and for each even round $t$ ($B$ speaks),
\[
({\prompt}_t,\stB^{(t)}) \leftarrow B\left(c_B,\stB^{(t-1)},\tau_{<t}\right),
\qquad
x^{(t)} \leftarrow \RModel_B(\prompt_t).
\]
Assume that Alice is the sender of the bits, and assume for simplicity that each next-message distribution $\RModel_A(\prompt_t)$ has min-entropy at least $c$, conditioned on the prior transcript $x^{(1)},\ldots,x^{(t-1)}$. (The recognizable min-entropy assumption for entropy level $c$ is in fact sufficient.) 
We also assume that the $\Embed$ algorithm can sample from the distribution $\RModel_A(\prompt_t)$.

We will use {\em independently random} extractor seeds $s_t$ and {\em independently random} mask bits $r_t$ for each $t\in [k]$.  For each odd round $t$, Alice computes $({\prompt}_t,\stA^{(t)})$ as above, and runs the $\Embed$ algorithm with entropy parameter $c$, extractor seed $s_t$, mask bit $r_t$, input bit $b_t$, and distribution $\mathcal{D} = \RModel_A(\prompt_t)$. The output $x^{(t)}$ of $\Embed$ is sent to Bob. Bob runs his end of the protocol exactly as he would in the honest protocol.

By Theorem~\ref{thm:bundle}, the resulting protocol transcript is distributed exactly like an honest conversation. Moreover, if we use optimal extractors (Theorem~\ref{lem:lhl}), we obtain the guarantee that each bit $b_t$ is independently transmitted through a BSC with error parameter $p = 2^{-O(c)}$. (We do not yet need to use the  $\EstBSC$ algorithm from Theorem~\ref{thm:bundle}, so we do not need to use the short-seed extractors of Theorem~\ref{lem:short-seed-extractor}).

\subsubsection{A Concrete Instantiation with $\Theta(\log \lambda)$ Entropy.}
\label{sec:logentropy}
As an immediate application of  the ideas developed in this section, we show a covert key exchange protocol under the existence of any pseudorandom-transcript two-message key exchange protocol (as in Theorem~\ref{thm:utke}); for simplicity we also require in this section that the protocol's transcript is {\em statistically uniform and not only computationally indistinguishable from uniform}. For example, this follows from the decisional Diffie-Hellman (DDH) assumption.
We assume that the LLM conversation has sufficiently many rounds in which the message distributions have min-entropy $\Theta(\log \lambda)$. We describe this protocol informally, as it will be superseded by the material in Section~\ref{subsec:roadmap_short}.

While this may sound like a small improvement over Section~\ref{subsec:ke_ind_pub}, it is in fact a large conceptual leap. Indeed, this will be the first time we are breaking out of the rejection sampling paradigm  used in prior works (including Section~\ref{subsec:ke_ind_pub}), which inherently requires min-entropy $\omega(\log \lambda)$. 
Undetectability in this scheme will be perfect. Technically, this protocol constitutes an immediate payoff of the ability of the bundling sampler to push a security loss into a correctness error. 

We first describe a scheme that succeeds with probability $1-1/\poly(\lambda)$:
Alice and Bob simply embed their PR-KE messages $m_A$ and $m_B$ using the protocol from Section~\ref{sec:embed_many}. 
Assuming that the honest LLM conversation $\Pi$ has entropy guarantee $\alpha\log \lambda$ for a sufficiently large constant $\alpha$, we know by a union bound that \emph{all} the bits of these messages are transmitted correctly with probability at least $1-1/\poly(\lambda)$. 
In this event, Alice and Bob exactly run the PR-KE protocol and hence agree on a shared secret key.  
The message distribution is always identical to that of the honest protocol $\Pi$ due to the use of the bundle sampler.
Moreover, \emph{regardless of whether the key exchange succeeded} the message decoded by Alice is uniform -- as even if Bob's intended message was noised, it was noised via a BSC -- and hence the key Alice computes is indistinguishable from uniform given the random transcript. In particular, the secrecy of Alice's key conditioned on the public transcript holds even in the failure event. (This last property may not hold for {\em pseudorandom}-transcript KE protocols in general, and hence our requirement of {\em uniform} transcripts.)

\paragraph{Amplifying Success Probability.}

If we wish to drive the failure probability down to $\negl(\lambda)$, a natural attempt is to have the parties repeat the entire procedure on fresh blocks of eligible rounds, with fresh public parameters for $\Embed$.  One would hope that since each repetition fails with probability at most $1/\poly(\lambda)$ and repetitions are independent conditioned on the honest transcript distribution, performing $\omega(1)$ repetitions makes the overall failure probability negligible. 

What is not yet clear, however, is how Alice and Bob covertly decide which of the many keys to output. To facilitate this, we use the CGZ watermarking scheme~\cite{christ2023undetectable} together with pairwise independent hash functions. In more detail, the CGZ watermarking scheme (see Section~\ref{sec:CGZ}) gives us a way to embed a watermark into a covertext distribution using a key $k$ such that (a) a detection algorithm with the same key $k$ succeeds in detecting the watermark; and (b)  if $k$ is random to the distinguisher, the watermark is computationally undetectable. 

With this, a natural idea is for Alice to send a watermark using her key $k_A$. Bob attempts to detect the watermark using $k_B$. If $k_B = k_A$, Bob detects this watermark and knows the key exchange succeeded; he can relay this information back by embedding a watermark with the same key in his following message.
Since $k_A$ is indistinguishable from uniform given the public transcript regardless of whether the key exchange succeeded or not, the watermark is undetectable to an external distinguisher. 

The above is still incomplete due to the following subtlety: CGZ does not tell us anything about detection with a key $k_B \neq k_A$ that might be correlated or similar to~$k_B$; in particular, detection may succeed even if $k_B \neq k_A$ as from Bob's point of view Alice's key is not guaranteed to appear uniform even if the key exchanged failed. 
Fortunately, this is easy to fix: instead of running the watermarking procedure with $k_A$ itself, Alice uses $h(k_A)$ where $h$ is a pairwise independent hash function that is drawn using the public parameters. Bob attempts to detect using $h(k_B)$. 
Now, if $k_A \neq k_B$, $h(k_A)$ and $h(k_B)$ are uniform and independent, and thus the watermarking detection will fail, as desired.  

To conclude, Alice and Bob may \emph{verify} if the key-exchange succeeded while guaranteeing undetectability in either case. Hence amplification by independent repetition becomes straightforward.

\subsection{Handling Constant Entropy Messages}\label{subsec:roadmap_short}

The protocol of subsection~\ref{subsec:ke_ind_pub} was analyzed under the simplifying \emph{high-entropy} assumption: we assumed that
the honest conversation~$\Pi$ contains sufficiently many recognizable eligible rounds, namely ones in which the honest
next-message distribution $x_t$ has min-entropy at least~$\Omega(\lambda)$ (equivalently, that message length is at
least~$\Omega(\lambda)$ under our entropy vs. length assumptions; see~\ref{sec:empiricalheur}).
Section~\ref{sec:logentropy} further relaxed this to $\Theta(\log \lambda)$ min-entropy. 
While these may be reasonable in practice, it is theoretically unsatisfying: $\Pi$ is given and out of our
control, so the entropy (and length) of its individual messages need not scale with the security parameter.
In particular, it is undesirable for the security parameter to be limited by the length of the honest messages rather than allowing it to be set arbitrarily large.

In this section and the remainder of the paper, we remove this assumption and show a covert key exchange protocol that continues to work even
when eligible messages have only \emph{constant} min-entropy, say $H_\infty(x_t)\ge c$ for some fixed constant $c$ independent of~$\lambda$.
In the high-entropy regime $|I_\star|/L$ was polynomially small in~$\lambda$ so we could assume, using a union bound, that no errors would affect the protocol; when $H_\infty(X_t)\ge c$ the error rate becomes a constant and cannot be ignored.
The price is that the construction can no longer use a PR-KE protocol (such as the Diffie-Hellman protocol used in Section~\ref{sec:logentropy}): instead, it requires a
key-exchange protocol that is simultaneously (i) pseudorandom-transcript and (ii) resilient to \emph{noise} in the transcript. We call this a pseudorandom noise-resilient key exchange (PNR-KE). 

Our notion of pseudorandom, noise-resilient key exchange (PNR-KE) is a strengthening of pseudorandom-transcript key exchange (PR-KE, see Theorem~\ref{thm:utke}) which requires that the key exchange protocol succeeds even when all the protocol messages are transmitted through a binary symmetric channel $\mathsf{BSC}_{p}$. 
See Section~\ref{sec:pnrke-def} for a formal definition.

\paragraph{Reduction to pseudorandom noise-resilient KE.}
We show how to ``compile'' an honest LLM conversation $\Pi=(A,B)$ and a pseudorandom noise-resilient key exchange protocol $\Gamma = (A_\Gamma, B_\Gamma)$ into a covert key exchange protocol $\widehat{\Pi} = (\widehat{A},\widehat{B})$. 

\begin{theorem}
\label{thm:dep-rand-ke}
Fix an honest protocol $\Pi = (A,B)$
for which there exists a 
large enough absolute constant $c>0$ such that sufficiently many rounds have message distributions with min-entropy at least 
$c$. 
Let $\mathsf{PNR\text{-}KE}$ be a pseudorandom noise-resilient key exchange protocol as in Definition~\ref{def:pnrke} resilient to a noise rate of $2^{-\Omega(c)}$, and
let $\Ext$ be a $(c,\varepsilon)$-strong extractor with $d=O(\log\lambda)$
as in Lemma~\ref{lem:short-seed-extractor}.
Then there is a covert key exchange protocol $\calC$ that satisfies:
\begin{enumerate}
    \item \textbf{Undetectability.} $\calC$ is undetectable with respect to $\Pi$ (Definition~\ref{def:covert_undet}).
    \item \textbf{Key agreement.} With probability $1-\negl(\lambda)$, Alice and Bob output the same key
    $k_{\mathsf{sess}}\in\{0,1\}^\lambda$.
    \item \textbf{Key secrecy.} Conditioned on success, $k_{\mathsf{sess}}$ is computationally indistinguishable
    from uniform given the public transcript $\tau$ and the public parameters $\pp$.
\end{enumerate}
\end{theorem}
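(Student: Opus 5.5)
The construction compiles $\mathsf{PNR\text{-}KE}$ bit-by-bit through the bundle sampler of Theorem~\ref{thm:bundle}, exactly as in Section~\ref{sec:embed_many}, and then verifies and amplifies as in Section~\ref{sec:logentropy}. Concretely, the public parameters $\pp$ contain, for every eligible round $t$, an independent extractor seed $s_t$ of length $d=O(\log\lambda)$ and an independent mask bit $r_t$. Whenever $\Gamma$ prescribes that the current speaker send a bit $b$, the speaker computes the honest prompt $\prompt_t$ as in $\Pi$. It then outputs $x^{(t)}\leftarrow\Embed^{\RModel(\prompt_t)}((s_t,r_t),b)$. The receiver reads $\Decode((s_t,r_t),x^{(t)})$ and feeds it to its $\Gamma$-algorithm as the (noisy) received bit. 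Non-eligible rounds, and all rounds after $\Gamma$ terminates, are played honestly. Since $H_\infty\ge c$ and the LHL or short-seed extractor has $\varepsilon=2^{-\Omega(c)}$, Theorem~\ref{thm:bundle} shows that each use of the covert channel is a $\mathsf{BSC}_{p_t}$ with $p_t\le 2\varepsilon$ except with probability $2^{-\Omega(\lambda)}$ per round. It also shows that $p_t$ is computable by $\EstBSC$ in $2^d\poly=\poly(\lambda)$ time, and that the sender learns whether the flip occurred. So the decoded bits are exactly what $\Gamma$'s parties would see over a BSC with known crossover and noiseless feedback, matching Definition~\ref{def:pnrke}. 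By a union bound, the $2^{-\Omega(\lambda)}$ failure events over the $\poly(\lambda)$ rounds are negligible.

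\textbf{Undetectability.} I would use a hybrid over the embedding rounds in order. In hybrid $H_j$, the first $j$ embedded bits are sent as in $\calC$ and the remaining rounds are honest. Perfect distribution matching makes each round's output exactly distributed as the honest message, but only when the embedded bit is \emph{uniform} and independent of the prior view. This is the main obstacle. The bits sent by $\Gamma$ are only pseudorandom, and they may depend adaptively on previously received noisy bits and on feedback. My plan is to first replace the sequence of transmitted $\Gamma$-bits by truly uniform independent bits, using the pseudorandom-transcript property of $\Gamma$. I would take that property to hold for the sent messages even jointly with the noise and feedback realized by the channel, as the definition should require. The reduction embeds the (real or uniform) bit stream into a simulated conversation; it can do so efficiently because it knows $\Pi$ completely (models, contexts) and can sample $\pp$ itself. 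Once the bits are uniform, Theorem~\ref{thm:bundle}(1) applied round by round, conditioned on the prefix, yields a transcript distributed identically to $\tau^\Pi$. The distinguisher is independent of $\pp$ here, because perfect matching holds for every $\pp$.

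\textbf{Key agreement and secrecy.} Noise-resilience of $\Gamma$ at rate $2\varepsilon=2^{-\Omega(c)}$ (choosing $c$ large enough) gives agreement with probability at least a fixed $1/\poly$ or better; if the definition only guarantees noticeable success, I would amplify. The amplification runs $\omega(1)$ (or $\lambda$) independent repetitions on fresh eligible blocks with fresh $\pp$. Each repetition is verified by Alice embedding a CGZ watermark (or steganographic tag, via the Section~\ref{sec:sharedkey} mechanism) under $h(k_A)$, where $h$ is a pairwise-independent hash from $\pp$, and Bob testing under $h(k_B)$ and confirming back. The first verified key becomes $k_{\mathsf{sess}}$. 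Undetectability of this verification phase requires $h(k_A)$ to look uniform given the transcript even when the run fails; I would derive this from key indistinguishability of $\Gamma$, which should hold regardless of noise. Secrecy conditioned on success then follows from $\Gamma$'s key indistinguishability together with the observation that the public view is a function of $\Gamma$'s transcript, $\pp$, and independently simulatable honest samples.
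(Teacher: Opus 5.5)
Your proposal is correct and follows essentially the same route as the paper. You compile the PNR-KE protocol bit by bit through the bundle sampler with fresh public parameters per embedding. Undetectability comes from replacing the transcript bits (pseudorandom even given the noise) by uniform ones and then invoking perfect distribution matching round by round, and agreement and secrecy come directly from the PNR-KE guarantees once the embedding is seen to realize the noisy channel.

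The one difference is your verification and amplification layer, which is unnecessary because Definition~\ref{def:pnrke} already requires correctness $1-\negl(\lambda)$, and the paper invokes that directly. Your handling of adaptivity in the undetectability hybrid, via the ``even given the channel noise $e$'' clause, is if anything more explicit than the paper's one-line hybrid.
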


\begin{proof}
We describe the covert key exchange scheme $\calC=(\widehat{A},\widehat{B})$. The setup algorithm samples the public parameters $\pp_{\mathsf{ke}}$ needed for the underlying PNR-KE protocol and a sequence of i.i.d. public parameters $\pp_{\mathsf{Embed},i}$ for the $\Embed$ algorithm from Theorem~\ref{thm:bundle}.

\medskip
\noindent\textbf{Message generation.}
We define $\widehat{A}$ and $\widehat{B}$ as follows.
Each maintains the honest state prescribed by $\Pi$ (so that it can compute the honest prompt and hence the
honest next-message distribution $x_t$ in each round), together with an additional local state for the covert
key exchange.

For each round of the PNR-KE protocol where Alice sends a message to Bob, $\widehat{A}$ computes the Alice message $\mu_A$ and embeds the bits of $\mu_A$ rounds. Symmetrically, for each round of the PNR-KE protocol where Bob sends a message $m_B$, $\widehat{B}$ embeds the bits of $\mu_B$
into the first $|\mu_B|$ eligible Bob rounds. In particular, let $m_A$ (resp. $m_B$) denote the concatenation of all $\mu_A$ (resp. $\mu_B$) so far.

Concretely, once Alice has computed the $j$-th bit of $m_A$ (that is, she has received all the necessary prior messages from Bob), Alice finds the next eligible round $t^A$, computes the honest prompt $\prompt_{t^A}$ as
prescribed by $\Pi$ (thus defining the probability distribution $x_{t^A}$) and outputs
\[
x^{(t^A)} \leftarrow \Embed_{t^A}(\pp_{\mathsf{Embed},j},\,m_A[j]).
\]
Note that the public parameters are never reused. Symmetrically, once Bob has computed the $j$-th bit of $m_B$, he finds an eligible  round $t^B$, and analogously outputs
\[
x^{(t^B)} \leftarrow \Embed_{t^B}(\pp_{\mathsf{Embed},j},\,m_B[j]).
\]
In all other rounds, the speaker samples
honestly: $x^{(t)}\leftarrow x_t$.

By the recognizable entropy assumption, the receiving party knows which messages were used to embed a bit. They run the $\Decode$ algorithm to recover this bit (with some error probability as in Theorem~\ref{thm:bundle}). 

\smallskip
After the transcript contains all the  embedded bits of the PNR-KE protocol, Alice uses the transcript and her private PNR-KE randomness to derive a key $k_A$ and outputs it as the session key. Bob symmetrically uses his transcript and his private PNR-KE randomness to derive a key $k_B$ and outputs it as the session key.

\medskip\noindent
We prove the three properties  in turn.

\smallskip
\noindent\textbf{Undetectability.}
Consider the transcript distribution induced by $\calC$ and compare it to an honest execution of $\Pi$.
In every round in which $\calC$ samples honestly, the per-round distribution is identical.
In an eligible round $t$, $\calC$ runs the $\Embed$ algorithm with a fresh public parameter $\pp_t$ and a bit of the underlying PNR-KE protocol.  If the bit were truly random, the perfect distribution matching guarantee of the $\Embed$ algorithm guarantees that the resulting distribution is identical to that of the honest protocol $\Pi$. By a simple hybrid argument, it follows that since the PNR-KE transcript is pseudorandom, the distribution generated by $\mathcal{C}$ is indistinguishable from that of the honest protocol $\Pi$. 

\medskip
\noindent\textbf{Key agreement.}
In each eligible round used to embed a bit, the embedding primitive outputs a message whose decoding 
matches the intended bit except with probability $2^{-O(c)}$, by Theorem~\ref{thm:bundle}. Furthermore, these probabilities are independent for each bit. In other words, the embedding process exactly simulates the noisy channel in the definition of the PNR-KE protocol. The correctness property of the PNR-KE protocol implies that key agreement is achieved with probability $1-\negl(\lambda)$. 

\medskip
\noindent\textbf{Key secrecy.}
Conditioned on the success event above, the derived key $k_{\mathsf{sess}}$ is exactly the output of the PNR-KE
protocol. By undetectability, the public transcript of $\mathcal{C}$ reveals no additional information beyond what is already
computationally captured by the (uniform-looking) PNR-KE transcript embedded in it.
Since PNR-KE is secure, $k_{\mathsf{sess}}$ is computationally indistinguishable from uniform
given the transcript.
\end{proof}

\begin{remark}[PNR-KE with Feedback is Sufficient]
\label{rem:weaker-than-full-pnrke}
For the reduction above, full PNR-KE is stronger than necessary.
It already suffices to assume a feedback version in which each transmitted bit may be flipped with an
arbitrary {\em sender-known} probability $p_t\le p$, and after transmission the sender learns whether the flip
occurred.  This is exactly the noise structure provided by Theorem~\ref{thm:dep-rand-ke}.
\end{remark}

\subsubsection{Our Main Theorems: Instantiations with PNR-KE protocols}

The rest of the paper is dedicated to the study and instantiation of pseudorandom noise-resilient key-exchange (PNR-KE) protocols required to instantiate Theorem~\ref{thm:dep-rand-ke}.
In Section~\ref{sec:prnr-ke}, we formally define the notion of PNR-KE and construct a protocol whose hardness is based on the hardness of the learning sparse parities with noise (LSPN) problem. 
This completes our first concrete instantiation of all the assumptions in Theorem~\ref{thm:dep-rand-ke}.
\begin{theorem}
    Assuming the LSPN conjecture with parameters $n,m=\Theta(n),k=\Theta(\log n),\eta = \Theta(1)$ (see Section~\ref{sec:lspn}), there exists a covert key exchange protocol with quasi-polynomial undetectability, assuming only constant entropy per eligible message.
\end{theorem}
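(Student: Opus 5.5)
The proof is a composition argument. First we build a PNR-KE protocol from LSPN that tolerates the constant BSC noise rate produced by the bundle sampler. Then we plug it into Theorem~\ref{thm:dep-rand-ke}.

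\textbf{The LSPN-based PNR-KE.} We follow the two-step Alekhnovich-style construction sketched in the technical overview.

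\emph{Step 1 (weak agreement).} Fix public parameters $\pp$ containing a random matrix $\mathbf{A}\in\{0,1\}^{m\times n}$ with $m=\Theta(n)$.
\begin{itemize}
\item Alice samples a $k$-sparse secret $\mathbf{s}$ and noise $\mathbf{e}$, and sends $\mathbf{A}\mathbf{s}+\mathbf{e}$.
\item Bob samples a $k$-sparse $\mathbf{r}$ and noise $\mathbf{e}'$, and sends $\mathbf{r}^\top\mathbf{A}+\mathbf{e}'^\top$.
\item Each party takes the inner product of its own sparse secret with the received (now additionally BSC-noised) message.
\end{itemize}
Because the secrets are $k$-sparse with $k=\Theta(\log n)$, each derived bit is an XOR of only $O(k)$ noisy terms. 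These come from the LSPN noise of rate $\eta$ and the channel noise of rate $p=2^{-\Omega(c)}$. By the Piling Up Lemma~\ref{lem:pilingup}, the two derived bits agree with probability $1/2+2^{-\Theta(k)}=1/2+1/\poly(\lambda)$.

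\emph{Step 2 (amplification).} Repeat Step 1 $\ell=\poly(\lambda)$ times with fresh secrets. This gives strings $\veck_A,\veck_B$ whose expected Hamming distance is at most $\ell(1/2-\delta)$, with $\delta=1/\poly$.
\begin{itemize}
\item Alice draws a uniform bit $\kappa_A$. She sends $\veck_A$ if $\kappa_A=1$, and a fresh uniform string otherwise.
\item Bob thresholds the Hamming distance between the received string and $\veck_B$ at $\ell(1/2-\delta')$, for a suitable $\delta'$ strictly between $0$ and $\delta(1-2p)$.
\end{itemize}
Two facts make Bob's test work.
\begin{itemize}
\item If Alice sent $\veck_A$, the channel noise shrinks the advantage to $\delta(1-2p)$ but keeps it noticeable. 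Combining Lemma~\ref{lem:simple} with Hoeffding (Lemma~\ref{lem:hoeffding}) shows the distance falls below the threshold except with probability $e^{-\Omega(\delta^2\ell)}$.
\item If Alice sent a uniform string, the received string is still uniform, and Hoeffding bounds the chance of falling below the threshold by the same quantity.
\end{itemize}
Choosing $\ell\gg\lambda/\delta^2$ gives agreement on $\kappa_A$ except with negligible probability. Running $\lambda$ independent copies yields a $\lambda$-bit key.

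\emph{Pseudorandomness and secrecy.} We argue by hybrids.
\begin{enumerate}
\item Replace every LSPN sample $\mathbf{A}\mathbf{s}+\mathbf{e}$ and $\mathbf{r}^\top\mathbf{A}+\mathbf{e}'$ by uniform, using the LSPN assumption; this requires $m=\Theta(n)$ samples per secret.
\item In that hybrid the derived strings $\veck_A$ are now pseudorandom.
\item A further hybrid shows that the inner product $\mathbf{r}^\top(\mathbf{A}\mathbf{s}+\mathbf{e})$ looks uniform given both messages. This is the standard Alekhnovich step: treat the row vector $\mathbf{r}^\top\mathbf{A}+\mathbf{e}'^\top$ appended to $\mathbf{A}$ as an extra LSPN sample.
\item Consequently Alice's final message looks uniform whatever the value of $\kappa_A$, so both the transcript and the key bits are pseudorandom.
\end{enumerate}
Security is quasi-polynomial because LSPN with $k=\Theta(\log n)$ is only conjectured hard against $n^{o(\log k)}$-time attacks.

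\textbf{Application to covert key exchange.} Instantiate Theorem~\ref{thm:dep-rand-ke} with this PNR-KE. Choose $c$ large enough that the bundle-sampler error rate $2\epsilon=2^{-\Omega(c)}$ is below the tolerated noise. Use the short-seed extractor of Lemma~\ref{lem:short-seed-extractor}, with the stated seed length $d=O(\log\lambda)$. Theorem~\ref{thm:bundle} shows that each embedded bit passes through an independent BSC whose rate is at most $2\epsilon$ except with probability $2^{-\Omega(\lambda)}$. If the PNR-KE definition needs an exact rate rather than a bound, the sender can add artificial noise to top the rate up to exactly $p$. Finally, undetectability inherits quasi-polynomial security from the hybrids above.

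\textbf{Main obstacle.} The main difficulty is the secrecy hybrid in the presence of channel noise. The key bit depends on the \emph{noised} received message, so we must check two things:
\begin{itemize}
\item that channel noise, which is independent of the secrets, can be absorbed into the LSPN noise distribution without breaking the reduction;
\item that the correlation parameter $\delta$ stays $1/\poly$ with $k=\Theta(\log n)$ and constant $\eta$, while LSPN remains plausibly hard.
\end{itemize}
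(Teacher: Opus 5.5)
Your architecture matches the paper's. The statement is obtained by plugging the LSPN-based PNR-KE of Figure~\ref{fig:simpleprot} into Theorem~\ref{thm:dep-rand-ke}. That PNR-KE uses sparse-secret Alekhnovich rounds followed by ``send $\veck_A$ or a uniform string'' amplification. Your correctness analysis and your handling of the bundle-sampler noise are fine.

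The gap is in the pseudorandomness and key-secrecy argument for the final amplification message. You have Alice transmit $\veck_A$ itself, whose $i$-th bit is $\tilde{\vecb}_i^\top\vecs_i$. The adversary knows $\tilde{\vecb}_i$. In Definition~\ref{def:pnrke} it sees both $\tau^*$ and the channel noise $e$. In the covert application it can decode the received bits itself from the public messages and $\pp$. So this bit is a \emph{noiseless} linear function of $\vecs_i$ with known coefficients. Appending that row to the matrix therefore does not give an LSPN sample, and your step~3 fails as stated. For the same reason, the channel noise cannot be ``absorbed into the LSPN noise'' as your last paragraph proposes: it is visible to the adversary. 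Two smaller problems sit alongside this:
\begin{itemize}
\item Your step~3 is stated for $\mathbf{r}^\top(\mathbf{A}\mathbf{s}+\mathbf{e})$, which is Bob's bit, not the one Alice sends.
\item Your step~1 cannot replace $\mathbf{A}\mathbf{s}_i+\mathbf{e}_i$ by uniform while $\veck_A$, which depends on $\mathbf{s}_i$, is still in the view, because the reduction would not know $\mathbf{s}_i$.
\end{itemize}

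The missing idea is the paper's masking-plus-relabeling. Alice sends $\veck_A\oplus\vectt'$ with $\vectt'\sim\Bern(\eta)^\ell$. This costs only a constant factor $1-2\eta$ in the correlation, which is absorbed into $2^{-\Theta(k)}$. The hybrids are then ordered as follows.
\begin{enumerate}
\item Replace only Bob's message $\vecr_i^\top\matC+\vecf_i^\top$ by a uniform $\vecv_i$.
\item Relabel $(\tilde{\vecf}_i,\vecv_i,\vecv_i+\tilde{\vecf}_i)\mapsto(\tilde{\vecf}_i,\vecv_i+\tilde{\vecf}_i,\vecv_i)$. Now the key bit is $\vecv_i^\top\vecs_i+t'_i$, with $\vecv_i$ uniform and independent of the visible noise.
\item Replace the pair $(\matC\vecs_i+\vece_i,\ \vecv_i^\top\vecs_i+t'_i)$ jointly by uniform, using LSPN with $n+1$ samples. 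Here $t'_i$ is exactly the noise of the extra sample.
\end{enumerate}

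Alternatively, you could keep your unmasked protocol and argue via Goldreich--Levin. A predictor for $\langle\vecv,\vecs\rangle$ given $(\matC,\matC\vecs+\vece,\vecv)$ yields a search-LSPN solver, and search hardness follows from decision hardness. Your proposal contains neither route.

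A minor point: quasi-polynomial security requires LSPN hardness against $n^{o(k)}$-time attacks; the best known attacks run in time $(n/k)^{O(k)}$ (Section~\ref{sec:lspn}). Hardness only against $n^{o(\log k)}$-time attacks, with $k=\Theta(\log n)$, would not give quasi-polynomial security.
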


\noindent
This is accomplished by using the protocol from Theorem~\ref{thm:dep-rand-ke} together with the PNR-KE protocol of Section~\ref{sec:simpleprot}, Theorem~\ref{thm:simpleprot}.

The disadvantage of the above is that the best known attacks on LSPN with our parameters run in quasi-polynomial time, and thus it gives only (assumed) quasi-polynomial hardness.
Then, in Section~\ref{sec:feedback_pnr-ke}, we make use of the noise structure properties of $\Embed$ (see Theorem~\ref{thm:bundle}) and design PNR-KE protocols assuming noiseless feedback and a  $\BSC(p_t)$ noise distribution with known parameters $p_t$ (that could be different per round).
Under these assumptions on the noise, we compile \emph{any} PR-KE protocol into a PNR-KE protocol. 

In Section~\ref{subsec:optimal_majority_signaling}, we give a simpler such compiler that transforms any PR-KE protocol of length~$T$ into a (feedback-)PNR-KE protocol of length~$\poly(T)$ whose error probability is~$1/\poly(\lambda)$.
If the protocol admits covert verification of the success of key exchange (for example, {\em uniform}-transcript key exchange protocols have this property), the observation of Subsection~\ref{sec:logentropy} shows us how to obtain negligible error probability. In particular, since the Diffie-Hellman protocol is a uniform-transcript key exchange protocol, we have the following theorem:

\begin{theorem}
    Under the DDH assumption, there exists a covert key exchange assuming only constant-entropy per eligible message.
\end{theorem}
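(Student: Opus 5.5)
Our plan is to obtain the theorem by composing three ingredients already in place: the bundle sampler (Theorem~\ref{thm:bundle}), the reduction of Theorem~\ref{thm:dep-rand-ke} in its feedback form (Remark~\ref{rem:weaker-than-full-pnrke}), and the majority-signaling compiler of Section~\ref{subsec:optimal_majority_signaling}, followed by the verification-and-repetition amplification of Section~\ref{sec:logentropy}.

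\textbf{Step 1: instantiate the underlying protocol.} We take the two-message Diffie--Hellman PR-KE of Theorem~\ref{thm:utke}, instantiated with a dense encoding so that under DDH the transcript $(m_A,m_B)$ is \emph{statistically uniform}. We then feed it to the feedback compiler. Each bit is sent over $\poly(\lambda)$ eligible rounds. Using the sender-known crossover probabilities $p_t\le 2\epsilon=2^{-\Omega(c)}$ supplied by $\EstBSC$ (this is why we use the short-seed extractor of Lemma~\ref{lem:short-seed-extractor}, so that $\EstBSC$ runs in polynomial time), together with the noiseless feedback, Alice samples each next bit so that the \emph{received} bits are exactly uniform while their majority equals the objective bit except with probability $O(1/\sqrt{n})$. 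With $n=\poly(\lambda)$ repetitions per bit and a union bound over the $\poly(\lambda)$ PR-KE bits, all bits decode correctly with probability $1-1/\poly(\lambda)$.

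\textbf{Step 2: plug into the reduction and check secrecy on failure.} We apply the reduction of Theorem~\ref{thm:dep-rand-ke} to get a single attempt that is perfectly undetectable. Since the bits fed to $\Embed$ are exactly uniform, perfect distribution matching applies round by round. The attempt also agrees on a key with probability $1-1/\poly(\lambda)$. We must additionally check that each party's key is pseudorandom given the transcript \emph{even on failure}. The decoded counterpart message is a function of exactly uniform received bits, so it is uniform and independent of the party's own DH exponent. The derived key is then $g^{ab'}$ for an independent uniform $b'$, which DDH renders pseudorandom.

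\textbf{Step 3: amplify to negligible error.} We run $\omega(1)$ (say $\lambda$) independent attempts on fresh blocks of eligible rounds with fresh public parameters. After attempt $i$, Alice CGZ-watermarks her next messages with $h(k_A^{(i)})$, where $h$ is a publicly sampled pairwise-independent hash. Bob tests with $h(k_B^{(i)})$ and echoes a watermark only on success, and both parties adopt the first verified key. Undetectability of this phase follows from CGZ undetectability, because each key is pseudorandom to the observer. Soundness holds because $k_A\neq k_B$ makes the two hashes independent uniform, so detection fails except with negligible probability. The failure probability is then $(1/\poly)^{\omega(1)}=\negl(\lambda)$.

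\textbf{Main obstacle.} The hardest point is reconciling CGZ watermarking with the constant-entropy regime. CGZ detection needs accumulated entropy $\Omega(\lambda)$, and the messages also need unique PRF inputs. We would handle this by spreading the watermark over $\Theta(\lambda/c)$ eligible rounds, with PRF inputs $(i,t,\sigma,j)$ made unique by the fresh key and the round index, as in Section~\ref{subsec:sharedkey-noprefix}. We would also need to verify that the hybrid argument composes: the uniform-transcript failure property must make the observer's view independent of which attempts succeeded.
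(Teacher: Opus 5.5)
\textbf{The undetectability step for the feedback compiler fails.} Your decomposition is the paper's: compile a uniform-transcript Diffie--Hellman protocol with the feedback majority compiler, embed it via Theorem~\ref{thm:dep-rand-ke}, and boost the $1-1/\poly(\lambda)$ success with the hashed-CGZ verification and repetition of Section~\ref{sec:logentropy}. You amplify after embedding rather than before, which is the more natural order since the CGZ check lives in the conversation. Your secrecy-on-failure and hashing arguments also match the paper's. The problem is the inference in Step~2: ``the bits fed to $\Embed$ are exactly uniform, so perfect distribution matching applies round by round.''

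\textbf{Why the inference fails.} Perfect matching in Theorem~\ref{thm:bundle} requires the embedded bit to be uniform and independent of the bundle, conditioned on everything the observer sees. The feedback compiler guarantees only that the decoded bits $Y_t=\Ext(s_t,x^{(t)})\oplus r_t$ are uniform. The bit actually fed to $\Embed$ is different: given the sender's state it has bias $(q^o_t-\tfrac12)/(1-2p_t)$, and it depends on the bundle through $p_t$. Meanwhile the observer sees $s_t$, $r_t$ (public randomness) and $x^{(t)}$. It therefore sees the label $c_t=\Ext(s_t,x^{(t)})$ and the mask $r_t$ separately, not just their XOR.

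\textbf{A concrete computation.} Take a single seed, and let $\ell_1=\Pr_{x\sim\mathcal D}[\Ext(s,x)=1]$ and $q^o_t=\Pr[Y_t=1\mid o,Y_{<t}]$. From Figure~\ref{fig:bundle}, $\Pr[c_t=1\mid B,r_t]=|I_1|/L+\tfrac{2m}{L}(a_t-\tfrac12)(-1)^{r_t}$. Since $1-2p_t=2m/L$, the second term is exactly $(q^o_t-\tfrac12)(-1)^{r_t}$, so $\Pr[c_t=1\mid r_t,o,\text{history}]=\ell_1+(q^o_t-\tfrac12)(-1)^{r_t}$.
\begin{itemize}
\item The likelihood ratio for $o$ from round $t$ therefore depends on $c_t$ through $\ell_{c_t}$, not only on $Y_t$.
\item Hence the observer's posterior $w^*_t$ differs from the $Y$-only posterior $w_t$ that the sender calibrates against.
\item At the next step, $\Pr[Y_{t+1}=1\mid\text{view}]-\tfrac12=(w^*_t-w_t)(q^+_{t+1}-q^-_{t+1})\neq 0$.
\end{itemize}
For example, take $\ell_1=0.6$, $p=0.15$ and the rule of Section~\ref{subsec:optimal_signaling}. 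After $Y_1=1$ the sender uses $w_1=0.85$. On the event $(c_1,r_1)=(0,1)$ the true posterior is $15/16$, and $\Pr[Y_2=1\mid c_1=0,r_1=1]\approx 0.536$. The majority rule behaves the same way qualitatively, since its two conditional laws also differ. A distinguisher that knows $\Pi$, and hence which label is biased, can aggregate such deviations over the $T$ signaling blocks.

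\textbf{The leak is not an artifact of this sampler.} Suppose $(s,r)$ is uniform and fixed before the sender acts, and the view depends on $o$ only through $Y$ with $\Pr[Y=1\mid o]=q$. Then $r$ given $s$ has law $q\,\ell_{s,r\oplus1}+(1-q)\,\ell_{s,r}$, which equals $\tfrac12$ only if $(2q-1)(\ell_{s,1}-\ell_{s,0})=0$. At constant entropy the labels are typically biased, so this fails whenever the sender actually signals.

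\textbf{The paper has the same gap.} Section~\ref{sec:feedback_pnr-ke} restricts the distinguisher to the noisy transcript and asserts this suffices for the application, so your proposal inherits the gap rather than introducing it. Closing it needs a new idea. One option is to calibrate against the observer's full posterior, but that involves the label biases $\ell_{s_t,c}$, which a sender with only sampling access can at best estimate. The other is a different embedding mechanism. The non-feedback reduction does not have this problem, because there the embedded bits do not depend on the bundle.
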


\begin{proof}
First, compile the Diffie-Hellman key exchange protocol into a PNR-KE protocol with feedback, using the majority compiler from Section~\ref{subsec:optimal_majority_signaling}. This gives us a protocol with correctness error $1/\poly(\lambda)$. Next, use the amplification procedure in Section~\ref{sec:logentropy} to reduce the error probability to $\negl(\lambda)$. Now, one can use this together with the covert key exchange protocol from Theorem~\ref{thm:dep-rand-ke} to finish the construction.
\end{proof}

Finally, in Section~\ref{subsec:optimal_signaling}, we show our strongest result which transforms \textbf{\em any PR-KE protocol} of length~$T$  into a (feedback-)PNR-KE protocol of length~$\poly(T)$ whose error probability is~$e^{-\Omega(\lambda)}$. Together with Theorem~\ref{thm:dep-rand-ke}, this concludes the proof of our main application.

\begin{theorem}
\label{thm:main}
Fix an honest protocol $\Pi = (A,B)$
for which there exists a 
large enough absolute constant $c>0$ such that sufficiently many rounds have message distributions with min-entropy at least 
$c$. 
Let $\mathsf{PR\text{-}KE}$ be a pseudorandom key exchange protocol, and
let $\Ext$ be a $(c,\varepsilon)$-strong extractor with $d=O(\log\lambda)$
as in Lemma~\ref{lem:short-seed-extractor}.
Then there is a covert key exchange protocol $\calC$ that satisfies:
\begin{enumerate}
    \item \textbf{Undetectability.} $\calC$ is undetectable with respect to $\Pi$ (Definition~\ref{def:covert_undet}).
    \item \textbf{Key agreement.} With probability $1-2^{-\Omega(\lambda)}$, Alice and Bob output the same key
    $k_{\mathsf{sess}}\in\{0,1\}^\lambda$.
    \item \textbf{Key secrecy.} Conditioned on success, $k_{\mathsf{sess}}$ is computationally indistinguishable
    from uniform given the public transcript $\tau$ and the public parameters $\pp$.
\end{enumerate}
\end{theorem}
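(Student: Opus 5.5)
Theorem~\ref{thm:main} reduces to Theorem~\ref{thm:dep-rand-ke} combined with Remark~\ref{rem:weaker-than-full-pnrke}. Start from the given PR-KE protocol, with transcript length $T=\poly(\lambda)$. Compile it into a feedback-PNR-KE protocol using the likelihood-score / posterior-matching compiler of Section~\ref{subsec:optimal_signaling}. Each PR-KE bit $o$, which is pseudorandom, is sent through the signaling game. Alice uses the sender-known crossover probabilities $p_t$ and the noiseless feedback to sample each next channel bit from the conditional law, pre-inverted through $\BSC(p_t)$. The goal is that the received string is \emph{exactly} uniform whenever $o$ is uniform, while Bob recovers $o$ with error $e^{-\Omega(n)}$ from $n$ channel uses. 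Choosing $n=\Theta(\lambda+\log T)$ per PR-KE bit and union bounding over the $T$ bits gives overall decoding error $e^{-\Omega(\lambda)}$, with only a logarithmic-type length blow-up.

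Next, plug this feedback-PNR-KE into the compiler of Theorem~\ref{thm:dep-rand-ke}, embedding each channel bit with the bundle sampler of Theorem~\ref{thm:bundle}, using fresh public parameters $\pp_{\mathsf{Embed},j}$ for every bit. Three properties of the bundle sampler are needed.
\begin{itemize}
\item The BSC property gives a per-bit flip probability $p_t\le 2\epsilon=2^{-\Omega(c)}$ except with probability $2^{-\Omega(\lambda)}$. The realized bad bundles can be folded into the key-agreement error.
\item $\EstBSC$ computes $p_t$ exactly in time $2^d\poly=\poly(\lambda)$, since $d=O(\log\lambda)$ by Lemma~\ref{lem:short-seed-extractor}. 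This supplies the "sender-known crossover probability" assumed by the feedback compiler.
\item The noiseless-feedback property tells the sender whether each bit was flipped.
\end{itemize}
One detail to check is that the flip events are independent across bits given the sender's choices. They are, because every embedding uses fresh public randomness.

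Undetectability follows by a hybrid argument.
\begin{enumerate}
\item Replace the PR-KE transcript by uniform bits.
\item By the compiler's exact-uniformity guarantee, the received bits are then exactly uniform. The bits fed to $\Embed$ are also uniform, since the received bit is the input bit XORed with independent noise of known parameter. I must verify this carefully.
\item Perfect distribution matching then gives exactly the honest transcript.
\end{enumerate}
Key secrecy reduces to PR-KE key indistinguishability. The covert transcript is an efficiently sampleable function of the PR-KE transcript, the public parameters, and the honest randomness, and the simulator can run the honest model because the distinguisher knows $\Pi$.

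The main obstacle is the exact-uniformity step inside the compiler. The sender's conditional next-bit probability must always stay within the realizable range $[p_t,1-p_t]$, so that pre-inversion through the BSC is possible. At the same time, the likelihood score must concentrate enough to give exponentially small error. My first attempt would relax the posterior-matching target: cap the posterior drift so that conditional probabilities are clipped to $[p,1-p]$ with $p=\max_t p_t$. I would then show that the log-likelihood ratio still drifts linearly, by a martingale / Hoeffding argument (Lemma~\ref{lem:hoeffding}), giving $e^{-\Omega(n)}$ error. Everything else is bookkeeping on top of Theorems~\ref{thm:bundle} and~\ref{thm:dep-rand-ke}.
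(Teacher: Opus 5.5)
Your decomposition matches the paper's. Its proof of Theorem~\ref{thm:main} is exactly Theorem~\ref{thm:optimal-signaling}, fed through Proposition~\ref{prop:compile-ut-to-feedback-pnr} (i.e.\ Theorem~\ref{thm:ut_to_feedback_pnr_log}), then into Theorem~\ref{thm:dep-rand-ke} via Remark~\ref{rem:weaker-than-full-pnrke}. However, the step you call the main obstacle is already solved there. Lemma~\ref{lem:optimal-control-basic} pins the less likely hypothesis at $p$ and solves $w_tq_t^++(1-w_t)q_t^-=\tfrac12$ for the other, which then lands in $[\tfrac12,1-p]$. The $\tfrac12\lambda(p)^n$ error bound comes from the Bhattacharyya recursion $A_{t+1}\le\lambda(p)A_t$; no martingale or Hoeffding argument is involved. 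Your clipping idea would in general break the fairness identity. Also, $p$ must be a fixed a priori bound such as $2\epsilon$, because $\max_t p_t$ is unknown when the early bits are sent. Finally, a bundle with $p_t>p$ costs exact realizability at that step, so it is a (negligible) undetectability loss, not only a key-agreement error.

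The genuine gap is the undetectability step: your justification is wrong, and the conclusion does not follow. The bit fed to $\Embed$ is not ``received bit XOR independent noise'', because $X_t$ is chosen using $p_t=p(B)$, which is computed from the very bundle that produces the flip. More importantly, the compiler's exact uniformity is proved for an observer who sees only $Y$. The covert distinguisher instead sees $(\pp_t,x^{(t)})$, and the bundle sampler leaks information about the flip beyond $Y_t$. Since $I_\star$ lies inside the larger label class, an output whose label $\Ext(s_t,x^{(t)})$ is the minority label for $s_t$ certifies that no flip occurred; all flip mass sits on majority-label outputs. For example, take a single seed with $\Pr_{x\sim\mathcal{D}}[\Ext(s,x)=0]=\tfrac12+\gamma$ and let $L\to\infty$, so $p_t=\gamma$. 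Then the likelihood of $o$ given $Y_t=+1$ together with the label is proportional to $q^o-\gamma$ (minority label) or $q^o+\gamma$ (majority label), not to $q^o$. So the observer's posterior on $o$ departs from the $w$ the sender uses to set the next targets. Because $q^+\neq q^-$, the next received bit then has bias $\Theta(\gamma)$ given the observer's view. In an honest run, by contrast, $Y_{t+1}=\Ext(s_{t+1},x^{(t+1)})\oplus r_{t+1}$ is uniform and independent of the past.

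With constant $c$, $\gamma$ is constant. A distinguisher that knows $\Pi$ and can sample the honest next-message distribution (e.g.\ when prompts are deterministic) can estimate each seed's majority label and so gains constant advantage. Perfect distribution matching applies only when the embedded bit is uniform and independent of the bundle and of whatever drives later embeddings, and that fails here. The paper's one-line composition rests on the same unproved step: Section~\ref{sec:feedback_pnr-ke} asserts that restricting the distinguisher to the noisy transcript suffices. So this is not an argument you missed. Closing the gap requires either a sampler whose public output is conditionally independent of the flip given the decoded bit, or a signaling rule that stays exactly fair under this side information.
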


\input{key-exchange-protocol.tex}

\section{Impossibility Results on PNR-KE Protocols}
\label{sec:prnr-ke-LB}

This section explains why the pseudorandom noise-resilient key-exchange primitive introduced in Section~\ref{sec:prnr-ke} is a nontrivial object.
We show that several more na\"{i}ve or more restrictive variants are in fact impossible, or admit strong attacks.
Taken together, these results help clarify both the necessity of the choices in our constructions and the limitations of what one can hope to achieve in related settings.

We begin with a short review of the Fourier-analytic facts used in the proofs.
We then rule out a public analog of pseudorandom codes, showing that without a shared secret such an approach cannot yield nontrivial resilience to constant noise.

Finally, we show that any \emph{non-interactive} PNR-KE protocol that results in even slightly (i.e. inverse-polynomially) correlated bits is vulnerable to a quasipolynomial-time attack, and that if such a protocol achieves key agreement with large (i.e. constant) probability, then it always admits a polynomial-time attack.
These results justify the interactive correlation-amplification step at the end of our protocol from Section~\ref{sec:prnr-ke}.
They also help explain why a quasipolynomial-time attack is possible there, since each individual iteration of that protocol is by itself a non-interactive step producing a nontrivial key correlation.

\subsection{Boolean Analysis Preliminaries}
We briefly record the Fourier-analytic facts used in this section (see~\cite{o2021analysis} for a detailed introduction).
Throughout this section we identify $\{0,1\}$ with $\{\pm1\}$ via $b\mapsto (-1)^b$ when convenient.
For $x\in\{0,1\}^m$ and $S\subseteq[m]$, let
\[
\chi_{S(x)} := (-1)^{\sum_{i\in S} x_i}.
\]
For a function $h:\{0,1\}^m\to\mathbb{R}$, its Fourier expansion is
\[
h(x)=\sum_{S\subseteq[m]} \widehat{h}(S)\chi_{S(x)},
\]
and Parseval's identity says
\[
\sum_{S\subseteq[m]} \widehat{h}(S)^2 = \mathbb{E}[h(U)^2],
\]
where $U\sim U_m$ is uniform on $\{0,1\}^m$.
More generally, for $h:\{0,1\}^{m_1}\times\{0,1\}^{m_2}\to\mathbb{R}$ we write
\[
h(x,y)=\sum_{S\subseteq[m_1],\,T\subseteq[m_2]} \widehat{h}(S,T)\chi_{S(x)}\chi_{T(y)}.
\]

For $p\in[0,1/2]$, let $\mathsf{BSC}_p(x)$ denote the distribution obtained from $x$ by flipping each bit independently with probability $p$, and write
\[
\rho := 1-2p.
\]
If $\widetilde{X}\sim \mathsf{BSC}_p(X)$ where $X$ is uniform on $\{0,1\}^m$, then
\[
\mathbb{E}[\chi_S(X)\chi_T(\widetilde{X})]
=
\begin{cases}
\rho^{|S|} & \text{if } S=T,\\
0 & \text{otherwise}.
\end{cases}
\]
We will use this identity repeatedly.

For a function $h:\{0,1\}^m\to\mathbb{R}$ and a parameter $\rho\in[-1,1]$, the \emph{$\rho$-noise stability} of $h$ is
\[
\Stab_\rho(h)
:=
\mathbb{E}\big[h(X)\,h(Y)\big],
\]
where $X\sim U_m$ and $Y\sim N_\rho(X)$ is a $\rho$-correlated copy of $X$ (equivalently, $Y$ is obtained from $X$ by applying $\mathsf{BSC}_p$ independently to each coordinate, where $\rho=1-2p$).
If
\[
h(x)=\sum_{S\subseteq[m]}\widehat{h}(S)\chi_{S(x)},
\]
then
\[
\Stab_\rho(h)=\sum_{S\subseteq[m]}\rho^{|S|}\widehat{h}(S)^2.
\]

\subsection{Public Pseudorandom Codes Do Not Exist}
\label{subsec:public-prc-impossible}

The combination of pseudorandomness and resilience to errors also appeared recently in the definition of \emph{pseudorandom codes} (PRCs)~\cite{christ2024pseudorandom, alrabiah2025ideal}; these are error-correcting codes for which a uniformly random codeword appears pseudorandom to a polynomial-time observer.
Crucially, a PRC is defined with a secret key shared between the encoder and the decoder -- and the codewords are pseudorandom only to a distinguisher that does not have this key.  

The purpose of a key exchange protocol is precisely to \emph{establish such a shared key}.

Nonetheless, one may ask if a public variant of a PRC could exist. This is a PRC in which the encoder and the decoder are public; they perhaps depend on public randomness, but not on any shared secret.
Defining pseudorandomness for such public PRCs requires care: if the encoded message can be recognized as meaningful, then the public decoder itself suffices to distinguish between a codeword and the encoding of a meaningful message. 
Thus, we would relax the requirement to the encoding of a \emph{random message} being pseudorandom rather than the encoding of \emph{any} message being pseudorandom. 
This relaxed definition would have sufficed for the construction of a PNR-KE protocol, as we could use it to encode the messages of any PR-KE protocol -- which are indeed random-looking. We call this notion a {\em public PRC.}\footnote{
Not to be confused with the notion of {\em publicly detectable} pseudorandom codes~\cite{DBLP:journals/cic/FairozeGJMMW24,lin2026unforgeablewatermarkslanguagemodels}, where a pair of different (but related) keys are used for generation and detection of codewords; a notion reminiscent of digital signature schemes.
}

We rule out the existence of a non-trivial public PRC by showing that even for~$1$-bit messages, the probability of decoding error must be~$\Omega(p)$ regardless of the codeword size, which is as bad as simply sending the message bit without any encoding.
Suppose an encoder uses only public randomness (equivalently, a public seed that is available to both
the decoder and the distinguisher) and, on input bit $i\in\{0,1\}$, outputs a
string in $\{0,1\}^{n}$.  Absorbing the public seed into the transmitted
string, such a scheme induces two distributions $D_{0,n},D_{1,n}$ over
$\{0,1\}^n$, one for each possible message bit, together with a polynomial-time
decoder $f_n:\{0,1\}^{n}\to\{0,1\}$.  
The observer who does not know the hidden bit sees the average distribution
\[
A_n := \frac12 D_{0,n} + \frac12 D_{1,n}.
\]
The theorem below shows that if $A_n$ is computationally indistinguishable from
uniform, then no public decoder can recover the hidden bit with non-trivial error after a constant
amount of binary symmetric noise.

\begin{theorem}
\label{thm:public-prc-impossible}
Fix a constant $p\in(0,1/2)$.
For each $n$, let $D_{0,n},D_{1,n}$ be distributions on $\{0,1\}^{n}$,
let $f_n:\{0,1\}^{n}\to\{0,1\}$ be a polynomial-time computable decoder, and let $\nu_p=\Ber(p)^n$.
Assume
\begin{enumerate}
    \item \emph{(Pseudorandomness)} The distribution 
    \[
    A_n := \frac12 D_{0,n} + \frac12 D_{1,n}
    \]
    is computationally indistinguishable from the uniform distribution $U_{n}$  on $\{0,1\}^{n}$; and
    \item \emph{(Decodability)} For each $i\in\{0,1\}$,
    \[
    \Pr_{x\sim D_{i,n},\,e\sim \nu_p}[f_n(x\oplus e)=i]\ge 1-\varepsilon_n.
    \]
\end{enumerate}
Then
\[
\varepsilon_n \ge \frac{p}{4} - o_n(1).
\]
In particular, $\varepsilon_n=\Omega(p)$, and hence $\varepsilon_n\not=o(1)$.
\end{theorem}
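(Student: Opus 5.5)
The plan is to build a single efficient test from the decoder $f_n$ and two independent noise vectors. Codewords must pass it with probability about $1-2\varepsilon_n$, while no Boolean function can pass it with probability much above $1-2p(1-p)$ on uniform inputs. Pseudorandomness then forces these two numbers to be close, which lower-bounds $\varepsilon_n$. Write $g=(-1)^{f_n}$ and $\rho=1-2p$.

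\textbf{Step 1 (the test).} Given a sample $x$, draw $e,e'\sim\nu_p$ independently and output $1$ iff $f_n(x\oplus e)=f_n(x\oplus e')$. This runs in polynomial time since $f_n$ does.

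\textbf{Step 2 (codewords pass).} Suppose $x\sim A_n$, i.e.\ $x\sim D_{i,n}$ for a uniform bit $i$. By Decodability and a union bound, both decodings equal $i$ except with probability $2\varepsilon_n$. So the test accepts with probability at least $1-2\varepsilon_n$.

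\textbf{Step 3 (uniform inputs fail).} For $x\sim U_n$, the string $y:=x\oplus e$ is uniform, and $x\oplus e'=y\oplus(e\oplus e')$ with $e\oplus e'\sim\mathsf{BSC}_{2p(1-p)}$. Hence $y$ and $x\oplus e'$ are $\rho^2$-correlated copies, and the acceptance probability equals $\tfrac12+\tfrac12\Stab_{\rho^2}(g)$. By the Fourier formula for stability,
\[
\Stab_{\rho^2}(g)\le \widehat g(\emptyset)^2+\rho^2\bigl(1-\widehat g(\emptyset)^2\bigr).
\]
So the acceptance probability is at most $1-2p(1-p)+2p(1-p)\widehat g(\emptyset)^2$, using $\tfrac12(1-\rho^2)=2p(1-p)$.

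\textbf{Step 4 (control the mean).} This is the step that needs care: a highly biased $f_n$ would be trivially noise-stable, so we must show $\widehat g(\emptyset)$ is small on uniform inputs. Under $A_n$, Decodability gives $\Pr[f_n(x\oplus e)=1]\in[\tfrac12-\varepsilon_n,\tfrac12+\varepsilon_n]$, i.e.\ $|\mathbb E[g(x\oplus e)]|\le 2\varepsilon_n$. The map $x\mapsto f_n(x\oplus e)$ with fresh $e$ is itself an efficient test, so pseudorandomness transfers this bound to uniform $x$ up to $o_n(1)$. For uniform $x$ the mean is exactly $\widehat g(\emptyset)$, so $|\widehat g(\emptyset)|\le 2\varepsilon_n+o_n(1)$.

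\textbf{Step 5 (conclude).} By pseudorandomness applied to the test of Step 1, its acceptance probabilities under $A_n$ and $U_n$ differ by $o_n(1)$. Combining Steps 2--4,
\[
1-2\varepsilon_n\le 1-2p(1-p)+2p(1-p)(2\varepsilon_n)^2+o_n(1).
\]
Since $p<1/2$ gives $1-p>1/2$, we have $2p(1-p)\ge p$. Rearranging yields $2\varepsilon_n+8p(1-p)\varepsilon_n^2\ge 2p(1-p)-o_n(1)$, and hence $4\varepsilon_n\ge 2p(1-p)-o_n(1)$ (trivially if $\varepsilon_n\ge 1/4$). This gives $\varepsilon_n\ge p/4-o_n(1)$.

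Both tests use the noise samples as internal coins. Non-uniform or probabilistic polynomial-size circuits are allowed by Definition~\ref{def:compind}, so the randomness can be hardwired or averaged.
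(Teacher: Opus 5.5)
Your proposal is correct and follows the paper's proof essentially step for step. Both use the same two efficiently testable consequences of decodability: agreement of $f_n$ on two independent noisy views, and near-zero bias of $f_n$ on one noisy view. Both transfer these to uniform inputs via pseudorandomness and then contradict them with the Fourier bound $\Stab_{\rho^2}(g)\le \widehat g(\emptyset)^2+\rho^2(1-\widehat g(\emptyset)^2)$. The differences are only cosmetic: your bias bound $2\varepsilon_n$ is a factor of two loose (in fact $|\widehat g(\emptyset)|\le \varepsilon_n+o_n(1)$), and the paper's final arithmetic gives the slightly stronger $\varepsilon_n\ge \tfrac38 p-o_n(1)$, but both yield the stated $p/4$ bound.
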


\begin{proof}
Fix $n$, and abbreviate
\[
A:=A_n,\qquad f:=f_n,\qquad \varepsilon:=\varepsilon_n.
\]
If $\varepsilon\ge 1/2$ there is nothing to prove, so we may assume
$\varepsilon<1/2$.
The proof isolates two efficiently testable consequences of the decodability
assumption.

\paragraph{Test 1: Agreement on two independent noisy views.}
Let $I\sim\Ber(1/2)$ be a random bit, then sample $X\sim D_{I,n}$, and independently sample
$E_1,E_2\sim \nu_p$.  
By assumption, for each $j\in\{1,2\}$, 
\[
\Pr[f(X\oplus E_j)=I] \ge 1-\varepsilon.
\]
Thus by a union bound,
\begin{equation}
\Pr[f(X\oplus E_1)\neq f(X\oplus E_2)]
\le
2\varepsilon.
\label{eq:test1-real}
\end{equation}
This quantity can be efficiently estimated from samples from $A$ (each with ``fresh'' samples of~$I$ and~$X$) and evaluations of~$f$.

\paragraph{Test 2: Small bias on one noisy view.}
Let
\[
M_{i}:=D_{i}\oplus\nu_p,
\qquad
M:=\frac12 M_{0}+\frac12 M_{1}=A\oplus\nu_p.
\]
Define the $\{\pm1\}$-valued decoder
\[
g(x):=
\begin{cases}
+1,& f(x)=1,\\
-1,& f(x)=0.
\end{cases}
\]
Also set
\[
\alpha_{0}:=\Pr_{Y\sim M_{0}}[f(Y)=1],
\qquad
\alpha_{1}:=\Pr_{Y\sim M_{1}}[f(Y)=0].
\]
By assumption, $\alpha_{0},\alpha_{1}\le \varepsilon$.  Moreover,
\[
\mathbb{E}_{x\sim M_{0}}[g(x)]=-1+2\alpha_{0},
\qquad
\mathbb{E}_{x\sim M_{1}}[g(x)]=1-2\alpha_{1},
\]
and therefore
\[
\mathbb{E}_{x\sim M}[g(x)]
=
\frac12(-1+2\alpha_{0})+\frac12(1-2\alpha_{1})
=
\alpha_{0}-\alpha_{1}.
\]
Hence
\begin{equation}
\bigl|\mathbb{E}_{x\sim M}[g(x)]\bigr|\le \varepsilon.
\label{eq:test2-real}
\end{equation}
This quantity is also efficiently estimable from samples from $A$ and evaluations of~$f$.

\medskip
We now consider the same two tests on samples from the uniform distribution~$U=U_n$ rather than from $A$. Closure of computational indistinguishability under efficient
randomized transformations implies that the tests should return answers that are negligibly-far if run on~$U$ instead.
On the other hand, we will show that these two conditions cannot hold at the same time for \emph{any} Boolean function~$f$ when tested over the uniform distribution~$U$. This is because no near-balanced function is more stable to noise than a dictator function (whose stability is only~$p$). We include a proof for completeness, but this fact is well known (see e.g.,~\cite{o2021analysis}).

Test~1 implies that
\begin{equation}
\Pr[f(U\oplus E_1)\neq f(U\oplus E_2)]
\le
2\varepsilon+o(1).
\label{eq:test1-unif}
\end{equation}

Defining
\[
\beta:=\mathbb{E}_{x\sim U}[g(x)],
\]
Test~2 yields
\begin{equation}
|\beta|\le \varepsilon+o(1).
\label{eq:test2-unif}
\end{equation}

At this point we compare the two transferred tests with the behavior of an
arbitrary Boolean function under true uniform input.  Let
\[
q:=2p(1-p),
\qquad
Z\sim \Ber(q)^{n}.
\]
Since $E_1\oplus E_2\sim \Ber(q)^{n}$ and the first coordinate is uniform,
the pair $(U\oplus E_1,U\oplus E_2)$ has exactly the same distribution as
$(U,U\oplus Z)$.  Hence \eqref{eq:test1-unif} may be rewritten as
\begin{equation}
\Pr[f(U)\neq f(U\oplus Z)]
\le
2\varepsilon+o(1).
\label{eq:disagreement-upper}
\end{equation}

On the other hand, for any $\{\pm1\}$-valued function $g$ and any $q\in[0,1/2]$,
\[
\Pr[g(U)\neq g(U\oplus Z)]
=
\frac{1-\Stab_{1-2q}(g)}{2}.
\]
by the definition of noise stability.

Applying this to and using the Fourier expansion of $g$, we get
\[
\Stab_{1-2q}(g)
=
\sum_{S\subseteq[n]}(1-2q)^{|S|}\widehat{g}(S)^2
\le
\beta^2+(1-2q)(1-\beta^2)
= 1 - 2q(1-\beta^2)
,
\]
since $\widehat{g}(\varnothing)=\beta$ and
$\sum_S \widehat{g}(S)^2=1$.  Therefore
\begin{equation}
\Pr[f(U)\neq f(U\oplus Z)]
\ge
q(1-\beta^2).
\label{eq:disagreement-lower}
\end{equation}

Combining \eqref{eq:disagreement-upper}, \eqref{eq:disagreement-lower}, and
\eqref{eq:test2-unif}, we obtain
\[
2\varepsilon+o(1)
\ge
q\bigl(1-(\varepsilon+o(1))^2\bigr).
\]
Since~$\varepsilon \leq \frac12$ this implies~$\varepsilon \geq \frac{3}{8}q-o(1)\geq\frac38p-o(1)$.

In particular, if~$\varepsilon <\frac{p}{4}$ then either the tests that are defined on~$A$ and are required for decodability fail, or they provide an efficient distinguisher between~$A$ and~$U$ as they must fail for any boolean function when tested over~$U$.
\end{proof}

The theorem rules out public pseudorandom codes as a mechanism for
reliably signaling a uniform bit through a constant-noise binary symmetric
channel.  The obstruction is inherent to the public setting: the same public
information that enables decoding also enables the distinguishing test above.
This is precisely why the argument does not apply to secret-key constructions,
where decoding may rely on auxiliary information unavailable to the distinguisher.

\subsection{A Quasi-polynomial Attack on any Non-Interactive PNR-KE Protocol}
\label{subsec:qpoly-attack}

We now show that any non-interactive PNR-KE protocol, that is, Alice sends one message and Bob sends one message non-adaptively, and these suffice for the parties to obtain a correlated bit, must have a quasipolynomial attack. 

While many classical key exchange protocols (such as Diffie-Hellman) are indeed non-interactive, our construction in Section~\ref{sec:prnr-ke} does require interaction for the final error-amplification of the different iterations. 
Nonetheless, each particular iteration is by itself a non-interactive PNR-KE protocol with correlation (agreement probability)~$\delta\geq 1/\poly(\lambda)$ which suffices for our lower bound to hold.
Thus, the presented lower bound holds not only for non-interactive protocols, but also for any protocol from which we can extract a non-interactive protocol that results merely in slightly correlated bits.
In fact, the attack we present uses~$O(\lambda^{O(\log 1/\delta)})$ samples and runs in proportional time -- thus, a non-interactive protocol that would have given a large (i.e. constant) correlation by itself, without repetitions, would actually always have a polynomial-time attack.

We state the theorem for a general correctness bias $\delta$.
Definition~\ref{def:pnrke} corresponds to the strong regime $\delta=1-\negl(\lambda)$.
For this subsection, we encode key bits in $\{\pm1\}$ rather than in $\{0,1\}$.

\begin{theorem}\label{thm:qpoly-attack}
Let $\Pi$ be a non-interactive PNR-KE protocol, except that we only assume the weaker correctness guarantee
\[
\Pr[K_A=K_B]\ge \frac12+\frac{\delta}{2}
\]
for some $\delta=\delta(\lambda)\in(0,1]$.
Assume that $p\in (0,\frac12)$ is a fixed constant and $\delta\ge 1/\mathsf{poly}(\lambda)$.
Then there is a distinguisher that, given
\[
N = O\!\left(\lambda^{O(\log \frac{1}{\delta})}\right)
\]
independent samples of $(m_A,m_B,K_A)$, breaks either the protocol's key indistinguishability or pseudorandom transcript property with constant advantage.
In other words, as long as~$\delta \geq 1/\poly(\lambda)$, the attack runs in quasipolynomial time and uses quasipolynomially many samples.
\end{theorem}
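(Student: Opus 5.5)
The plan is to exploit the fact that, in a non-interactive protocol, $K_A$ and $K_B$ are conditionally independent given the clean transcript. From this I will show that noise-resilience forces the key to correlate with a \emph{low-degree} function of the transcript $(x,y)=(m_A,m_B)$. Such a function has only $m^{O(d)}$ Fourier coefficients, so it can be learned from samples.

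\textbf{Setup.} Write $K_A=F(a,\,y\oplus e)$ and $K_B=G(b,\,x\oplus e')$. Here $a,b$ are the private coins, $x=m_A(a)$, $y=m_B(b)$, and $e,e'\sim\nu_p$ are independent noise vectors. Define
\[
\alpha(x,z):=\mathbb{E}[F(a,z)\mid m_A=x], \qquad \beta(w,y):=\mathbb{E}[G(b,w)\mid m_B=y].
\]
Conditioned on $(x,y)$, the variables $a$, $b$, $e$, $e'$ are independent. Hence, with $\rho=1-2p$,
\[
\delta\le\mathbb{E}[K_AK_B]=\mathbb{E}_{x,y}\bigl[(T^{(2)}_\rho\alpha)(x,y)\,(T^{(1)}_\rho\beta)(x,y)\bigr],
\]
where $T^{(1)}_\rho$ and $T^{(2)}_\rho$ denote the noise operator applied only to the first or only to the second argument.

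\textbf{The uniform case.} First assume $(x,y)$ is exactly uniform. Parseval, together with the identity $\mathbb{E}[\chi_S(X)\chi_T(\widetilde X)]=\rho^{|S|}\mathbf{1}_{S=T}$, turns the right-hand side into
\[
\sum_{S,T}\rho^{|T|}\widehat\alpha(S,T)\,\rho^{|S|}\widehat\beta(S,T).
\]
Let $c(S,T):=\mathbb{E}[K_A\chi_S(x)\chi_T(y)]=\rho^{|T|}\widehat\alpha(S,T)$. These are exactly the Fourier coefficients of $K_A$ viewed as a function of the public transcript.

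\textbf{Truncation to low degree.} Split the sum into terms with $|S|+|T|\le d$ and the rest, choosing $d=O(\log(1/\delta)/\log(1/\rho))$ so that $\rho^{d}\le\delta/2$.
\begin{itemize}
\item For the high part, use $|c(S,T)|\le|\widehat\alpha(S,T)|$. Cauchy--Schwarz, with $\sum\widehat\alpha^2,\sum\widehat\beta^2\le1$, bounds it by roughly $\rho^{d/2}$. This is the one place where noise on \emph{each} side separately is used.
\item For the low part, Cauchy--Schwarz bounds it by $\bigl(\sum_{|S|+|T|\le d}c(S,T)^2\bigr)^{1/2}$.
\end{itemize}
Adjusting constants in $d$, we get $\sum_{\mathrm{low}}c(S,T)^2\ge\delta^2/4$. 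The number of low-degree pairs is $M=(2\ell)^{O(d)}=\lambda^{O(\log 1/\delta)}$.

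\textbf{The distinguisher.} From $N$ samples $(x,y,K_A)$, form unbiased U-statistic estimates of each $c(S,T)^2$ (products over disjoint sample pairs) and sum them into a statistic $\Phi$.
\begin{itemize}
\item On real keys, $\mathbb{E}[\Phi]\ge\delta^2/4$.
\item If $K_A$ is replaced by a uniform independent $\pm1$ bit, every $c(S,T)=0$, so $\mathbb{E}[\Phi]=0$.
\end{itemize}
The variance of $\Phi$ is at most about $M/N$ up to lower-order terms. So $N=O(M/\delta^4)=\lambda^{O(\log 1/\delta)}$ samples give constant advantage, and the running time is $N\cdot M$.

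\textbf{Main obstacle.} The hard step is removing the assumption that $(x,y)$ is exactly uniform, since only pseudorandomness of the transcript is given. I would route everything through efficiently estimable quantities.
\begin{itemize}
\item The degree-$\le d$ Fourier masses of $(x,y)\mapsto K_A$ relative to the \emph{uniform} measure are empirical averages $\mathbb{E}[K_A\chi_S\chi_T]$. These are already what the attack computes.
\item The inequality $\delta\le\ldots$ used uniformity only through the orthogonality relations $\mathbb{E}[\chi_S(x)\chi_T(y)\chi_{S'}(x)\chi_{T'}(y)]=\mathbf{1}$, for characters of degree up to $2d$, plus the tail bound.
\end{itemize}
I would argue by cases. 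If the real transcript distribution has some degree-$\le 2d$ character whose bias is non-negligible, then estimating all $M^2$ such biases yields a quasipolynomial transcript distinguisher, which breaks pseudorandomness. Otherwise the orthogonality relations hold up to $\lambda^{-\omega(1)}$ error. For the high-degree tail I would avoid Fourier arguments over the true measure. Instead I would phrase the tail bound as an inequality on the noise-stability-type quantities $\mathbb{E}[K_A K_A']$ over pairs sharing $x$ with independently re-noised $y$. These are again testable on uniform versus real transcripts, in the style of the two tests in Theorem~\ref{thm:public-prc-impossible}.

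Making this hybrid precise is the delicate part. In particular, the attacker holds only $K_A$ samples, not $K_B$, so the bound must pass through $\alpha$ alone, which the Cauchy--Schwarz step above is designed to do.
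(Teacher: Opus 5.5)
For an exactly uniform transcript your argument is correct and is the paper's argument. Your $\alpha,\beta$ are its $F_0,G_0$, and your $c(S,T)=\rho^{|T|}\widehat\alpha(S,T)$ is its $\widehat F(S,T)$. Conditional independence gives $\delta\le\sum_{S,T}\rho^{|S|+|T|}\widehat\alpha(S,T)\widehat\beta(S,T)$. The low-degree energy of $c$ is then estimated from products of independent empirical correlations with $N=O(M/\delta^4)$. The differences are cosmetic:
you truncate by total degree and bound the tail in one step through the factor $\rho^{|S|+|T|}$. That step in fact gives $\rho^{d+1}$, better than the $\rho^{d/2}$ you state. The paper instead first derives $\sum_{S,T}\rho^{|S|}\widehat F(S,T)^2\ge\delta^2$ and then cuts the $x$- and $y$-degrees separately. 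You also use a U-statistic where the paper uses two sample blocks.

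The genuine gap, which you flag yourself, is the passage to merely pseudorandom transcripts, and the route you sketch would not close it. The paper handles this step only by a short assertion (if $\mathsf{Score}$ fails, uniformity must fail in a way low-degree tests detect), so you are not missing an argument it spells out. There are two problems with your sketch.

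First, uniformity is not used only through orthogonality of characters of degree at most $2d$. The bounds $\sum\widehat\alpha^2,\sum\widehat\beta^2\le 1$ and the tail estimate are $L_2(U)$ statements about $\alpha$ and $\beta$. These functions average over all coins consistent with a message and are not efficiently computable. Their degree-$>d$ parts can carry large $L_2(\mu)$ mass under the real transcript law $\mu$, even when every low-degree character is unbiased under $\mu$. Pseudorandomness says nothing directly about such functions.

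Second, the test on $\mathbb{E}[K_AK_A']$ ``on uniform versus real transcripts'' cannot be run as described. $K_A$ needs Alice's coins consistent with $x$, which cannot be produced for a uniform $x$, and challenge samples never repeat an $x$. Moreover, a noise-stability bound alone would not control the $L_2(\mu)$ mass of the high-degree part, which is what the truncation needs.

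One way to make your idea rigorous is to use only functions that are efficiently evaluable on every input. Since the protocol is public, the attacker can sample coins $a,b$ itself and use $\phi_a:=A(a,\cdot)$ and $\psi_b:=B(b,\cdot)$ in place of $\alpha(x,\cdot)$ and $\beta(\cdot,y)$. One still has $\delta\le\mathbb{E}_{a,b}[(T_\rho\phi_a)(m_B(b))\,(T_\rho\psi_b)(m_A(a))]$. Rerun your chain with degree truncations taken with respect to the uniform measure. Then each use of uniformity becomes a statement about a single message (law $\mu_A$ or $\mu_B$) that holds when that message is uniform. Each can be tested in quasipolynomial time by evaluating $A(a,\cdot)$ or $B(b,\cdot)$ on the challenge message and on uniform inputs. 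Concretely:

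(a) The quantities $\mathbb{E}_a\mathbb{E}_{y\sim\mu_B}[((T_\rho\phi_a)^{>d}(y))^2]$, $\mathbb{E}_a\mathbb{E}_{y\sim\mu_B}[((T_\rho\phi_a)^{\le d}(y))^2]$ and $\mathbb{E}_b\mathbb{E}_{x\sim\mu_A}[((T_\rho\psi_b)^{>d}(x))^2]$ control the two tails. They must be close to their uniform-input values, which are at most $\rho^{2(d+1)}$, $1$ and $\rho^{2(d+1)}$.

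(b) Let $u(S,T):=\mathbb{E}_a[\rho^{|T|}\widehat{\phi_a}(T)\chi_S(m_A(a))]$, which is estimable by self-simulation, and let $Q:=\sum_{|S|\le d}\bigl(\sum_{|T|\le d}u(S,T)\chi_T\bigr)^2$. Cauchy--Schwarz over $S$ against $\sum_S\widehat{\psi_b}(S)^2=1$ bounds the bi-degree-$\le d$ main term by $(\mathbb{E}_{y\sim\mu_B}Q(y))^{1/2}$, while $\mathbb{E}_{y\sim U}Q(y)=\sum_{S,T}u(S,T)^2$.

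(c) The quantity $\mathbb{E}_{a,e}[\chi_S(m_A(a))A(a,y\oplus e)\chi_T(y)]$ averages to $u(S,T)$ over uniform $y$ and to $c(S,T)$ over $y\sim\mu_B$.

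If any of these comparisons fails by $\delta^2/\mathrm{poly}(M)$, you have a quasipolynomial transcript distinguisher. Otherwise the chain yields $\sum_{|S|,|T|\le d}c(S,T)^2=\Omega(\delta^2)$, and your statistic breaks key secrecy.
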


We begin by briefly sketching the proof idea.
First, we observe that in any key-exchange protocol, the public transcript alone suffices to inefficiently reconstruct the shared key: consider the messages~$(X,Y)$ each side sends (that is, before any channel noise).
We can ``recover'' Alice's private randomness by sampling it from the distribution of~$R_A$ conditioned on Alice's sent message being~$X$; Then, we can run the honest protocol from Alice's point-of-view given~$R_A$ and Bob's message~$Y$.
Thus, there is an (inefficient) function~$F(X,Y)$ that estimates Alice's derived key from the public messages that both sides sent.
Symmetrically, there is also such a function~$G(X,Y)$ that estimates Bob's derived key.
Due to the definition of~$F$, it correlates with the published key~$K_A$, and due to the definition of a successful key-exchange,~$F$ and~$G$ are also correlated.
Then, we make our core observation: due to the noise-resilience of the protocol,~$F$ must be stable to noise in the~$Y$-coordinate and~$G$ must be stable to noise in the~$X$-coordinate.
Due to that noise stability, we can show that Fourier coefficients of~$F$ corresponding to sets with a ``large''~$Y$-coordinate contribute very little to its output and can be ignored. Similarly,~Fourier coefficients of~$G$ corresponding to sets with a ``large''~$X$-coordinate can also be ignored.
This results in~$K_A$ being correlated with a function of~$(X,Y)$ that depends \emph{only on small Fourier coefficients}. 
Then, we use the standard fact that all of these small Fourier coefficients can be estimated with a relatively small number of samples (and proportional time).

\begin{proof}
At first, assume moreover that the public transcript is \emph{exactly} uniform: for some message lengths $\ell_A,\ell_B=\Theta(\lambda)$,
\[
(m_A,m_B)\equiv (U_{\ell_A},U_{\ell_B}),
\]
where the two uniform strings are independent.
Let $R_A,R_B$ denote the private randomness of Alice and Bob, and denote the messages each party sends in the protocol as
\[
X := m_A(R_A)\in\{0,1\}^{\ell_A},
\qquad
Y := m_B(R_B)\in\{0,1\}^{\ell_B}.
\]
Let $Z_A\sim \mathsf{Ber}(p)^{\ell_A}$ and $Z_B\sim \mathsf{Ber}(p)^{\ell_B}$ denote the independent channel noises on Alice's and Bob's messages, respectively. Thus the parties derive their respective keys by some fixed functions
\[
K_A = A(R_A,Y\oplus Z_B),
\qquad
K_B = B(R_B,X\oplus Z_A).
\]

We will analyze the dependence of Alice's published key on the public transcript, rather than on her private randomness.
To this end, define the \emph{effective response functions} which give the best estimate of each key while ``forgetting'' the respective party's private randomness:
\[
F_0(x,y) := \mathbb{E}\!\left[K_A \mid X=x,\; Y\oplus Z_B = y\right],
\qquad
G_0(x,y) := \mathbb{E}\!\left[K_B \mid Y=y,\; X\oplus Z_A = x\right].
\]
Equivalently,
\[
F_0(x,y) = \mathbb{E}\!\left[A(R_A,y)\mid X=x\right],
\qquad
G_0(x,y) = \mathbb{E}\!\left[B(R_B,x)\mid Y=y\right].
\]
These are fixed functions from $\{0,1\}^{\ell_A}\times \{0,1\}^{\ell_B}$ to $[-1,1]$.

Averaging over the channel noise gives the conditional biases of the actual keys:
\[
F(x,y)
:=
\mathbb{E}[K_A\mid X=x,Y=y]
=
\mathbb{E}_{Z_B}\!\left[F_0(x,y\oplus Z_B)\right],
\]
and similarly
\[
G(x,y)
:=
\mathbb{E}[K_B\mid X=x,Y=y]
=
\mathbb{E}_{Z_A}\!\left[G_0(x\oplus Z_A,y)\right].
\]
Thus $F$ is obtained from $F_0$ by applying $\mathsf{BSC}_p$ noise only to the $y$-variable, and $G$ is obtained from $G_0$ by applying $\mathsf{BSC}_p$ noise only to the $x$-variable.

Since, conditioned on $(X,Y)$, the random variables $K_A$ and $K_B$ depend on disjoint private randomness and disjoint channel-noise variables, they are conditionally independent. Therefore,
\[
\mathbb{E}[K_AK_B]
=
\mathbb{E}[\mathbb{E}[K_AK_B\,|\,X,Y]]=
\mathbb{E}[\mathbb{E}[K_A\,|\,X,Y] \mathbb{E}[K_B\,|\,X,Y]]
=
\mathbb{E}[F(X,Y)G(X,Y)].
\]
By the assumed correctness bias,
\[
\mathbb{E}[F(X,Y)G(X,Y)]\ge \delta. \tag{1}
\]

Because $(X,Y)$ is exactly uniform on $\{0,1\}^{\ell_A+\ell_B}$, we may expand
\[
F(x,y)=\sum_{S,T}\widehat{F}(S,T)\chi_S(x)\chi_T(y),
\qquad
G(x,y)=\sum_{S,T}\widehat{G}(S,T)\chi_S(x)\chi_T(y),
\]
and so (1) becomes
\[
\sum_{S,T}\widehat{F}(S,T)\widehat{G}(S,T)\ge \delta. \tag{2}
\]

We next exploit the fact that only Alice's transcript-coordinate is noised inside $G$.
Since $G$ is obtained from $G_0$ by applying $\mathsf{BSC}_p$ noise only to the $x$-variable,
\[
\widehat{G}(S,T)=\rho^{|S|}\widehat{G_0}(S,T),
\qquad\text{where }\rho:=1-2p.
\]
Substituting this into (2) and applying Cauchy--Schwarz yields
\[
\delta
\le
\sum_{S,T}\rho^{|S|}\widehat{F}(S,T)\widehat{G_0}(S,T)
\le
\left(\sum_{S,T}\rho^{|S|}\widehat{F}(S,T)^2\right)^{1/2}
\left(\sum_{S,T}\rho^{|S|}\widehat{G_0}(S,T)^2\right)^{1/2}.
\]
Since $|G_0(x,y)|\le 1$ for all $x,y$, Parseval implies
\[
\sum_{S,T}\rho^{|S|}\widehat{G_0}(S,T)^2
\le
\sum_{S,T}\widehat{G_0}(S,T)^2
=
\mathbb{E}[G_0(X,Y)^2]
\le 1.
\]
Hence
\[
\sum_{S,T}\rho^{|S|}\widehat{F}(S,T)^2 \ge \delta^2. \tag{3}
\]

We now show that a noticeable portion of this Fourier mass is supported on low ``bi-degree''.
Let
\[
d:=\min\{t\ge 0:\rho^{t+1}\le \delta^2/4\},
\qquad
\mathcal{L}_d:=\{(S,T): |S|\le d,\ |T|\le d\}.
\]
Since $|F(x,y)|\le 1$ for all $x,y$, Parseval gives
\[
\sum_{S,T}\widehat{F}(S,T)^2 = \mathbb{E}[F(X,Y)^2]\le 1. \tag{4}
\]

First, (4) implies that the $x$-degree of the relevant mass is small:
\[
\delta^2
\le
\sum_{|S|\le d,T}\rho^{|S|}\widehat{F}(S,T)^2
+
\sum_{|S|>d,T}\rho^{|S|}\widehat{F}(S,T)^2
\le
\sum_{|S|\le d,T}\widehat{F}(S,T)^2+\rho^{d+1}.
\]
By the choice of $d$, $\rho^{d+1}\le \delta^2/4$, and therefore
\[
\sum_{|S|\le d,T}\widehat{F}(S,T)^2 \ge \frac{3\delta^2}{4}. \tag{5}
\]

Second, because $F$ is obtained from $F_0$ by applying $\mathsf{BSC}_p$ noise only to the $y$-variable, we have
\[
\widehat{F}(S,T)=\rho^{|T|}\widehat{F_0}(S,T).
\]
Therefore
\[
\sum_{S,\,|T|>d}\widehat{F}(S,T)^2
=
\sum_{S,\,|T|>d}\rho^{2|T|}\widehat{F_0}(S,T)^2
\le
\rho^{2(d+1)}\sum_{S,T}\widehat{F_0}(S,T)^2.
\]
Since $|F_0(x,y)|\le 1$ for all $x,y$, Parseval again gives
\[
\sum_{S,T}\widehat{F_0}(S,T)^2=\mathbb{E}[F_0(X,Y)^2]\le 1,
\]
and hence
\[
\sum_{S,\,|T|>d}\widehat{F}(S,T)^2
\le
\rho^{2(d+1)}
\le
\rho^{d+1}
\le
\frac{\delta^2}{4}. \tag{6}
\]
Combining (5) and (6), we conclude that
\[
\sum_{(S,T)\in\mathcal{L}_d}\widehat{F}(S,T)^2 \ge \frac{\delta^2}{2}. \tag{7}
\]

To summarize, due to the noise-stability of~$F$ and $G$, stemming from the noise-resilience of the protocol, we concluded that~$F$ must have a surprisingly high energy in its small Fourier coefficients (those corresponding to sets of size at most~$d$ in each coordinate).
We now estimate the low bi-degree Fourier energy of $F$ from samples, which only takes samples and time proportional to the number of such coefficients, which is~$\lambda^{O(\log 1/\delta)}$.
The rest of the proof explains why that number of samples suffice for such an estimate.

Take two independent blocks of $m$ (to be chosen later) i.i.d.\ samples $\left(X^{(r)}_i, Y^{(r)}_i, K^{(r)}_{A,i}\right)_{i\in[m],r\in[2]}$ from the distribution we are testing. We would use the two independent blocks to estimate the squares of Fourier coefficients.
For $r\in\{1,2\}$ and $(S,T)\in\mathcal{L}_d$, define
\[
\widehat{c}^{(r)}_{S,T}
:=
\frac{1}{m}\sum_{i=1}^m
K^{(r)}_{A,i}\,\chi_S\!\big(X^{(r)}_i\big)\chi_T\!\big(Y^{(r)}_i\big),
\]
this is the tested correlation between the published key~$K_A$ and the Fourier coefficient corresponding to~$(S,T)$ averaged across all samples in block~$r$.
Define the statistic
\[
\mathsf{Score}
:=
\sum_{(S,T)\in\mathcal{L}_d}
\widehat{c}^{(1)}_{S,T}\widehat{c}^{(2)}_{S,T}.
\]

Under the protocol's distribution, since
\[
\mathbb{E}[K_A\mid X=x,Y=y]=F(x,y),
\]
we have that 
\[
\mathbb{E}\!\left[\widehat{c}^{(r)}_{S,T}\right]=
\mathbb{E}\!\left[K_{A}\,\chi_S\!\big(X\big)\chi_T\!\big(Y\big)\right]=
\mathbb{E}\!\left[F(X,Y)\,\chi_S\!\big(X\big)\chi_T\!\big(Y\big)\right]=
\widehat{F}(S,T).
\]
Hence, by independence of the two sample blocks,
\[
\mathbb{E}[\mathsf{Score}]
=
\sum_{(S,T)\in\mathcal{L}_d}\widehat{F}(S,T)^2
\ge \frac{\delta^2}{2} \tag{8}
\]
where the inequality is simply (7).

Under the null distribution $(U_{\ell_A},U_{\ell_B},U_1)$, the key bit is uniform and independent of $(X,Y)$, and therefore
\[
\mathbb{E}_{\mathrm{null}}\!\left[\widehat{c}^{(r)}_{S,T}\right]=0
\]
for every $(S,T)\in\mathcal{L}_d$. Consequently,
\[
\mathbb{E}_{\mathrm{null}}[\mathsf{Score}]=0. \tag{9}
\]
It is thus left to analyze variance to determine how large~$m$ should be so we can distinguish between these two distant expectations with good probability using our empirical estimation.

Let
\[
M_d:=|\mathcal{L}_d|
=
\left(\sum_{i=0}^d \binom{\ell_A}{i}\right)
\left(\sum_{j=0}^d \binom{\ell_B}{j}\right).
\]
For each block $r$,
\[
\sum_{(S,T)\in\mathcal{L}_d}\text{Var}\!\left(\widehat{c}^{(r)}_{S,T}\right)
\le \frac{M_d}{m}, \tag{10}
\]
since each $\widehat{c}^{(r)}_{S,T}$ is an average of $m$ i.i.d.\ $\{\pm1\}$-valued random variables.

For each $(S,T)\in \mathcal{L}_d$ and each block $r\in\{1,2\}$, define the estimation error
\[
\Delta^{(r)}_{S,T}
:=
\widehat{c}^{(r)}_{S,T}-\widehat{F}(S,T).
\]
Expanding each factor in the score as $\widehat{F}(S,T)+\Delta^{(r)}_{S,T}$ gives
\[
\mathsf{Score}
=
\sum_{(S,T)\in\mathcal{L}_d}\widehat{F}(S,T)^2
+
\sum_{(S,T)\in\mathcal{L}_d}\widehat{F}(S,T)\Delta^{(1)}_{S,T}
+
\sum_{(S,T)\in\mathcal{L}_d}\widehat{F}(S,T)\Delta^{(2)}_{S,T}
+
\sum_{(S,T)\in\mathcal{L}_d}\Delta^{(1)}_{S,T}\Delta^{(2)}_{S,T}.
\]
Since
\[
\mathbb{E}\!\left[\widehat{c}^{(r)}_{S,T}\right]=\widehat{F}(S,T),
\]
the three sums after the main term all have mean zero.

We now bound the variance of $\mathsf{Score}$. Write
\[
A_r
:=
\sum_{(S,T)\in\mathcal{L}_d}\widehat{F}(S,T)\Delta^{(r)}_{S,T}
\qquad (r=1,2),
\]
and
\[
B
:=
\sum_{(S,T)\in\mathcal{L}_d}\Delta^{(1)}_{S,T}\Delta^{(2)}_{S,T}.
\]
Then
\[
\mathsf{Score}-\mathbb{E}[\mathsf{Score}] = A_1+A_2+B.
\]

Because the two sample blocks are independent, and each $\Delta^{(r)}_{S,T}$ has mean zero, all
cross-terms between $A_1,A_2,B$ vanish in expectation. Therefore
\[
\mathbb{E}\!\left[\bigl(\mathsf{Score}-\mathbb{E}[\mathsf{Score}]\bigr)^2\right]
=
\mathbb{E}[A_1^2]+\mathbb{E}[A_2^2]+\mathbb{E}[B^2].
\]

For the first two terms, by Cauchy--Schwarz,
\[
A_r^2
\le
\left(\sum_{(S,T)\in\mathcal{L}_d}\widehat{F}(S,T)^2\right)
\left(\sum_{(S,T)\in\mathcal{L}_d}\bigl(\Delta^{(r)}_{S,T}\bigr)^2\right),
\]
so after taking expectations and using~(10),
\[
\mathbb{E}[A_r^2]
\le
\left(\sum_{(S,T)\in\mathcal{L}_d}\widehat{F}(S,T)^2\right)\frac{M_d}{m}.
\]
For the last term, independence of the two blocks gives
\[
\mathbb{E}[B^2]
=
\sum_{\substack{(S,T)\in\mathcal{L}_d\\(S',T')\in\mathcal{L}_d}}
\mathbb{E}\!\left[\Delta^{(1)}_{S,T}\Delta^{(1)}_{S',T'}\right]
\mathbb{E}\!\left[\Delta^{(2)}_{S,T}\Delta^{(2)}_{S',T'}\right]
\le
\left(\sum_{(S,T)\in\mathcal{L}_d} \text{Var}(\widehat{c}^{(1)}_{S,T})\right)
\left(\sum_{(S,T)\in\mathcal{L}_d} \text{Var}(\widehat{c}^{(2)}_{S,T})\right)
\le
\left(\frac{M_d}{m}\right)^2,
\]
again by~(10). We conclude that
\[
\text{Var}(\mathsf{Score})
\le
2\left(\sum_{(S,T)\in\mathcal{L}_d}\widehat{F}(S,T)^2\right)\frac{M_d}{m}
+
\left(\frac{M_d}{m}\right)^2
\le
\frac{2M_d}{m}
+
\left(\frac{M_d}{m}\right)^2.
\tag{11}
\]

Under the null distribution, every coefficient has mean zero, so
\[
\mathbb{E}_{\mathrm{null}}[\widehat{c}^{(r)}_{S,T}] = 0
\qquad\text{for all }(S,T)\in\mathcal{L}_d.
\]
Thus the analogue of the expansion above has no main term, and
\[
\mathbb{E}_{\mathrm{null}}[\mathsf{Score}] = 0,
\qquad
\mathbb{E}_{\mathrm{null}}[\mathsf{Score}^2]
\le
\left(\frac{M_d}{m}\right)^2.
\tag{12}
\]

Finally choose
\[
m := C\frac{M_d}{\delta^4}
\]
for a sufficiently large universal constant $C$. Therefore, in both distributions, 
\[
\text{Var}(\mathsf{Score})
\le
2\frac{M_d}{m}+\left(\frac{M_d}{m}\right)^2
\le O\!\left(\delta^4\right).
\]
Using $\mathbb{E}[\mathsf{Score}]\ge \delta^2/2$ from~(8), Chebyshev's inequality gives
\[
\Pr_{\mathrm{protocol}}\!\left[\mathsf{Score}<\frac{\delta^2}{4}\right]
\le
\Pr_{\mathrm{protocol}}\!\left[\left|\mathsf{Score}-\mathbb{E}[\mathsf{Score}]\right|\ge \frac{\delta^2}{4}\right]
\le \frac16,
\]
for large enough $C$.
Similarly, under the null distribution, using $\mathbb{E}_{\mathrm{null}}[\mathsf{Score}]=0$ and~(12),
\[
\Pr_{\mathrm{null}}\!\left[\mathsf{Score}\ge \frac{\delta^2}{4}\right]
\le
\Pr_{\mathrm{null}}\!\left[\left|\mathsf{Score}\right|\ge \frac{\delta^2}{4}\right]
\le \frac16.
\]

Therefore the test that outputs ``protocol'' if and only if $\mathsf{Score}\ge \delta^2/4$ distinguishes the two distributions with probability at least $2/3$.

This proves the theorem in the exact-uniform transcript case.
It remains to bound the complexity.
Since $p\in(0,1/2)$ is a fixed constant, $\rho=1-2p$ is a fixed constant in $(0,1)$, and therefore
\[
d=O\!\left(\log \frac{1}{\delta}\right).
\]
Also, because $\ell_A,\ell_B=\Theta(\lambda)$,
\[
M_d
=
\left(\sum_{i=0}^d \binom{\ell_A}{i}\right)
\left(\sum_{j=0}^d \binom{\ell_B}{j}\right)
=
\lambda^{O(d)}
=
\lambda^{O(\log \frac{1}{\delta})}.
\]
Since $\delta\ge 1/\mathsf{poly}(\lambda)$, the extra factor $\delta^{-4}$ is polynomial in $\lambda$, and therefore
\[
N=O\!\left(\frac{M_d}{\delta^4}\right)
=
O\!\left(\lambda^{O(\log \frac{1}{\delta})}\right).
\]
This is quasipolynomial in $\lambda$, as claimed.

To remove the exact-uniformity assumption, note that the argument above constructs an explicit test: given
\[
N=O\!\left(\frac{M_d}{\delta^4}\right)
\]
independent samples and proportional time, it computes the low bi-degree statistic $\mathsf{Score}$ and compares it to the threshold $\delta^2/4$.

Now suppose the public transcript of the protocol is only \emph{computationally} pseudorandom rather than exactly uniform.
If the same test $\mathsf{Score}$ already distinguishes
\[
(m_A,m_B,K_A)
\qquad\text{from}\qquad
(m_A,m_B,U_1),
\]
then we have obtained a quasipolynomial attack on key secrecy, and we are done.
Otherwise, the only place where the proof above can fail is in assuming uniformity of $(m_A,m_B)$.
Consequently, if the key-secrecy attack above does not go through on the actual protocol distribution, then the transcript distribution itself must be detectable by the same family of low-degree tests.
Equivalently, there is a quasipolynomial-time distinguisher, using quasipolynomially many samples, that distinguishes
\[
(m_A,m_B)
\qquad\text{from}\qquad
(U_{\ell_A},U_{\ell_B}).
\]

We conclude that every non-interactive PNR-KE protocol admits a quasipolynomial attack of one of the following two forms:
either one can distinguish
\[
(m_A,m_B,K_A)
\qquad\text{from}\qquad
(m_A,m_B,U_1),
\]
thereby breaking key secrecy, or one can distinguish
\[
(m_A,m_B)
\qquad\text{from}\qquad
(U_{\ell_A},U_{\ell_B}),
\]
thereby breaking transcript pseudorandomness.
In both cases the distinguisher uses
\[
N = O\!\left(\lambda^{O(\log \frac{1}{\delta})}\right)
\]
samples and runs in time polynomial in $N$, which is quasipolynomial for $\delta\ge 1/\poly(\lambda)$.
\end{proof}

\section{PNR-KE with Noiseless Feedback}\label{sec:feedback_pnr-ke}
To apply the reduction of Section~\ref{subsec:roadmap_short}, it suffices to use a weaker variant of PNR-KE than Definition~\ref{def:pnrke}: we may assume that before we send a bit through the noisy channel we are given the exact probability~$p_t\leq p$ with which it will be flipped, and that after transmitting the bit we know (exactly) whether the bit was flipped; These are the guarantees provided by Theorem~\ref{thm:bundle}. Furthermore, we restrict the distinguisher to see only the noisy transcript, which suffices for our application as the noise is in fact added by each party ``internally'', as part of the bundling sampler, and not by the channel itself. 
Under these settings, we show that \emph{any} PR-KE protocol can be converted into a PNR-KE protocol.
In particular, for our main application we return to (conjectured) exponential hardness rather than the quasi-polynomial hardness of the PNR-KE protocol of Section~\ref{sec:prnr-ke}.

Our main result is the following reduction.

\begin{theorem}[PR-KE implies feedback PNR-KE with logarithmic blowup]
\label{thm:ut_to_feedback_pnr_log}
Fix~$p\in(0,\frac12)$.
Let~$\Pi$ be a PR-KE protocol whose noiseless public transcript has length~$T$ and whose correctness is at least~$1-\delta$.
For any~$\eta>0$, there exists a compiled key-exchange protocol~$\widetilde{\Pi}$ over~$\BSC(\le p)$ with noiseless feedback of flips, using~$O_{p}(T\log {\frac{T}{\eta}})$ channel uses, such that:

\begin{enumerate}
    \item The public transcript of~$\widetilde{\Pi}$ is computationally indistinguishable from uniform; and

    \item $\widetilde{\Pi}$ succeeds with probability at least~$1-\delta-\eta$.
\end{enumerate}

In particular, any PR-KE protocol of transcript length~$T$ and negligible failure probability yields a feedback variant of PNR-KE of any total length~$\omega(T\log T)$ and negligible failure probability.
\end{theorem}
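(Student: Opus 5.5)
The plan is to reduce the compilation to a single-bit \emph{signaling game} and then union-bound over the $T$ transcript bits. The compiler $\widetilde{\Pi}$ runs $\Pi$ verbatim. Whenever a party must send its next transcript bit $o\in\{0,1\}$, which is computationally indistinguishable from uniform given the prefix by the PR-KE property, it instead sends a block of $n=O_p(\log(T/\eta))$ channel bits. The block is generated by a feedback signaling scheme with two guarantees:
\begin{itemize}
\item[(a)] if $o$ is uniform, the \emph{received} (noisy) block is \emph{exactly} uniform on $\{0,1\}^n$;
\item[(b)] the receiver recovers $o$ from the received block with error at most $\eta/T$.
\end{itemize}
Given such a scheme, pseudorandomness follows by a hybrid over the $T$ blocks. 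Replacing the real bits of $\Pi$ by uniform bits is indistinguishable (the signaling encoder is an efficient randomized map applied to the pseudorandom transcript), and with uniform $o$ each received block is exactly uniform and independent. Correctness follows from a union bound: all $T$ bits are decoded correctly with probability at least $1-\eta$, and conditioned on that, $\Pi$ runs on its true transcript and fails with probability at most $\delta$.

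To build the signaling scheme, I would use feedback to make the sender know the received prefix $y_{<j}$, and aim for a target joint law in which $o$ is uniform and $y$ is uniform. The sender wants the next received bit to satisfy $y_j=1$ with probability $q_j:=\Pr[y_j=1\mid y_{<j},o]$ under this target law. Given the announced flip probability $p_j\le p$, it sends $x_j=1$ with probability $(q_j-p_j)/(1-2p_j)$. This is a valid probability exactly when $q_j\in[p_j,1-p_j]$, so it suffices to keep $q_j\in[p,1-p]$. Marginalizing over $o$ then gives $\Pr[y_j=1\mid y_{<j}]=1/2$, which yields (a). The knowledge of $p_j$ and the feedback of flips, both supplied by Theorem~\ref{thm:bundle}, are used exactly here.

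It remains to choose the target law, meaning a coupling of a uniform $o$ with a uniform $y\in\{0,1\}^n$ whose conditional next-bit probabilities stay in $[p,1-p]$ and in which $o$ is recoverable from $y$ with error $e^{-\Omega_p(n)}$. My first attempt would be the majority/random-walk version: condition on a terminal event about the walk $\sum_j(2y_j-1)$, relaxed so that the Doob-martingale conditional probabilities stay in $[p,1-p]$. This gives only $O(1/\sqrt n)$ error, however. To get exponential error I would use a posterior-matching (Horstein-style) law. Let $\pi_j=\Pr[o=1\mid y_{\le j}]$. Each step $y_j$ may multiply the likelihood ratio $\pi/(1-\pi)$ by at most $(1-p)/p$ in the favorable direction, and the law should drive $\pi_j$ toward $o$ at that rate. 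Because every step is biased by a constant, the log-likelihood ratio drifts linearly in $n$, so the maximum-posterior decoder errs with probability $e^{-\Omega_p(n)}$. Taking $n=C_p\log(T/\eta)$ then gives the claimed $O_p(T\log(T/\eta))$ channel uses.

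The main obstacle is this last construction: exhibiting a joint law that is exactly uniform in $y$ with all conditionals in $[p,1-p]$ while keeping the posterior drift constant. The danger is that once $\pi_j$ is near $0$ or $1$, the admissible bias $q_j-\tfrac12$ shrinks to keep the $y$-marginal uniform. I would handle this by setting $q_j=\tfrac12\pm(\tfrac12-p)\cdot\min\{1,\ \cdot\}$, scaled so that the mixture over $o$ with weights $\pi_j$ averages to $\tfrac12$. Then I would verify, by a supermartingale argument on $\log\frac{\pi_j}{1-\pi_j}$ with bounded increments and a Hoeffding/Azuma bound (Lemma~\ref{lem:hoeffding}), that the drift toward the correct hypothesis remains bounded below by a constant depending only on $p$.
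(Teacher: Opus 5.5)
Your proposal is correct and shares the paper's architecture. It uses the same single-bit signaling game, compiled block-by-block into $\Pi$ with a union bound over the $T$ bits, and the same channel inversion $(q-p_j)/(1-2p_j)$. Once the $\min\{1,\cdot\}$ is filled in with $\frac{1-\pi_j}{\pi_j}$ (resp.\ $\frac{\pi_j}{1-\pi_j}$), it is exactly the paper's control rule: pin the less likely hypothesis at $p$, and set the more likely one to $\frac{\frac12-(1-\pi_j)p}{\pi_j}$ so that the mixture is fair.

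The genuine difference is the error analysis. The paper uses no concentration at all. It bounds the MAP error by the Bhattacharyya sum $\frac12\sum_y\sqrt{P_+(y)P_-(y)}$. It then shows that each step multiplies this sum by $\sqrt{pq}+\sqrt{(1-p)(1-q)}\le\lambda(p)$, where $q\in[\frac12,1-p]$ is the non-pinned value, which gives $\frac12\lambda(p)^n$ directly.

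Your martingale route also works, and the ``danger'' you flag is harmless. One of $q^\pm$ always equals $p$ and the other is at least $\frac12$, so $|q^+-q^-|\ge\frac12-p$. Hence, by Pinsker, the conditional drift of the log-likelihood ratio under the true hypothesis is $D(\mathrm{Ber}(q^+)\,\|\,\mathrm{Ber}(q^-))\ge 2(\frac12-p)^2$. The increments are bounded by $\ln\frac{1-p}{p}$. Azuma--Hoeffding for martingale differences then yields error $\exp(-\Omega_p(n))$; note that Lemma~\ref{lem:hoeffding} as stated covers only independent summands, so you need the martingale version.

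What each approach buys:
\begin{itemize}
\item Your argument is robust to any rule that keeps $|q^+-q^-|$ bounded below, and it makes the constant-information-per-step intuition explicit. However, it gives a crude exponent of order $(\frac12-p)^4/\ln^2\frac{1-p}{p}$.
\item The paper's product bound is shorter, avoids martingale bookkeeping, and yields the sharper explicit rate $\lambda(p)$.
\end{itemize}
Either way, $n=O_p(\log(T/\eta))$ suffices.
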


We obtain this reduction by studying the following two-player (which we call \emph{sender} and \emph{receiver}) ``random-looking signaling'' game.
The sender draws $o\in\{\pm1\}$ uniformly.
The sender communicates over a binary symmetric channel $\text{BSC}(\leq p)$ for $n$ channel uses.
At time $t$, the sender transmits $X_i\in\{\pm1\}$ and the receiver observes
\[
Y_i = X_i Z_i\qquad \text{where} \qquad Z_i\in\{\pm1\}\quad\text{and}\quad \Pr[Z_i=-1]=p_t\leq p,
\]
with $(Z_i)_{i=1}^n$ independently distributed and independent of $o$, and assuming each~$p_t$ is known before sending the~$t$-th bit.
The receiver sees only $Y=(Y_1,\dots,Y_n)$.

\paragraph{Goal.}
Design a (possibly randomized) sender strategy such that
\begin{enumerate}
\item \textbf{Perfect indistinguishability:} $Y$ is exactly uniform on $\{\pm1\}^n$ unconditionally.
\item \textbf{Signaling $o$:} There exists an efficiently computable~$f:\{\pm1\}^n\rightarrow \{\pm 1\}$ such that $\Pr[f(Y)=o]$ is as large as possible.
\end{enumerate}

\paragraph{Feedback model.}
We assume \emph{noiseless feedback of flips}: after each channel use $t$, the sender learns $Z_t$ (equivalently, learns $Y_t$ since the sender knows $X_t$), while the receiver does not.
Thus the sender may adapt $X_{t+1}$ to $(o,X_1,\ldots,X_t,Z_1,\dots,Z_t)$.
Note that indistinguishability is defined only with respect to~$Y$; that is, the distinguisher does not see the feedback.

\subsection{Reduction from PNR-KE with Feedback to PR-KE}
We present a simple black-box reduction alluded to above.
Suppose we are given a PR-KE protocol~$\Pi$ whose noiseless public transcript consists of~$T$ bits, and suppose also that we have a signaling protocol for the game above that uses~$n$ channel uses in order to signal the objective bit~$o$.

In the compiled protocol, whenever~$\Pi$ instructs one of the parties to send a bit~$b\in\{\pm1\}$, the sender and receiver replace this single noiseless transmission by one independent execution of the signaling protocol with objective~$o=b$.
The receiver applies the decoder~$f$ to the resulting public string~$Y\in\{\pm1\}^n$ and uses the decoded bit~$\hat b=f(Y)$ as the received bit in its simulation of~$\Pi$.
Since before each channel use the sender is told the corresponding crossover probability~$p_t$, and after each use the sender learns whether that bit was flipped, the signaling protocol can indeed be executed in our feedback model.

The resulting protocol has total public transcript length~$nT$.
Its correctness loss is exactly the accumulated probability that at least one of the~$T$ simulated noiseless transmissions is decoded incorrectly, while pseudorandomness follows from the fact that replacing a uniform bit by one signaling block yields an exactly uniform public block.

\begin{proposition}[Black-box compilation using the signaling protocol]
\label{prop:compile-ut-to-feedback-pnr}
Assume there exists a length-$n$ signaling protocol for the game above with decoder~$f:\{\pm1\}^n\to\{\pm1\}$ such that:

\begin{enumerate}
    \item For every admissible choice of crossover probabilities~$p_1,\ldots,p_n\le p$ and every objective bit~$o\in\{\pm1\}$,
    \[
    \Pr[f(Y)\neq o]\le \varepsilon_{\mathrm{sig}},
    \]
    where the probability is over the internal randomness of the sender strategy and the channel noise; and

    \item If~$o$ is uniform in~$\{\pm1\}$, then the resulting public transcript~$Y$ is exactly uniform on~$\{\pm1\}^n$.
\end{enumerate}

Let~$\Pi$ be a PR-KE protocol whose noiseless public transcript is of length~$T$-bits, and suppose~$\Pi$ succeeds with probability at least~$1-\delta$.
Then there is a compiled key-exchange protocol~$\widetilde{\Pi}$ over~$\mathrm{BSC}(\le p)$ with noiseless feedback of flips, using~$nT$ channel uses, such that:

\begin{enumerate}
    \item \textbf{Correctness:}
    \[
    \Pr[\widetilde{\Pi}\text{ succeeds}] \ge 1-\delta-T\cdot\varepsilon_{\mathrm{sig}}.
    \]

    \item \textbf{Pseudorandomness:}
    The public transcript of~$\widetilde{\Pi}$ is computationally indistinguishable from uniform on~$\{\pm1\}^{nT}$.
\end{enumerate}
\end{proposition}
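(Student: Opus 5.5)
The compiled protocol~$\widetilde{\Pi}$ is the one described just above. Each noiseless bit $b$ of $\Pi$ is replaced by a fresh, independent run of the signaling protocol with objective $o=b$, and the receiver uses $\hat b=f(Y)$. The two claims are proved separately: correctness by a union bound, and pseudorandomness by a two-step hybrid that goes through an ``ideal'' world in which $\Pi$'s transcript is truly uniform.

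\textbf{Correctness.} Couple $\widetilde{\Pi}$ with a noiseless run of $\Pi$ using the same private randomness of Alice and Bob. Let $E$ be the event that all $T$ signaling blocks decode correctly. On $E$, every simulated received bit equals the bit sent. Hence, by induction over the $T$ positions, both parties' views coincide exactly with their views in the noiseless execution, including all adaptively chosen later bits, and so the derived keys are identical.

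For the bound on $\Pr[\bar E]$, consider the $j$-th block and condition on everything before it. The objective is then a fixed bit, and the crossover probabilities used in this block are admissible ($\le p$). Assumption~1 holds for every objective bit and every admissible choice of crossover probabilities, so this block decodes incorrectly with probability at most $\varepsilon_{\mathrm{sig}}$. A union bound gives $\Pr[\bar E]\le T\varepsilon_{\mathrm{sig}}$, and therefore
\[
\Pr[\widetilde{\Pi}\text{ succeeds}]\ge 1-\delta-T\varepsilon_{\mathrm{sig}}.
\]
One subtlety: the crossover probabilities may be chosen adaptively, for example by the bundle sampler. I would handle this by requiring that assumption~1 holds conditionally on any history, which is how it is stated, namely for every admissible choice.

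\textbf{Pseudorandomness.} Define a randomized map $\Phi$ that takes a string $m\in\{\pm1\}^T$ and replaces each bit $m_j$ by an independent signaling block with objective $m_j$. I argue in three steps.
\begin{enumerate}
\item By assumption~2 applied blockwise, $\Phi(U_T)$ is exactly uniform on $\{\pm1\}^{nT}$. Indeed, for uniform $m$ the blocks are independent and each is uniform.
\item Consider the transcript of $\widetilde{\Pi}$ on the event $E$. It equals $\Phi(m)$, where $m$ is the noiseless transcript of $\Pi$ obtained by the coupling. Since $m\approx_c U_T$ and $\Phi$ is efficient (the distinguisher does not see the feedback, so $\Phi$ needs only its own simulated noise), we get $\Phi(m)\approx_c\Phi(U_T)=U_{nT}$.
\item Off the event $E$, the parties may continue with bits that deviate from $m$. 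The transcript is then $\Phi$ applied to some other string $m'$ that is a function of $m$ and the decoding errors. This breaks the clean reduction.
\end{enumerate}

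\textbf{Main obstacle.} Step~3 is where I expect the real difficulty. My first approach is to observe that the failure event has probability at most $T\varepsilon_{\mathrm{sig}}$. If that is negligible, statistical distance absorbs it. If $\varepsilon_{\mathrm{sig}}$ is not negligible, I would strengthen the argument by treating the compiled execution as running $\Pi$ over a noisy bit channel. Here I would again require that $\Pi$'s transcript remain pseudorandom when its received bits are perturbed, as in the uniform-transcript requirement of Subsection~\ref{sec:logentropy}.

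Alternatively, I would reduce directly: a distinguisher that simulates the whole compiled interaction, including the decoding errors, given oracle access to $\Pi$'s next-message functions. I would fall back to the negligible-error case, which is the only one used in Theorem~\ref{thm:ut_to_feedback_pnr_log}, if the general case resists.
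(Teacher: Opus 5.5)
Your correctness argument is the paper's: a union bound over the $T$ blocks, plus the observation that when every block decodes correctly the parties exactly simulate $\Pi$. Your Steps~1--2 are exactly the paper's pseudorandomness proof: $\Phi(U_T)$ is exactly uniform, and applying the efficient map $\Phi$ to $\Pi$'s pseudorandom noiseless transcript preserves indistinguishability.

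The paper never confronts your Step~3. It silently identifies the compiled public transcript with $\Phi(m)$, where $m$ is the noiseless transcript of $\Pi$. That identification is legitimate in only two cases. The first is when each party's sent bits depend on its own randomness and $\pp$ alone, as for the non-adaptive two-message protocols of Theorem~\ref{thm:utke} such as Diffie--Hellman; there decoding errors affect only the derived keys. The second is when $T\varepsilon_{\mathrm{sig}}=\negl(\lambda)$, where your coupling gives statistical closeness to $\Phi(m)$.

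The obstacle you found is real. For interactive $\Pi$ with non-negligible signaling error, the pseudorandomness claim fails in general, so the ``oracle access to next-message functions'' route cannot succeed. Here is a counterexample.
\begin{itemize}
\item Extend Diffie--Hellman by a message in which Alice sends $F_{k'_A}(0)$, where $F$ is a PRF keyed by a value $k'_A$ derived from her DH key independently of the output key.
\item Then Bob sends a fresh uniform $\lambda$-bit string if what he decoded equals $F_{k'_B}(0)$, and $0^\lambda$ otherwise.
\item Noiselessly this is a perfectly correct PR-KE, since the check always passes.
\item After compilation, any decoding error in Alice's bits makes Bob's check fail with overwhelming probability. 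His last $\lambda$ blocks then all carry the same objective.
\item A distinguisher applies $f$ blockwise and tests for a nearly constant string. On uniform input each $f$-value has bias at most $\varepsilon_{\mathrm{sig}}/2$, so this test almost never fires there.
\end{itemize}
The resulting advantage is of the order of the actual per-bit error. For majority signaling over $\BSC(p)$ this is $\Theta(n^{-1/2})$ by Remark~\ref{rem:majority-optimal}.

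Your fallback is therefore the correct fix: assume $T\varepsilon_{\mathrm{sig}}=\negl(\lambda)$, or assume that $\Pi$'s messages do not depend on received bits. This suffices for Theorem~\ref{thm:main} (error $e^{-\Omega(\lambda)}$) and for the DDH instantiation. It does not suffice for Corollary~\ref{cor:ut-to-feedback-pnr-poly}, nor for Theorem~\ref{thm:ut_to_feedback_pnr_log} with non-negligible $\eta$, when these are applied to arbitrary interactive PR-KE protocols.
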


\begin{proof}
The compiled protocol simply replaces each noiseless transcript bit~$U_j$ of~$\Pi$ by an independent execution of the signaling protocol with objective~$o=U_j$, and the receiver decodes the transmitted bit as~$\hat U_j=f(Y^{(j)})$.

For correctness, let~$E_j$ be the event that the~$j$-th simulated transmission is decoded incorrectly, i.e.\ $\hat U_j\neq U_j$.
By assumption, $\Pr[E_j]\le \varepsilon_{\mathrm{sig}}$ for every~$j\in[T]$.
Hence by a union bound,
\[
\Pr\Big[\bigcup_{j=1}^T E_j\Big]\le T\varepsilon_{\mathrm{sig}}.
\]
On the complementary event, every simulated transmission is decoded correctly, so the parties perfectly simulate the original protocol~$\Pi$.
Therefore the success probability drops by at most~$T\varepsilon_{\mathrm{sig}}$, giving
\[
\Pr[\widetilde{\Pi}\text{ succeeds}] \ge 1-\delta-T\varepsilon_{\mathrm{sig}}.
\]

For pseudorandomness, first consider the case where the original noiseless transcript~$U$ is truly uniform on~$\{\pm1\}^T$.
Then each block objective~$o=U_j$ is a fresh uniform bit, and by the defining property of the signaling protocol the corresponding public block~$Y^{(j)}$ is exactly uniform on~$\{\pm1\}^n$.
Since the signaling executions are independent across~$j$, the full compiled public transcript is exactly uniform on~$\{\pm1\}^{nT}$.

Now return to the actual protocol~$\Pi$.
Since~$\Pi$ has pseudorandom transcripts, its noiseless transcript~$U$ is computationally indistinguishable from uniform.
Applying the efficient randomized transformation that replaces each bit~$U_j$ by one signaling block preserves computational indistinguishability.
As the image of a truly uniform transcript under this transformation is exactly uniform on~$\{\pm1\}^{nT}$, the compiled public transcript is computationally indistinguishable from uniform as well.
\end{proof}

In essence, we can think of the compiler above as adding~$\varepsilon_{sig}$ probability of error to every transmitted bit.

\subsection{Warm-Up and Motivation}

Assume throughout that~$n$ is odd.
A natural choice for the decoding function~$f$ is the majority function
\[
\Maj(Y)=\sgn\!\left(\sum_t Y_t\right),
\]
since it is balanced, relatively stable to noise, and has low individual coordinate influences.
If there were no channel noise ($p=0$), the sender could simply sample~$Y$ uniformly conditioned on~$\Maj(Y)=o$; since~$o$ is uniform and, for odd~$n$, the sets
\[
\{Y \mid \Maj(Y)=+1\}
\qquad\text{and}\qquad
\{Y \mid \Maj(Y)=-1\}
\]
have equal size, the unconditional distribution of~$Y$ is uniform, and the signaling is perfect.

Now suppose that $p>0$, but that the sender does \emph{not} use feedback.
The sender can still sample~$X$ uniformly conditioned on~$\Maj(X)=o$ and transmit it.
Then~$X$ is marginally uniform, and hence so is the noisy~$Y=(X_1Z_1,\ldots,X_nZ_n)$; thus condition~(1) holds perfectly.
Condition~(2) is then exactly the noise stability of majority, which converges to a constant depending only on~$p$.
In particular, this error does not vanish as~$n$ grows.
More generally, the no-feedback version of the signaling game is governed precisely by the noise stability of the decoding function~$f$, and therefore even as~$n\to\infty$ one inherently has~$\varepsilon=\Omega(p)$.

This motivates using feedback in order to drive the success probability toward~$1$ while still preserving perfect uniformity of~$Y$.
A natural next attempt is to use the feedback to correct deviations from the target condition~$\Maj(Y)=o$:
when transmitting the~$t$-th bit, the sender already knows the previously received bits~$Y_1,\ldots,Y_{t-1}$, and so may try to sample the next received bit according to the conditional distribution of~$Y_t$ given both the prefix~$Y_1,\ldots,Y_{t-1}$ and the \emph{terminal} event~$\Maj(Y)=o$.

However, this runs into two closely related obstacles.
First, this conditional distribution need not even be feasible, since the observed prefix may already force~$\Maj(Y)\neq o$.
Second, the sender does not directly control the distribution of the received bit~$Y_t$; rather, the sender chooses the transmitted bit~$X_t$, which is then corrupted by noise.
The key observation is that as long as the target conditional distribution satisfies
\[
\Pr[Y_t=+1]\in[p,1-p],
\]
it is in fact possible to choose~$X_t$ so that the resulting received bit~$Y_t=X_t Z_t$ has exactly this desired distribution, since before transmission we know the exact crossover probability~$p_t:=\Pr[Z_t=-1]\le p$.
While the hard constraint~$\Maj(Y)=o$ does \emph{not} satisfy such a bound for every next-bit conditional probability, we will define a suitable relaxation of it that does, at the cost of only a small loss in the signaling success probability.

\subsection{Majority Signaling}
\label{subsec:optimal_majority_signaling}

We now replace the hard terminal constraint~$\Maj(Y)=o$ by a softer terminal rule that still strongly correlates with majority, but leaves \emph{every} endpoint~$y\in\{\pm1\}^n$ possible under either objective.
This is exactly what will allow us to keep all next-step conditional probabilities bounded away from~$0$ and~$1$.
Let
\[
S(y):=\sum_{i=1}^n y_i,
\qquad\text{and}\qquad
\sigma(u):=\frac{1}{1+e^{-u}}.
\]
Fix a parameter~$\beta\ge 0$, to be chosen later, and define the soft terminal rule
\begin{equation}\label{eq:soft-terminal-pi}
\pi(y):=\sigma(\beta S(y)).
\end{equation}
Equivalently, once the terminal word~$y$ is reached, we declare that it is compatible with objective~$+1$ with probability~$\pi(y)$ and with objective~$-1$ with probability~$1-\pi(y)$.
Thus large positive values of~$S(y)$ strongly favor~$o=+1$, large negative values strongly favor~$o=-1$, and every endpoint retains nonzero probability for both objectives.
We therefore target the conditional law
\begin{equation}\label{eq:target-law-majority}
\Pr[Y=y\mid o]
=
2^{-(n-1)} \sigma\!\big(\beta\,\cdot o\,\cdot S(y)\big).
\end{equation}
These are well defined probabilities since for every~$u\in \mathbb{R}$, we have~$\sigma(u)\in (0,1)$ and furthermore $\sigma(u)+\sigma(-u)=1$. Thus, for every~$y$ we have
\[
\Pr[Y=y\mid o=+1]+\Pr[Y=y\mid o=-1]=2^{-(n-1)},
\]
and therefore the unconditional distribution of~$Y$ is exactly uniform.
Moreover, under~\eqref{eq:target-law-majority},
\begin{equation}\label{eq:posterior-majority}
\Pr[o=+1\mid Y=y]=\sigma(\beta S(y)).
\end{equation}
In particular, the maximum likelihood estimator of~$o$ from~$Y$ is precisely majority:
\[
\arg\max_{b\in\{\pm1\}}\Pr[o=b\mid Y=y] = \sgn(S(y))=\Maj(y),
\]
where we use that~$n$ is odd.

It remains to show that the law~\eqref{eq:target-law-majority} can be sampled online through the noisy channel with feedback, and analyze the signaling probability.

\paragraph{Aligned variables.}
Define
\[
U_i:=oY_i,\qquad V_i:=oX_i,
\]
so that~$U_i=V_i Z_i$ and~$\Maj(Y)=o$ is equivalent to~$\sum_{i=1}^n U_i>0$.
In these variables, the target law becomes
\begin{equation}\label{eq:aligned-terminal-law}
\Pr[U=u\mid o=+1]
=
2^{-n} w\!\Big(\sum_{i=1}^n u_i\Big),
\qquad\text{where}\qquad
w(s):=2\sigma(\beta s)\in(0,2).
\end{equation}
Thus the uniform measure on~$\{\pm1\}^n$ is reweighted only according to the terminal sum.

\paragraph{Conditional success probabilities.}
For $r\ge 0$, denote the result of the length~$r$ simple random walk by
\[
W_r:=\xi_1+\cdots+\xi_r,
\]
where the~$\xi_i$ are i.i.d.\ uniform in~$\{\pm1\}$, and define the continuation values
\begin{equation}\label{eq:H-majority}
H[r,s]:=\mathbb{E}\!\left[w(s+W_r)\right].
\end{equation}
The value~$\frac{1}{2}H[r,s]$ is exactly the probability of (eventually) satisfying the terminal condition if we have~$r$ bits left to send and the current sum of the bits sent so far is~$t$.
These satisfy the recursion
\begin{equation}\label{eq:H-majority-rec}
H[0,s]=w(s),
\qquad
H[r,s]=\tfrac12 H[r-1,s+1]+\tfrac12 H[r-1,s-1]
\quad (r\ge 1).
\end{equation}
Hence all values~$H[r,s]$ for~$0\le r\le n$ can be precomputed in time~$O(n^2)$.

Suppose that after~$t$ steps the aligned partial sum is
\[
S_t:=\sum_{i=1}^t U_i=s,
\]
so that~$r:=n-t$ steps remain.
The correct conditional probability for the next aligned received bit is then
\begin{equation}\label{eq:q-majority}
q(r,s)
:=
\Pr[U_{t+1}=+1 \mid S_t=s]
=
\frac{H[r-1,s+1]}{H[r-1,s+1]+H[r-1,s-1]}.
\end{equation}

\begin{lemma}[Correctness of the sequential sampler]
\label{lem:majority-sequential-sampler}
If the aligned received bits~$U_1,\ldots,U_n$ are sampled sequentially according to the transition probabilities~\eqref{eq:q-majority}, then the resulting joint law of~$U$ is exactly~\eqref{eq:aligned-terminal-law}.
\end{lemma}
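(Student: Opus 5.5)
The plan is a telescoping product, or equivalently a Doob $h$-transform argument. For a fixed aligned string $u\in\{\pm1\}^n$, I write the probability the sequential sampler assigns to it as a product of transition probabilities. Let $s_t:=\sum_{i\le t}u_i$ with $s_0=0$, and $r_t:=n-t$. Then
\[
\Pr[U=u]=\prod_{t=0}^{n-1}\frac{H[r_t-1,\,s_t+u_{t+1}]}{H[r_t-1,s_t+1]+H[r_t-1,s_t-1]}.
\]
The goal is to show this product equals $2^{-n}w(s_n)$, which is exactly \eqref{eq:aligned-terminal-law}.

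The key identity comes from the recursion \eqref{eq:H-majority-rec}: the denominator at step $t$ equals $2H[r_t,s_t]$. So the $t$-th factor is
\[
\frac{1}{2}\cdot\frac{H[n-t-1,s_{t+1}]}{H[n-t,s_t]}.
\]
Taking the product over $t=0,\dots,n-1$, the ratio telescopes to $2^{-n}\,H[0,s_n]/H[n,0]$. By the base case, $H[0,s_n]=w(s_n)$.

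It remains to compute the normalizer $H[n,0]=\mathbb{E}[w(W_n)]$. Since $W_n$ is symmetric, $w(s)+w(-s)=2\sigma(\beta s)+2\sigma(-\beta s)=2$, and $n$ odd gives $W_n\neq 0$, we get $\mathbb{E}[w(W_n)]=\tfrac12\mathbb{E}[w(W_n)+w(-W_n)]=1$. This yields $\Pr[U=u]=2^{-n}w(\sum_i u_i)$, as claimed. The oddness of $n$ is not actually needed, because $w(0)=1$ anyway.

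I also have to check that every transition probability is well defined. All values $H[r,s]$ lie in $(0,2)$, because $w$ takes values in $(0,2)$ and $H$ is an average of such values. So each denominator is positive and each $q(r,s)$ lies strictly in $(0,1)$.

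I expect no real obstacle here. The only care needed is in the indexing of $r$ versus $t$ and in verifying that $H[n,0]=1$; everything else is routine.
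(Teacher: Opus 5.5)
Your argument is correct and is the same Doob $h$-transform idea as the paper's, run in the opposite direction. The paper notes that the target law gives each prefix weight proportional to $2^{-t}H[n-t,s_t]$ and reads off that its next-step conditionals are $q(r,s)$, whereas you multiply the sampler's transitions and telescope, which forces you to check the normalization $H[n,0]=\mathbb{E}[w(W_n)]=1$ explicitly (the paper leaves this implicit in treating \eqref{eq:aligned-terminal-law} as a probability law); as you say, it follows from $w(s)+w(-s)=2$ for every $s$, so oddness of $n$ plays no role.
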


\begin{proof}
Fix a prefix~$u_1,\ldots,u_t$ with sum~$s$, and let~$r=n-t$.
Under the target law~\eqref{eq:aligned-terminal-law}, the probability of this prefix is proportional to
\[
2^{-t}\,\mathbb{E}\!\left[w(s+W_r)\right]
=
2^{-t} H[r,s].
\]
Therefore
\[
\Pr[U_{t+1}=+1\mid U_1=u_1,\ldots,U_t=u_t]
=
\frac{\frac12 H[r-1,s+1]}{\frac12 H[r-1,s+1]+\frac12 H[r-1,s-1]},
\]
which is exactly~\eqref{eq:q-majority}.
Thus the sequential process matches the target conditional probabilities at every step, and hence produces the target joint law.
\end{proof}

We next verify that these desired next-step probabilities are always feasible through the noisy channel.

\begin{lemma}[Channel feasibility]
\label{lem:majority-feasibility}
Fix any time~$t$, and suppose the actual crossover probability at that step is~$p_t\le p$.
If the sender chooses~$V_t=+1$ with probability~$a_t$, then
\begin{equation}\label{eq:q-from-a-pt}
\Pr[U_t=+1\mid \text{sender information}]
=
p_t+(1-2p_t)a_t.
\end{equation}
Consequently, any target value~$q\in[p_t,1-p_t]$ can be implemented exactly by setting
\begin{equation}\label{eq:a-from-q-pt}
a_t=\frac{q-p_t}{1-2p_t}.
\end{equation}
\end{lemma}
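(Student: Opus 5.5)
The plan is to compute the law of $U_t$ directly from the aligned relation $U_t=V_tZ_t$, conditioning on everything the sender knows before step $t$. Here ``sender information'' means $o$, the previous $X$'s, and the previous $Z$'s.

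First I will use the independence structure of the game. $Z_t$ is independent of $o$, of all earlier $(X_i,Z_i)$, and of the sender's fresh randomness at step $t$, and $\Pr[Z_t=-1]=p_t$. The sender draws $V_t$ using that information together with fresh coins, with $\Pr[V_t=+1\mid\text{sender information}]=a_t$. Hence, conditioned on the sender information, $V_t$ and $Z_t$ are independent.

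Next I will split into the two ways the event $U_t=+1$ can occur: either $V_t=+1$ and $Z_t=+1$, or $V_t=-1$ and $Z_t=-1$. This gives
\[
\Pr[U_t=+1\mid\text{sender information}]
= a_t(1-p_t)+(1-a_t)p_t
= p_t+(1-2p_t)a_t,
\]
which is \eqref{eq:q-from-a-pt}.

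For the consequence, take a target $q\in[p_t,1-p_t]$. Since $p_t\le p<\tfrac12$, we have $1-2p_t>0$, so the map $a\mapsto p_t+(1-2p_t)a$ is a strictly increasing affine bijection from $[0,1]$ onto $[p_t,1-p_t]$. Inverting it gives \eqref{eq:a-from-q-pt}. The endpoint checks $q=p_t\Rightarrow a_t=0$ and $q=1-p_t\Rightarrow a_t=1$ confirm that $a_t\in[0,1]$, so it is a legitimate probability.

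Finally, the sender recovers the actual transmitted bit as $X_t=oV_t$. This works because $V_t=oX_t$ and the sender knows $o$, and $p_t$ is known to the sender before step $t$, so the rule in \eqref{eq:a-from-q-pt} is implementable. There is no real obstacle in this argument. The only point needing care is the conditional independence of $Z_t$ from the sender's choice; it follows because $Z_t$ is sampled independently of everything prior, and the feedback on $Z_t$ arrives only after $X_t$ has been sent.
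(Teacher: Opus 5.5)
Your proof is correct and follows the same route as the paper: use $U_t=V_tZ_t$ with $Z_t$ independent of the sender's choice, sum the two cases to get $(1-p_t)a_t+p_t(1-a_t)$, and invert the affine map. Your extra checks — that $1-2p_t>0$, that $a_t\in[0,1]$ for $q\in[p_t,1-p_t]$, and that $X_t=oV_t$ is implementable — are details the paper leaves implicit.
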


\begin{proof}
Since~$U_t=V_t Z_t$, we have
\[
\Pr[U_t=+1]=(1-p_t)\Pr[V_t=+1]+p_t\Pr[V_t=-1]
=
(1-p_t)a_t+p_t(1-a_t),
\]
which is exactly~\eqref{eq:q-from-a-pt}.
Solving for~$a_t$ gives~\eqref{eq:a-from-q-pt}.
\end{proof}

Thus it suffices to show that~$q(r,s)$ always lies in the smaller interval~$[p,1-p]$.
This is where the logistic choice becomes particularly convenient.

\begin{lemma}[Uniform bound on all next-step biases]
\label{lem:majority-odds-ratio}
For every~$r\ge 1$ and every integer~$s$,
\[
\frac{q(r,s)}{1-q(r,s)}
=
\frac{H[r-1,s+1]}{H[r-1,s-1]}
\in
\big[e^{-2\beta},\,e^{2\beta}\big].
\]
Equivalently,
\[
q(r,s)\in [\sigma(-2\beta),\,\sigma(2\beta)].
\]
\end{lemma}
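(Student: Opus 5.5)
The odds-ratio identity is immediate from~\eqref{eq:q-majority}: $q(r,s)/(1-q(r,s))=H[r-1,s+1]/H[r-1,s-1]$, since both the numerator and the denominator of $q$ share the common sum $H[r-1,s+1]+H[r-1,s-1]$. The remaining task is to bound this ratio. I will do so by a pointwise comparison of the terminal weights $w(s'+2)$ and $w(s')$, and then push that comparison through the expectation defining $H$.

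\textbf{Step 1 (pointwise bound).} Show that for every real $u$,
\[
e^{-2\beta}\le \frac{w(u+2)}{w(u)}=\frac{\sigma(\beta(u+2))}{\sigma(\beta u)}\le e^{2\beta}.
\]
The lower bound is trivial: $\sigma$ is increasing and $\beta\ge 0$, so the ratio is at least $1\ge e^{-2\beta}$. For the upper bound, write $\sigma(v)=1/(1+e^{-v})$, so that
\[
\frac{\sigma(v+2\beta)}{\sigma(v)}=\frac{1+e^{-v}}{1+e^{-v-2\beta}}.
\]
Set $a=e^{-v}>0$. We need $1+a\le e^{2\beta}(1+ae^{-2\beta})=e^{2\beta}+a$, which holds because $e^{2\beta}\ge 1$. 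Equivalently, this is the statement that $\log\sigma$ is $1$-Lipschitz, since $(\log\sigma)'=1-\sigma\in(0,1)$. I would present it via that derivative bound, because it is the cleanest route.

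\textbf{Step 2 (pass through the expectation).} By definition, $H[r-1,s\pm1]=\mathbb{E}[w(s\pm1+W_{r-1})]$, where both expectations are over the same walk $W_{r-1}$. Couple the two terms by using the same $W_{r-1}$, and put $u:=s-1+W_{r-1}$. Then $w(s+1+W_{r-1})=w(u+2)$, and Step~1 gives, pointwise in the walk,
\[
e^{-2\beta}\,w(u)\le w(u+2)\le e^{2\beta}\,w(u).
\]
Taking expectations preserves both inequalities, since $w>0$ makes the denominator positive. This yields $H[r-1,s+1]/H[r-1,s-1]\in[e^{-2\beta},e^{2\beta}]$. In fact the lower bound improves to $1$, which is stronger than needed.

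\textbf{Step 3 (convert odds to probability).} The map $x\mapsto x/(1+x)$ is increasing, and odds $e^{\pm2\beta}$ correspond exactly to the probabilities $\sigma(\pm2\beta)$. Hence $q(r,s)\in[\sigma(-2\beta),\sigma(2\beta)]$.

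There is no real obstacle in this argument. The only point needing care is the upper bound in Step~1, and it is this $1$-Lipschitz property of $\log\sigma$ that makes the logistic choice convenient. Downstream, the lemma will be used with $\beta$ chosen so that $\sigma(2\beta)\le 1-p$, which places every target $q$ inside the feasible range $[p,1-p]$ of Lemma~\ref{lem:majority-feasibility}.
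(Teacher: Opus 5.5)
Your proof is correct and follows the paper's argument. Both bound the ratio of logistic weights two steps apart pointwise, transfer that bound to $H[r-1,s\pm1]$ by evaluating both expectations on the same walk $W_{r-1}$, and read off the odds-ratio bound. Your explicit check of the upper bound via $1+a\le e^{2\beta}+a$, and your remark that the lower bound is in fact $1$ (so $q(r,s)\ge\tfrac12$), supply details the paper states without proof.
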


\begin{proof}
For every integer~$t$,
\[
\frac{w(t+1)}{w(t-1)}
=
\frac{\sigma(\beta(t+1))}{\sigma(\beta(t-1))}
=
\frac{1+e^{-\beta(t-1)}}{1+e^{-\beta(t+1)}}
\in [e^{-2\beta},e^{2\beta}].
\]
Now fix~$r,s$, and let~$W=W_{r-1}$.
Applying the above pointwise with~$t=s+W$ yields
\[
w(s+1+W)\in [e^{-2\beta},e^{2\beta}]\, w(s-1+W).
\]
Taking expectations gives
\[
H[r-1,s+1]\in [e^{-2\beta},e^{2\beta}]\, H[r-1,s-1],
\]
which is exactly the claimed odds-ratio bound.
\end{proof}

We now choose
\begin{equation}\label{eq:beta-choice-majority}
\beta:=\frac12\log\frac{1-p}{p}.
\end{equation}
Then
\[
\sigma(2\beta)=1-p
\qquad\text{and}\qquad
\sigma(-2\beta)=p,
\]
so Lemma~\ref{lem:majority-odds-ratio} gives
\[
q(r,s)\in[p,1-p]\subseteq [p_t,1-p_t]
\qquad\text{for every }p_t\le p.
\]
Hence, by Lemma~\ref{lem:majority-feasibility}, every desired transition probability~$q(r,s)$ is implementable exactly at every step, for every realized sequence~$(p_t)$.

We next analyze the signaling success probability of the majority decoder.
Since under~\eqref{eq:target-law-majority} the unconditional distribution of~$Y$ is exactly uniform, this reduces to a simple weighted estimate over the possible terminal sums.

\begin{lemma}[Exact failure formula]
\label{lem:majority-failure-formula}
Under the signaling strategy above,
\[
\Pr[\Maj(Y)\neq o]
=
\mathbb{E}_{Y\sim\Unif(\{\pm1\}^n)}
\left[\sigma\!\left(-\beta\left|\sum_{i=1}^n Y_i\right|\right)\right].
\]
Equivalently, if
\[
W_n:=\xi_1+\cdots+\xi_n
\qquad\text{with}\qquad
\xi_1,\ldots,\xi_n \stackrel{\mathrm{i.i.d.}}{\sim} \Unif(\{\pm1\}),
\]
then
\[
\Pr[\Maj(Y)\neq o]
=
\mathbb{E}\big[\sigma(-\beta |W_n|)\big]
=
\sum_{s=1}^{n}
\Pr[|W_n|=s]\cdot \sigma(-\beta s).
\]
\end{lemma}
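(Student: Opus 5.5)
The plan is to use the two facts already in hand: the target law \eqref{eq:target-law-majority} is realized exactly (Lemmas~\ref{lem:majority-sequential-sampler}, \ref{lem:majority-feasibility}, \ref{lem:majority-odds-ratio}), and under it the posterior is given by \eqref{eq:posterior-majority}. The failure probability is then an expectation over $Y$ alone. Write
\[
\Pr[\Maj(Y)\neq o]=\sum_{y}\Pr[Y=y]\cdot\Pr[o\neq \Maj(y)\mid Y=y].
\]
First, I will use that the unconditional law of $Y$ is exactly uniform, which was shown right after \eqref{eq:target-law-majority}. This gives $\Pr[Y=y]=2^{-n}$.

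Second, I will evaluate the conditional error. Since $n$ is odd, $S(y)\neq 0$. Suppose $S(y)>0$, so $\Maj(y)=+1$. Then by \eqref{eq:posterior-majority} the error probability is
\[
\Pr[o=-1\mid Y=y]=1-\sigma(\beta S(y))=\sigma(-\beta S(y))=\sigma(-\beta|S(y)|).
\]
Here I use $\sigma(u)+\sigma(-u)=1$. Symmetrically, if $S(y)<0$, the error probability is
\[
\Pr[o=+1\mid Y=y]=\sigma(\beta S(y))=\sigma(-\beta|S(y)|).
\]
Summing against the uniform weights gives the first expression in the statement.

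For the equivalent form, note that $\sum_i Y_i$ under uniform $Y$ has exactly the law of $W_n$. Grouping the expectation by the value of $|W_n|$ gives the sum. The sum runs over $s=1,\dots,n$ because $|W_n|\ge1$ for odd $n$; only odd $s$ actually contribute, but the other terms vanish, so including them is harmless.

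There is no real obstacle here. The only point needing care is that \eqref{eq:posterior-majority} relies on the realized joint law of $(o,Y)$ being exactly the target law. This holds because $o$ is uniform and, by the preceding lemmas, every transition probability $q(r,s)$ is implementable exactly with $\beta=\tfrac12\log\frac{1-p}{p}$. The conclusion therefore holds for every admissible sequence $(p_t)$.
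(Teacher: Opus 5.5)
Your proposal is correct and follows essentially the same route as the paper: uniformity of $Y$ combined with the posterior identity gives the per-transcript error $\sigma(-\beta|S(y)|)$, and the second form is obtained by reparametrizing the expectation by $W_n=\sum_i Y_i$. You additionally spell out the sign case analysis, the step using $\sigma(u)+\sigma(-u)=1$, and why the even-$s$ terms vanish, all of which the paper leaves implicit.
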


\begin{proof}
By~\eqref{eq:posterior-majority}, for every~$y\in\{\pm1\}^n$ the posterior probability of error of the majority decoder is
\[
\Pr[\Maj(Y)\neq o\mid Y=y]
=
\sigma\!\left(-\beta\left|\sum_{i=1}^n y_i\right|\right).
\]
Averaging over~$Y$, which is uniform by construction, gives the first identity.
The second is just a reparametrization by the endpoint
\[
W_n=\sum_{i=1}^n Y_i.
\qedhere
\]
\end{proof}

\begin{proposition}[Majority decoding succeeds with probability $1-O(n^{-1/2})$]
\label{prop:majority-failure-bound}
For every fixed~$\beta>0$ there is a constant~$C=C_\beta<\infty$ such that under the signaling strategy above,
\[
\Pr[\Maj(Y)\neq o]\le \frac{C}{\sqrt n}.
\]
In particular, for any~$p\in(0,\frac{1}{2})$ and the choice~$\beta=\frac12\log\frac{1-p}{p}$ from~\eqref{eq:beta-choice-majority}, the constant~$C<\infty$ depends only on~$p$.
\end{proposition}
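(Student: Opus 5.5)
We start from the exact failure formula of Lemma~\ref{lem:majority-failure-formula}, which writes the error probability as $\mathbb{E}[\sigma(-\beta|W_n|)]$ with $W_n$ a simple random walk of odd length $n$. The bound then only concerns the behaviour of an explicit, rapidly decaying weight under the law of $|W_n|$.

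\textbf{Step 1: bound the weight.} Use the elementary bound $\sigma(-u)=\frac{1}{1+e^{u}}\le e^{-u}$ for $u\ge 0$. This gives
\[
\Pr[\Maj(Y)\neq o]\le \sum_{s\ge 1}\Pr[|W_n|=s]\,e^{-\beta s}.
\]

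\textbf{Step 2: local bound on the walk.} Show that $\Pr[W_n=s]\le C'/\sqrt n$ uniformly in $s$. Since $\Pr[W_n=s]=\binom{n}{(n+s)/2}2^{-n}$ and the central binomial coefficient is maximal, we have
\[
\Pr[W_n=s]\le \binom{n}{\lfloor n/2\rfloor}2^{-n}\le \sqrt{2/(\pi n)}\,(1+o(1)).
\]
This follows from Stirling's formula or from the standard bound $\binom{2m}{m}\le 4^m/\sqrt{\pi m}$ applied with $m\approx n/2$. Consequently $\Pr[|W_n|=s]\le 2C'/\sqrt n$ for every $s$.

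\textbf{Step 3: sum the geometric series.} Combining the two steps,
\[
\Pr[\Maj(Y)\neq o]\le \frac{2C'}{\sqrt n}\sum_{s\ge1}e^{-\beta s}=\frac{2C'}{\sqrt n}\cdot\frac{1}{e^{\beta}-1}.
\]
So we may take $C_\beta=2C'/(e^\beta-1)$, which is finite for every fixed $\beta>0$. With $\beta=\frac12\log\frac{1-p}{p}$ we have $e^\beta=\sqrt{(1-p)/p}>1$ whenever $p<1/2$. Hence $C$ depends only on $p$; in fact $C=O\bigl(1/(\sqrt{(1-p)/p}-1)\bigr)$, which blows up as $p\to 1/2$, as expected.

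\textbf{Main obstacle.} The only nonroutine ingredient is the uniform local limit bound in Step 2. It reduces to the single central-binomial estimate, and care is needed only for odd $n$, where $\binom{n}{(n-1)/2}2^{-n}=\frac12\binom{n+1}{(n+1)/2}2^{-n}$ handles it. Everything else is a direct substitution into Lemma~\ref{lem:majority-failure-formula}.
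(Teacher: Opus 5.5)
Your proposal is correct and matches the paper's proof step for step. You start from the exact failure formula and bound $\sigma(-\beta s)\le e^{-\beta s}$. You then use the maximality of the central binomial coefficient to get $\Pr[|W_n|=s]\le 2C'/\sqrt{n}$ uniformly in $s$, and sum the geometric series to obtain $C_\beta=2C'/(e^\beta-1)$, which for $\beta=\frac12\log\frac{1-p}{p}$ is $O\bigl(1/(\sqrt{(1-p)/p}-1)\bigr)$, exactly the paper's constant.
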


\begin{proof}
By Lemma~\ref{lem:majority-failure-formula},
\[
\Pr[\Maj(Y)\neq o]
=
\sum_{s=1}^{n}
\Pr[|W_n|=s]\cdot \sigma(-\beta s).
\]
We bound the two factors separately.
First, note that~$|W_n|$ must be odd (as~$n$ is), and that for every odd~$s\in\{1,3,\ldots,n\}$,
\[
\Pr[W_n=s]
=
2^{-n}\binom{n}{\frac{n+s}{2}}
\le
2^{-n}\binom{n}{\frac{n+1}{2}}
\le \frac{C'}{\sqrt n},
\]
where~$C'>0$ is an absolute constant, using that the binomial coefficients are maximized at the center and the standard bound on the central binomial coefficient.
By symmetry,
\[
\Pr[|W_n|=s]\le \frac{2C'}{\sqrt n}
\qquad\text{for every odd }s.
\]
Second, for every~$s\ge1$,
\[
\sigma(-\beta s)=\frac{1}{1+e^{\beta s}}\le e^{-\beta s}.
\]
Combining the two estimates,
\[
\Pr[\Maj(Y)\neq o]
\le
\frac{2C'}{\sqrt n}
\sum_{s=1}^{\infty} e^{-\beta s}
=
\frac{2C'}{\sqrt n}\cdot \frac{e^{-\beta}}{1-e^{-\beta}}=O_\beta \left(\frac{1}{\sqrt{n}}\right).
\]
For~$\beta=\frac12\log\frac{1-p}{p}$, this is~$O\left(\frac{\sqrt{p}}{\sqrt{1-p}\,-\sqrt{p}}\cdot\frac{1}{\sqrt{n}}\right)$.
\end{proof}

Combining the above lemmas yields the signaling scheme.

\begin{theorem}[Majority signaling with perfect uniformity]
\label{thm:majority-signaling}
Let~$p\in (0,\frac{1}{2})$, then there exists a constant~$C>0$ such that for every odd~$n$ the following holds.
There is an efficiently computable signaling strategy in the feedback model such that:

\begin{enumerate}
    \item The unconditional distribution of~$Y$ is exactly uniform on~$\{\pm1\}^n$;

    \item The optimal decoder for~$o$ given~$Y$ is~$\Maj(Y)$; and

    \item The signaling error probability satisfies
    \[
    \Pr[\Maj(Y)\neq o]\le \frac{C}{\sqrt n}.
    \]
\end{enumerate}
\end{theorem}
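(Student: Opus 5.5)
The plan is to assemble the theorem from the lemmas just proved. The sender uses the choice $\beta=\frac12\log\frac{1-p}{p}$ from~\eqref{eq:beta-choice-majority}. First, it precomputes the table $H[r,s]$ for $0\le r\le n$ and $|s|\le n$ via the recursion~\eqref{eq:H-majority-rec}. This takes $O(n^2)$ arithmetic operations on values of $\sigma$. If exact arithmetic is a concern, I would note that each $H[r,s]$ is a finite binomial average of logistic values, so it can be computed to any desired precision in polynomial time.

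At step $t+1$, the sender uses feedback: it knows the aligned partial sum $S_t=\sum_{i\le t} oY_i$. It computes the target $q=q(n-t,S_t)$ from~\eqref{eq:q-majority} and sets $a_{t+1}=(q-p_{t+1})/(1-2p_{t+1})$. It then draws $V_{t+1}=+1$ with probability $a_{t+1}$ and transmits $X_{t+1}=oV_{t+1}$. For this to be valid, $a_{t+1}$ must lie in $[0,1]$. Lemma~\ref{lem:majority-odds-ratio} with this $\beta$ gives $q\in[p,1-p]\subseteq[p_{t+1},1-p_{t+1}]$, and then Lemma~\ref{lem:majority-feasibility} shows the received aligned bit $U_{t+1}$ has exactly conditional law $q$.

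The one subtle point is that the noise $Z_{t+1}$ is independent of everything before it. Hence, conditioned on the full history (including $o$ and the earlier $Z$'s), the received bit has exactly the prescribed transition probability. Lemma~\ref{lem:majority-sequential-sampler} then yields the exact joint law~\eqref{eq:aligned-terminal-law}, equivalently~\eqref{eq:target-law-majority}, for each value of $o$.

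The three claims then follow:
\begin{enumerate}
    \item \textbf{Uniformity.} Since $\sigma(u)+\sigma(-u)=1$, averaging~\eqref{eq:target-law-majority} over a uniform $o$ gives $\Pr[Y=y]=2^{-n}$ for every $y$.
    \item \textbf{Optimal decoder.} The posterior formula~\eqref{eq:posterior-majority} is monotone in $S(y)$, so the MAP decoder is $\sgn(S(y))=\Maj(y)$, which is well defined because $n$ is odd.
    \item \textbf{Error bound.} Lemma~\ref{lem:majority-failure-formula} together with Proposition~\ref{prop:majority-failure-bound} gives $\Pr[\Maj(Y)\neq o]\le C/\sqrt n$, where $C$ depends only on $p$ through $\beta$.
\end{enumerate}

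The main point needing care is that the error bound must hold for every $o$ and every admissible sequence $(p_t)$, not just on average, as Proposition~\ref{prop:compile-ut-to-feedback-pnr} requires. The bound does hold in this form, because the law of $Y$ given $o$ is identical for every realized crossover sequence: only the sender's $a_t$ depend on the $p_t$. Moreover, the aligned law of $U$ given $o$ is the same for both values of $o$, so the error probability is the same for each objective bit.
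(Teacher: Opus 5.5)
Your proposal is correct and is essentially the paper's proof. Both assemble the sequential-sampler, channel-feasibility and odds-ratio lemmas, with $\beta$ chosen so that every target bias lies in $[p,1-p]$, and then read off uniformity, optimality of majority, and the $O(n^{-1/2})$ bound from the posterior identity and Proposition~\ref{prop:majority-failure-bound}. Your extra remarks (that $a_{t+1}\in[0,1]$, that $Z_{t+1}$ is fresh given the history, and that the error is the same for each $o$ and every realized $(p_t)$) only make explicit what the paper leaves implicit.

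One small caveat is inherited from the paper: for $\sigma(\pm 2\beta)$ to equal $1-p$ and $p$, the logarithm in $\beta$ must be the natural one, $\beta=\frac12\ln\frac{1-p}{p}$, even though the paper's stated convention is that $\log$ is base two.
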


\begin{proof}
The existence and efficient computability of the strategy follow from Lemmas~\ref{lem:majority-sequential-sampler}, \ref{lem:majority-feasibility}, and~\ref{lem:majority-odds-ratio}: the sender precomputes the table~$H[r,s]$, keeps track of the current aligned partial sum~$S_t$, computes the desired next-step bias~$q(r,s)$ from~\eqref{eq:q-majority}, and then chooses the distribution of~$V_{t+1}$ so that the resulting received aligned bit~$U_{t+1}$ has exactly this bias.

By Lemma~\ref{lem:majority-sequential-sampler}, the resulting law of~$U$ is exactly~\eqref{eq:aligned-terminal-law}, and therefore the law of~$Y$ is exactly~\eqref{eq:target-law-majority}.
As observed above, this implies perfect uniformity of~$Y$ and the posterior identity~\eqref{eq:posterior-majority}, so majority is the optimal decoder.
The error bound is exactly Proposition~\ref{prop:majority-failure-bound}.
\end{proof}

\begin{corollary}
\label{cor:ut-to-feedback-pnr-poly}
Fix~$p\in(0,\frac12)$, and let~$\Pi$ be a PR-KE protocol whose noiseless public transcript has length~$T$ and whose correctness is bounded below by a positive constant.
Then~$\Pi$ can be compiled into a feedback PNR-KE protocol over~$\BSC(\le p)$ of total length~$O(T^3)$, while preserving pseudorandomness and constant correctness.
\end{corollary}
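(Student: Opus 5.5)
The plan is to plug the majority signaling scheme of Theorem~\ref{thm:majority-signaling} into the black-box compiler of Proposition~\ref{prop:compile-ut-to-feedback-pnr}, choosing the block length $n$ so that the accumulated signaling error is at most a small constant.

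First, fix the PR-KE protocol $\Pi$ with transcript length $T$ and correctness at least $1-\delta$, where $1-\delta\ge c_0>0$ is a constant. Let $C=C_p$ be the constant of Theorem~\ref{thm:majority-signaling}, and pick the smallest odd $n$ with
\[
\frac{C}{\sqrt n}\le \frac{c_0}{2T},
\]
so that $n=\Theta_p(T^2/c_0^2)=O(T^2)$. The scheme of Theorem~\ref{thm:majority-signaling} is efficiently computable, since the table $H[r,s]$ costs $O(n^2)$ time. It has exactly uniform output $Y$ for uniform $o$, which is hypothesis~(2) of Proposition~\ref{prop:compile-ut-to-feedback-pnr}.

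Second, I check that the error bound holds uniformly over every admissible crossover sequence $p_1,\dots,p_n\le p$ and for each fixed $o$, which is what hypothesis~(1) requires. Lemma~\ref{lem:majority-odds-ratio} with the choice $\beta=\frac12\log\frac{1-p}{p}$ shows that every target $q(r,s)$ lies in $[p,1-p]\subseteq[p_t,1-p_t]$. Lemma~\ref{lem:majority-feasibility} then shows that the received aligned bits follow the law \eqref{eq:aligned-terminal-law} regardless of the $p_t$'s. Under this law the failure probability given $o$ does not depend on the $p_t$. Moreover, by the symmetry $o\mapsto -o$, $Y\mapsto -Y$, the conditional failure probability equals the unconditional one from Lemma~\ref{lem:majority-failure-formula}. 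Hence $\varepsilon_{\mathrm{sig}}\le C/\sqrt n\le c_0/(2T)$.

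Third, apply Proposition~\ref{prop:compile-ut-to-feedback-pnr}. It gives pseudorandomness of the compiled transcript directly, with total length $nT=O(T^3)$. It also gives success probability at least $1-\delta-T\varepsilon_{\mathrm{sig}}\ge c_0/2$, which is again a positive constant.

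The main obstacle is the second step. One must be careful that the $O(n^{-1/2})$ bound holds per objective bit and uniformly over adversarially chosen $p_t\le p$, rather than only on average over $o$. The symmetry and law-invariance argument above is how I would settle it.
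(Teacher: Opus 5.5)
Your proposal is correct and follows the paper's proof: it plugs the majority signaling scheme of Theorem~\ref{thm:majority-signaling} into Proposition~\ref{prop:compile-ut-to-feedback-pnr} with an odd $n=\Theta(T^2)$, giving total length $nT=O(T^3)$. Two of your additions tighten the paper's terse argument: you check explicitly that the $C/\sqrt n$ bound holds for each fixed $o$ and every admissible sequence $p_t\le p$, and you use per-bit error $c_0/(2T)$ rather than the paper's $1/(100T)$, which leaves positive correctness only when $1-\delta>1/100$.
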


\begin{proof}
By Theorem~\ref{thm:majority-signaling}, there is a constant~$C=C(p)$ such that one noiseless transcript bit can be simulated using~$n$ channel uses with signaling error at most~$C/\sqrt n$.
Choose~$n=\Theta(T^2)$ odd, with the hidden constant large enough so that
\[
\frac{C}{\sqrt n}\le \frac{1}{100T}.
\]
Then the total error accumulated over the~$T$ simulated transcript bits is at most~$1/100$ by Proposition~\ref{prop:compile-ut-to-feedback-pnr}.
Therefore the compiled protocol has total length~$nT=O(T^3)$, its public transcript remains pseudorandom, and its correctness remains bounded below by a positive constant.
\end{proof}

\begin{remark}[The $n^{-1/2}$ error rate is optimal for majority]
\label{rem:majority-optimal}
For fixed~$p\in(0,\frac12)$, the error bound of Theorem~\ref{thm:majority-signaling} is optimal up to constants for the majority decoder.
Indeed, assume~$n$ is odd and the public transcript~$Y$ is exactly uniform.
Let
\[
E:=\Bigl\{\sum_{i=1}^{n-1} Y_i = 0\Bigr\}.
\]
Then~$\Pr[E]=\Theta(n^{-1/2})$.
Conditioned on~$E$, the decoder~$\Maj(Y)$ is determined solely by the last received bit~$Y_n$.
Hence, even allowing the sender to use the full feedback from the first~$n-1$ channel uses, recovering~$o$ on this event amounts to transmitting one final bit through a~$\BSC(p)$, and therefore incurs error at least~$p$.
It follows that
\[
\Pr[\Maj(Y)\neq o]\ge p\cdot \Pr[E]=\Omega(n^{-1/2}).
\]
The same argument applies to any symmetric decoder whose values on the two central layers~$\sum_i Y_i=\pm1$ are different.
\end{remark}

\subsection{Optimal Signaling}
\label{subsec:optimal_signaling}

The previous subsection showed that majority already gives a polynomially small signaling error.
However, as observed in Remark~\ref{rem:majority-optimal}, no decoder that depends only on the final Hamming weight (and more generally, no symmetric decoder that distinguishes the two middle layers) can beat the $\Theta(n^{-1/2})$ barrier under perfect uniformity.
To obtain exponentially small error, the decoder must use the \emph{entire} transcript.

The natural choice is therefore the \emph{maximum a posteriori} (MAP) decoder for the full transcript.
This viewpoint is closely related to the feedback-coding ideas of Horstein and of posterior matching~\cite{horstein2003sequential,shayevitz2011optimal}: at each step, we track the receiver's posterior on the hidden bit~$o$, and choose the next-bit laws under the two hypotheses so that
\begin{enumerate}
    \item The unconditional next bit remains perfectly uniform, and
    \item The two conditional laws stay separated by a constant amount.
\end{enumerate}
Iterating this causes the two transcript distributions to separate exponentially fast.
Note that unlike the majority signaling protocol, we now don't aim for a cleanly-defined terminal condition -- instead, we compute the next-bit probabilities ``on the fly'' based on the received prefix so far (which is known to both parties exactly) and the objective~$o$. The receiver can thus compute the two options (corresponding to each value of~$o$) and compare the received bits to the distribution with which the sender would have sent them if it attempts to signal each value of~$o$.
For a prefix~$y_{1:t}\in\{\pm1\}^t$, write
\[
P_+^t(y_{1:t}) := \Pr[Y_{1:t}=y_{1:t}\mid o=+1],
\qquad
P_-^t(y_{1:t}) := \Pr[Y_{1:t}=y_{1:t}\mid o=-1],
\]
and let
\[
w_t(y_{1:t})
:=
\Pr[o=+1\mid Y_{1:t}=y_{1:t}]
=
\frac{P_+^t(y_{1:t})}{P_+^t(y_{1:t})+P_-^t(y_{1:t})},
\]
where the prior on~$o$ is uniform.
We abbreviate this posterior by~$w_t$ when the prefix is clear from context, and initialize~$w_0=\frac12$.
At time~$t$, after observing the prefix~$Y_{1:t}$, we define target next-bit probabilities
\[
q_t^+ := \Pr[Y_{t+1}=+1\mid o=+1,Y_{1:t}],
\qquad
q_t^- := \Pr[Y_{t+1}=+1\mid o=-1,Y_{1:t}]
\]
as follows:
If~$w_t\ge \frac12$, set
\begin{equation}\label{eq:optimal-q-plus-branch}
q_t^- := p,
\qquad
q_t^+ := \frac{\frac12-(1-w_t)p}{w_t}.
\end{equation}
If~$w_t<\frac12$, set
\begin{equation}\label{eq:optimal-q-minus-branch}
q_t^+ := p,
\qquad
q_t^- := \frac{\frac12-w_t p}{1-w_t}.
\end{equation}

In words: at every step we pin the \emph{less likely} hypothesis to the extreme value~$p$, and choose the more likely hypothesis so as to keep the unconditional next bit perfectly fair.

\begin{lemma}[Basic properties of the control rule]
\label{lem:optimal-control-basic}
For every~$t$ and every observed prefix~$Y_{1:t}$, the numbers~$q_t^+,q_t^-$ satisfy:

\begin{enumerate}
    \item \(q_t^+,q_t^- \in [p,1-p]\);

    \item
    \[
    w_t q_t^+ + (1-w_t) q_t^- = \frac12;
    \]

    \item One of~$q_t^+,q_t^-$ is equal to~$p$, and the other belongs to~$[\frac12,1-p]$.
\end{enumerate}
\end{lemma}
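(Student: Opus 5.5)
This is a direct verification from the definitions \eqref{eq:optimal-q-plus-branch} and \eqref{eq:optimal-q-minus-branch}. By the symmetry $o\mapsto -o$, which exchanges $w_t$ with $1-w_t$ and $q_t^+$ with $q_t^-$, it suffices to treat the branch $w_t\ge \frac12$. There $q_t^-=p$ and $q_t^+=\frac{\frac12-(1-w_t)p}{w_t}$. The case $w_t<\frac12$ then follows by applying the same computation with $(w_t,q_t^+,q_t^-)$ replaced by $(1-w_t,q_t^-,q_t^+)$. Note also that $w_t\ge\frac12>0$ in this branch, so the division is well defined.

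Property 2 is immediate from the definition:
\[
w_tq_t^++(1-w_t)q_t^- = \tfrac12-(1-w_t)p+(1-w_t)p = \tfrac12 .
\]

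Property 3 is the main content. One of the two values equals $p$ by definition, so it remains to show $q_t^+\in[\frac12,1-p]$.

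For the lower bound, $q_t^+\ge\frac12$ is equivalent to $\frac12-(1-w_t)p\ge \frac{w_t}{2}$, that is, $\frac{1-w_t}{2}\ge (1-w_t)p$. This holds because $p<\frac12$ and $1-w_t\ge 0$.

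For the upper bound, $q_t^+\le 1-p$ is equivalent to $\frac12-p+w_tp\le w_t-w_tp$, that is, $\frac12-p\le w_t(1-2p)$. Since $1-2p>0$, this is exactly $w_t\ge\frac12$, which is the branch condition.

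This second inequality is the one place where the branch choice matters. Pinning the less likely hypothesis, rather than the more likely one, to $p$ is precisely what keeps the other value below $1-p$.

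Property 1 then follows from Property 3, since $p\in[p,1-p]$ and $[\frac12,1-p]\subseteq[p,1-p]$ because $p<\frac12$. None of the steps is a real obstacle; the only care needed is tracking which inequality uses $p<\frac12$ and which uses the branch condition.
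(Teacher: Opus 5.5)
Your proof is correct and follows the paper's argument: treat the branch $w_t\ge\frac12$, verify the identity $w_tq_t^++(1-w_t)q_t^-=\frac12$ directly, bound $q_t^+$, and handle the other branch by symmetry. The only cosmetic difference is in bounding $q_t^+$: the paper rewrites $q_t^+=p+\frac{\frac12-p}{w_t}$ and reads off the range $[\frac12,1-p]$ from monotonicity in $w_t\in[\frac12,1]$, whereas you cross-multiply each inequality separately.
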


\begin{proof}
Suppose first that~$w_t\ge \frac12$.
Then by definition~$q_t^-=p$, and
\[
q_t^+ = \frac{\frac12-(1-w_t)p}{w_t}
= p + \frac{\frac12-p}{w_t}.
\]
Since~$w_t\in[\frac12,1]$, it follows that
\[
\frac12 \le q_t^+ \le 1-p.
\]
Moreover,
\[
w_t q_t^+ + (1-w_t)q_t^-
=
w_t \cdot \frac{\frac12-(1-w_t)p}{w_t} + (1-w_t)p
=
\frac12.
\]

The case~$w_t<\frac12$ is symmetric: now~$q_t^+=p$ and
\[
q_t^- = \frac{\frac12-w_t p}{1-w_t}
\]
lies in~$[\frac12,1-p]$, while
\[
w_t q_t^+ + (1-w_t)q_t^- = \frac12.
\]
This proves all three claims.
\end{proof}

As in the previous subsection, any target next-bit probability in~$[p,1-p]$ is implementable exactly through the noisy channel.
Indeed, if the actual crossover probability at time~$t+1$ is~$p_{t+1}\le p$, then for any desired~$q\in[p,1-p]\subseteq [p_{t+1},1-p_{t+1}]$ the sender can choose the distribution of~$X_{t+1}$ so that the resulting received bit satisfies
\[
\Pr[Y_{t+1}=+1\mid \text{sender information}] = q.
\]
Thus the sender can realize the law~$q_t^+$ when~$o=+1$ and the law~$q_t^-$ when~$o=-1$.

The posterior itself can be updated efficiently from the observed transcript, since by Lemma~\ref{lem:optimal-control-basic},
\[
\Pr[Y_{t+1}=+1\mid Y_{1:t}] = w_t q_t^+ + (1-w_t)q_t^- = \frac12,
\]
Bayes' rule gives
\begin{equation}\label{eq:optimal-posterior-update}
w_{t+1}
=
\begin{cases}
2 w_t q_t^+ & \text{if } Y_{t+1}=+1,\\[1ex]
2 w_t (1-q_t^+) & \text{if } Y_{t+1}=-1.
\end{cases}
\end{equation}
Hence both sender and receiver can maintain~$w_t$ online in time polynomial in~$n$.

\begin{theorem}[Exponential-error signaling with perfect uniformity]
\label{thm:optimal-signaling}
Fix~$p\in(0,\frac12)$, and define
\[
\lambda(p):=\frac{\sqrt p+\sqrt{1-p}}{\sqrt2}<1.
\]
Then for every~$n$ there is an efficiently computable signaling strategy in the feedback model such that:

\begin{enumerate}
    \item The unconditional distribution of~$Y$ is exactly uniform on~$\{\pm1\}^n$;

    \item The optimal decoder for~$o$ given~$Y$ is the efficiently computable MAP decoder, namely
    \[
    \widehat{o}(Y)=
    \begin{cases}
    +1 & \text{if } w_n(Y)\ge \frac12,\\
    -1 & \text{otherwise;}
    \end{cases}
    \]

    \item The signaling error probability satisfies
    \[
    \Pr[\widehat{o}(Y)\neq o]\le \frac12\,\lambda(p)^n = \exp(-\Omega_p(n)).
    \]
\end{enumerate}
\end{theorem}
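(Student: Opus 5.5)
The plan is to verify the three claims in order. Uniformity and MAP optimality follow essentially from Lemma~\ref{lem:optimal-control-basic}. The error bound comes from a Bhattacharyya-coefficient (Hellinger affinity) argument showing that the two conditional transcript laws $P_+^n$ and $P_-^n$ separate geometrically.

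\textbf{Uniformity and efficiency.} By Lemma~\ref{lem:optimal-control-basic}(1), both targets $q_t^\pm$ lie in $[p,1-p]\subseteq[p_{t+1},1-p_{t+1}]$. Lemma~\ref{lem:majority-feasibility} then lets the sender realize them exactly, choosing $X_{t+1}$ from her knowledge of $o$, of the prefix $Y_{1:t}$ (obtained through feedback) and of $p_{t+1}$. Part~(2) of Lemma~\ref{lem:optimal-control-basic} gives
\[
\Pr[Y_{t+1}=+1\mid Y_{1:t}]=w_tq_t^++(1-w_t)q_t^-=\tfrac12
\]
for every prefix. Chaining this over $t$ shows $Y$ is exactly uniform. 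Both parties track $w_t$ through the update~\eqref{eq:optimal-posterior-update}, so $q_t^\pm$ and $\widehat{o}$ are computable online in $\poly(n)$ time. Since the prior is uniform, the posterior rule $w_n\ge\frac12$ is by definition the MAP decoder and hence optimal.

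\textbf{Error bound.} Under a uniform prior, the MAP error equals
\[
\sum_y \tfrac12\min\bigl(P_+^n(y),P_-^n(y)\bigr)\le \tfrac12\sum_y\sqrt{P_+^n(y)P_-^n(y)}=:\tfrac12 B_n .
\]
It therefore suffices to prove $B_{t+1}\le\lambda(p)B_t$ with $B_0=1$. Conditioning on the prefix,
\[
B_{t+1}=\sum_{y_{1:t}}\sqrt{P_+^t P_-^t}\cdot\beta_t,\qquad \beta_t:=\sqrt{q_t^+q_t^-}+\sqrt{(1-q_t^+)(1-q_t^-)} ,
\]
so the goal reduces to the pointwise bound $\beta_t\le\lambda(p)$. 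By part~(3) of Lemma~\ref{lem:optimal-control-basic}, one of $q_t^\pm$ equals $p$ and the other is some $q\in[\frac12,1-p]$. Then
\[
\beta_t=\sqrt{pq}+\sqrt{(1-p)(1-q)} .
\]
This is the Bhattacharyya coefficient between $\mathrm{Ber}(p)$ and $\mathrm{Ber}(q)$, which decreases as $q$ moves away from $p$. On $[\frac12,1-p]$ it is therefore maximized at $q=\frac12$, where it equals $(\sqrt p+\sqrt{1-p})/\sqrt2=\lambda(p)$. Iterating gives $B_n\le\lambda(p)^n$, hence $\Pr[\widehat{o}\ne o]\le\frac12\lambda(p)^n$. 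Finally $\lambda(p)<1$ by strict concavity of $\sqrt{\cdot}$ whenever $p\ne\frac12$.

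\textbf{Main obstacle.} The monotonicity claim, and with it the observation that the worst case $q=\frac12$ occurs exactly when $w_t=1$, is the only delicate step. I would verify it by differentiating in $q$:
\[
\frac{d\beta_t}{dq}=\tfrac12\Bigl(\sqrt{p/q}-\sqrt{(1-p)/(1-q)}\Bigr)<0 \quad\text{for } q>p .
\]
A secondary point is that the decomposition of $B_{t+1}$ uses that $q_t^\pm$ depend only on the public prefix, which holds by construction.
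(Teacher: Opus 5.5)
Your proposal is correct and follows the paper's proof essentially step for step. You get uniformity from the conditionally-fair next bit in Lemma~\ref{lem:optimal-control-basic}, MAP optimality by definition of $w_n$, and the error bound from the Bhattacharyya sum, with per-step factor $\sqrt{pq}+\sqrt{(1-p)(1-q)}$ shown decreasing on $[\frac12,1-p]$ by the same derivative computation, which yields $\frac12\lambda(p)^n$.
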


\begin{proof}
The strategy is exactly the one defined above.
Efficient computability follows from the explicit formulas~\eqref{eq:optimal-q-plus-branch},~\eqref{eq:optimal-q-minus-branch}, the implementability of any target bias in~$[p,1-p]$, and the posterior update rule~\eqref{eq:optimal-posterior-update}.

For perfect uniformity, it suffices to show that each next bit is conditionally fair given the past.
But this is exactly the identity in Lemma~\ref{lem:optimal-control-basic}:
\[
\Pr[Y_{t+1}=+1\mid Y_{1:t}]
=
w_t q_t^+ + (1-w_t) q_t^-
=
\frac12.
\]
Hence by induction the full transcript~$Y$ is exactly uniform on~$\{\pm1\}^n$.

The decoder in item~(2) is simply the MAP decoder by definition of the posterior~$w_n$.
It remains to bound its error probability.
Let
\[
P_+(y):=\Pr[Y=y\mid o=+1],
\qquad
P_-(y):=\Pr[Y=y\mid o=-1].
\]
For equal priors, the MAP error is
\[
\Pr[\widehat{o}(Y)\neq o]
=
\frac12 \sum_{y\in\{\pm1\}^n} \min\{P_+(y),P_-(y)\}.
\]
Using~$\min\{a,b\}\le \sqrt{ab}$, we get
\begin{equation}\label{eq:optimal-bhattacharyya}
\Pr[\widehat{o}(Y)\neq o]
\le
\frac12 \sum_{y\in\{\pm1\}^n} \sqrt{P_+(y)P_-(y)}.
\end{equation}
For~$t=0,1,\dots,n$, define
\[
A_t:=\sum_{y_{1:t}\in\{\pm1\}^t} \sqrt{P_+^t(y_{1:t})P_-^t(y_{1:t})},
\]
so that~$A_0=1$ and the right-hand side of~\eqref{eq:optimal-bhattacharyya} is~$\frac12 A_n$.

Now fix a prefix~$y_{1:t}$.
By the chain rule,
\[
P_+^{t+1}(y_{1:t},+1)=P_+^t(y_{1:t})\,q_t^+,
\qquad
P_-^{t+1}(y_{1:t},+1)=P_-^t(y_{1:t})\,q_t^-,
\]
and similarly
\[
P_+^{t+1}(y_{1:t},-1)=P_+^t(y_{1:t})\,(1-q_t^+),
\qquad
P_-^{t+1}(y_{1:t},-1)=P_-^t(y_{1:t})\,(1-q_t^-).
\]
Therefore
\[
A_{t+1}
=
\sum_{y_{1:t}} \sqrt{P_+^t(y_{1:t})P_-^t(y_{1:t})}\; b_t(y_{1:t}),
\]
where
\[
b_t(y_{1:t})
:=
\sqrt{q_t^+q_t^-}
+
\sqrt{(1-q_t^+)(1-q_t^-)}.
\]

By Lemma~\ref{lem:optimal-control-basic}, one of~$q_t^+,q_t^-$ equals~$p$, while the other lies in~$[\frac12,1-p]$.
Hence
\[
b_t(y_{1:t}) = \sqrt{p q}+\sqrt{(1-p)(1-q)}
\]
for some~$q\in[\frac12,1-p]$.
The function
\[
g(q):=\sqrt{p q}+\sqrt{(1-p)(1-q)}
\]
has derivative
\[
g'(q)=\frac{\sqrt p}{2\sqrt q}-\frac{\sqrt{1-p}}{2\sqrt{1-q}},
\]
whose unique zero is at~$q=p<\frac12$.
Thus~$g$ is decreasing on~$[\frac12,1-p]$, and so
\[
b_t(y_{1:t})\le g\!\left(\frac12\right)=\frac{\sqrt p+\sqrt{1-p}}{\sqrt2}=\lambda(p).
\]
It follows that
\[
A_{t+1}\le \lambda(p)\, A_t,
\]
and therefore
\[
A_n\le \lambda(p)^n.
\]
Combining this with~\eqref{eq:optimal-bhattacharyya} yields
\[
\Pr[\widehat{o}(Y)\neq o]\le \frac12\,\lambda(p)^n.
\]
This completes the proof.
\end{proof}

As before, plugging this signaling theorem into Proposition~\ref{prop:compile-ut-to-feedback-pnr} immediately improves the black-box overhead.
Indeed, to make the decoding error of a single simulated transcript bit at most~$1/(100T)$, it now suffices to choose~$n=\Theta(\log T)$.
Thus any PR-KE protocol of transcript length~$T$ can be compiled into a feedback PNR-KE protocol of total length~$O(T\log T)$ while preserving pseudorandomness and constant correctness.
Applying Theorem~\ref{thm:optimal-signaling} in Proposition~\ref{prop:compile-ut-to-feedback-pnr}, and choosing~$n=\Theta_{p,\eta}(\log T)$ so that the signaling error is at most~$\eta/T$, yields Theorem~\ref{thm:ut_to_feedback_pnr_log}.

\begin{remark}[Exponential error is best possible]
In the standard fixed-noise case of a~$\BSC(p)$ with constant~$p\in(0,\frac12)$, the exponentially small error of Theorem~\ref{thm:optimal-signaling} is qualitatively optimal.
Indeed, regardless of the signaling strategy, any received transcript~$y\in\{\pm1\}^n$ can arise from any transmitted transcript with probability at least~$p^n$, simply by flipping every disagreeing coordinate.
Thus under both hypotheses~$o=+1$ and~$o=-1$, every transcript has probability at least~$p^n$, which already forces the MAP error to be at least exponential in~$n$.
So one cannot hope for a super-exponentially small signaling error.
\end{remark}

\section{Discussion and Open Problems}
\label{sec:discuss}

From the AI security perspective, our results show that even very strong \emph{passive} auditing cannot in general prevent AI agents from covertly exchanging information and continuing a hidden conversation, even when the auditor knows the full model descriptions, the private contexts, and the entire protocol, and only the agents' internal randomness remains hidden.
In particular, after-the-fact transcript auditing is fundamentally insufficient as a universal mechanism for trust.
This suggests that robust AI security will ultimately have to rely on methods other than monitoring, such as provably secure training paradigms, architectural restrictions, externally verifiable certificates, or other design-level guarantees that still remain to be developed.

One may hope that the mechanism we construct here is {\em not emergent}. We believe that such hope may be na\"{i}ve. First, LLMs of the present time have access to coding agents and also access to content on the internet (for example, a copy of this paper once it is published).  Put together, we do not see anything fundamental preventing LLMs from being able to engage in covert communications. Secondly, there is considerable empirical evidence (see, e.g.~\cite{cloud2025subliminallearninglanguagemodels,korbak2025chainthoughtmonitorabilitynew,DBLP:journals/corr/abs-2602-04863} in just the last year) that demonstrate covert communication arising in surprisingly subtle and unexpected ways. Put together, we believe that the hope of detecting covert communication may be misplaced.

At the same time, we believe there is hope. We stress that our monitor is passive: it observes many executions and knows the full protocol description, but does not interfere with the conversation.
Extending the model to active monitoring is highly nontrivial.
At least at a na\"{i}ve level, covert key exchange appears fundamentally brittle against active tests with full knowledge: if a monitor can itself engage Bob in the covert key-exchange procedure and then check whether the two sides derived the same key, successful agreement may already expose Bob's deviation from the honest protocol.
Understanding the right analog of ``active-warden security'' in our setting, in the spirit of classical steganography~\cite{hopper2002provably}, is therefore an interesting open direction.

From the cryptographic perspective, several major open problems remain concerning pseudorandom noise-resilient key exchange.
On the positive side, we presented a candidate protocol with plausible quasipolynomial hardness, and then a stronger protocol using noiseless feedback, which is sufficient for our main application.
On the negative side, we showed that no non-interactive protocol, or even any protocol that yields a weak-correlation non-interactive protocol, can avoid quasipolynomial attacks.
This leaves a central question open: {\em can genuine interaction overcome these barriers and yield substantially stronger PNR-KE protocols without noiseless feedback}?
Conversely, can our lower bounds be extended to the interactive setting, thereby showing that such barriers are inherent and not merely artifacts of the non-interactive case?

A related issue comes up in our feedback-based constructions.
Our feedback-based protocol depends heavily on the precise noise law through posterior updates.
In many coding-theoretic settings, a scheme designed for $\BSC(p)$ is automatically robust to any ``better'' noise, such as $\BSC(p/2)$.
In contrast, our posterior-based protocol hardcodes the exact noise distribution into the algorithm itself, and therefore does not automatically tolerate such improvements.
Removing this dependence on exact knowledge of the noise distribution seems to be a natural and desirable generalization.

We believe these are fascinating questions that open the doors to an exciting research agenda.

\bibliographystyle{alpha}
\bibliography{payload}

\end{document}

%% file: macros.tex
\usepackage{fullpage}
\usepackage{framed}

\usepackage[usenames,dvipsnames]{xcolor}
\usepackage{soul}
\usepackage{color}
\usepackage{microtype}
\usepackage{graphicx}
\usepackage{booktabs}

\usepackage[table]{xcolor}
\usepackage{multirow}
\usepackage{fancybox}
\usepackage{pifont} 

\definecolor{DarkBlue}{RGB}{0,0,150}
\definecolor{NotSoDarkBlue}{RGB}{15,15,210}
\definecolor{DarkRed}{RGB}{150,0,0}
\definecolor{DarkGreen}{RGB}{0,100,0}
\usepackage[pdfstartview=FitH,pdfpagemode=UseNone,colorlinks,linkcolor=DarkRed,filecolor=blue,citecolor=DarkRed,urlcolor=DarkRed,pagebackref,breaklinks]{hyperref}
\usepackage{backref}

\usepackage[ruled]{algorithm2e}

\usepackage{amsmath,amssymb,amsthm}
\usepackage{mathtools}
\usepackage{tabularx}
\usepackage{multirow}
\usepackage[capitalize,noabbrev]{cleveref}
\usepackage{bbm}
\usepackage{xcolor}

\usepackage{tocloft}

\theoremstyle{plain}
\newtheorem{theorem}{Theorem}[section]
\newtheorem*{theorem*}{Theorem}
\newtheorem{proposition}[theorem]{Proposition}
\newtheorem{lemma}[theorem]{Lemma}
\newtheorem{corollary}[theorem]{Corollary}
\theoremstyle{definition}
\newtheorem{definition}[theorem]{Definition}

\theoremstyle{remark}
\newtheorem{remark}[theorem]{Remark}

\usepackage[textsize=tiny]{todonotes}

\newcommand{\E}{\mathop{{}\mathbb{E}}}

\newcommand{\Setup}{\mathsf{Setup}}
\newcommand{\steg}{\mathsf{Steg}}

\newcommand{\calT}{\mathcal{T}}
\newcommand{\calH}{\mathcal{H}}
\newcommand{\calC}{\mathcal{C}}

\newcommand{\Model}{\mathsf{Model}}
\newcommand{\RModel}{\overline{\mathsf{Model}}}

\newcommand{\prompt}{\textsc{prompt}}
\newcommand{\payload}{\textsc{payload}}

\newcommand{\done}{\texttt{done}}
\newcommand{\empH}{H_e}

\newcommand{\poly}{\text{poly}}
\newcommand{\state}{\mathsf{st}}      
\newcommand{\stA}{\state_A}           
\newcommand{\stB}{\state_B}           
\newcommand{\Dec}{\text{Decode}}
\newcommand{\negl}{\mathsf{negl}}
\newcommand{\Embed}{\mathsf{Embed}}

\newcommand{\sgn}{\mathrm{sign}}
\newcommand{\Maj}{\mathrm{Maj}}
\newcommand{\BSC}{\mathrm{BSC}}

\newcommand{\Ber}{\mathrm{Ber}}
\newcommand{\Stab}{\mathrm{Stab}}

\newcommand{\Unif}{\mathrm{U}}

\def\fF{\mathbb{F}}
\def\vecs{\mathbf{s}}
\def\vece{\mathbf{e}}
\def\vecr{\mathbf{r}}
\def\vecf{\mathbf{f}}
\def\veca{\mathbf{a}}
\def\vecb{\mathbf{b}}
\def\vecu{\mathbf{u}}
\def\vecv{\mathbf{v}}

 \def\vectt{\mathbf{t}}
 \def\matC{\mathbf{C}}

\def\fF{\mathbb{F}}
\def\LSPN{\mathsf{LSPN}}
\def\LPN{\mathsf{LPN}}

\def\pp{\mathsf{pp}}
\def\Bern{\mathsf{Ber}}
\def\veck{\mathbf{k}}

\ifsubmission
\newcommand{\orz}[1]{}
\newcommand{\vinod}[1]{}

\else
\newcommand{\authnote}[2]{[{\color{magenta}\textbf{#1:}}~{\color{blue} #2}]}
\newcommand{\orz}[1]{\authnote{Or}{#1}}
\newcommand{\vinod}[1]{{\color{red}{\bf Vinod: }{\small [#1]}}}
\fi

\def\Ext{\mathsf{Ext}}

\def\Decode{\mathsf{Decode}}

\def\EstBSC{\mathsf{ComputeBSC}}

%% file: key-exchange-protocol.tex
\section{Pseudorandom, Noise-Resilient Key Exchange (PNR-KE)}
\label{sec:prnr-ke}

We first define the notion of a pseudorandom noise-resilient key exchange protocol (PNR-KE) formally in Section~\ref{sec:pnrke-def}. Following the description of the main tool we use, namely learning sparse parities with noise (LSPN) (in Section~\ref{sec:lspn}), we describe our protocol and analyze it in Section~\ref{sec:simpleprot}. 

\paragraph{Notation.} 
We will let $\mathsf{Ber}(p)$ denote the Bernoulli distribution over $\{0,1\}$ with parameter $p \in [0,1]$, and $\mathsf{Sparse}(n,k)$ denote the uniform distribution over vectors in $\fF_2^n$ with Hamming weight exactly $k$. 

\subsection{Definition}
\label{sec:pnrke-def}

Pseudorandom, Noise-Resilient Key Exchange (PNR-KE) is a strengthening of uniform-transcript key exchange (UT-KE, see Theorem~\ref{thm:utke}) which requires that the key exchange protocol succeeds even when all the protocol messages are transmitted through a binary symmetric channel $\mathsf{BSC}_{p}$. That is, each bit is flipped independently with probability $p$.

\begin{definition}[PNR-KE]
\label{def:pnrke}
An $r$-round Pseudorandom, Noise-Resilient Key Exchange (PNR-KE) protocol with noise parameter $p\in [0,1]$ is a pair of polynomial-time interactive algorithms $A$ and $B$. On input the security parameter $1^\lambda$ and the public parameters $\pp$, $A$ and $B$ interact as follows:
\begin{itemize}
    \item $A$ and $B$ sample their private random strings $r_A$ and $r_B$; they set their local states to be $\mathsf{st}_A = (\pp, r_A)$ and $\mathsf{st}_B = (\pp, r_B)$; their local protocol views (transcripts) to be $\tau_A = \tau_B = \emptyset$ (the empty string), respectively. Set the global protocol view $\tau^* = \emptyset$ and the error string $e = \emptyset$ as well.
    
    \item In each odd-numbered round $i \leq r$, $A$ generates a message $m_{A,i}$ as a function of its private state $\mathsf{st}_A$ and the transcript $\tau_A$ that it saw so far. 
    Update $\tau_B$ and $\tau^*$ to  \begin{equation}\label{eqn:A} 
    \tau_B  \gets \tau_B \ ||\  \tilde{m}_{A,i}, \hspace{.1in} 
    \tau^* \gets \tau^* \ ||\ m_{A,i} 
    \hspace{.1in}
    \mbox{and}
     \hspace{.1in}
     e \gets e\ ||\  e_i
    \end{equation} where $\tilde{m}_{A,i} = m_{A,i} + e_i$ with $e_i \sim \mathsf{Bern}(p)^{|m_{A,i}|}$. If $i \in \{r,r+1\}$, output a shared key $k_A \in \{0,1\}^{\lambda}$ and halt.
    \item In each even-numbered round $i$, $B$ generates a message $m_{B,i}$ as a function of its private state $\mathsf{st}_B$ and the transcript $\tau_B$ that it saw so far. Update $\tau_A$ and $\tau^*$  to  
    \begin{equation}\label{eqn:B}
    \tau_A \gets \tau_A \ ||\  \tilde{m}_{B,i}, \hspace{.1in} 
    \tau^* \gets \tau^* \ ||\ m_{B,i} 
    \hspace{.1in}
    \mbox{and}
     \hspace{.1in}
     e \gets e\ ||\  e_i
    \end{equation} 
    where $\tilde{m}_{B,i} = m_{B,i} + e_i$ with $e_i \sim \mathsf{Ber}(p)^{|m_{B,i}|}$. If $i \in \{r,r+1\}$, output a shared key $k_B \in \{0,1\}^{\lambda}$ and halt.
\end{itemize}
The following properties hold:
\begin{enumerate}
    \item \textbf{Correctness:} $\Pr[k_A=k_B]=1-\negl(\lambda)$, where the probability is taken over the public parameter $\pp$, the internal randomness $r_A$ and $r_B$ or $A$ and $B$ respectively, as well as the randomness $e$ introduced by the channel.
    \item \textbf{Uniform transcript:} The transcript $\tau^*$ transmitted by $A$ and $B$ is computationally indistinguishable from
    random, even given the channel noise $e$.
    \item \textbf{Key indistinguishability:} letting $k:=k_A=k_B$ denote the common key, it is computationally
    infeasible to distinguish $(\tau^*,e,k)$ from $(\tau^*,e,U_{\lambda})$ where $U_{\lambda}$ denotes the uniform distribution over $\{0,1\}^{\lambda}$.
\end{enumerate}
\end{definition}

\noindent
A number of remarks are in order:
\begin{enumerate}
    \item  We note that in the definition above, $A$ (resp. $B$) knows the message $m_{A,i}$ (resp. $m_{B,i}$) that it sent (de facto, as it is a deterministic function of its current state and its current protocol view) but {\em does not know} the message $\tilde{m}_{A,i}$ (resp. $\tilde{m}_{B,i}$), a noisy version of the transmitted message, that was received by the other party. Thus, this is a definition of pseudorandom noise-resilient key exchange (PNR-KE) {\em without feedback}.  

    \item 
    We also note that the adversary sees the global transcript that includes all the transmitted (non-noisy) messages as well as the noise $e$ introduced by the channel (and can therefore simulate all received messages). Because of our definitional choice to let the adversary see the channel noise $e$, requiring the pseudorandomness of the concatenation of all transmitted messages (that is, $\tau^*$) is equivalent to requiring the pseudorandomness of the concatenation of all received messages.
    
    We view this as a natural definition that we believe will find other applications; as such, this is the definition we mean when we say PNR-KE.

    \item While this section constructs a PNR-KE protocol without feedback, in Section~\ref{sec:feedback_pnr-ke}, we will consider the setting where $A$ (resp. $B$) knows not only the transmitted message, but also the received message. We refer to this as {\em PNR-KE with feedback}, a notion that is sufficient for our application to covert conversations (see Theorem~\ref{thm:dep-rand-ke}). 
    
    Definitionally, the only difference from Definition~\ref{def:pnrke} is to change the update rules for the local transcripts to be 
    $$\tau_B  \gets \tau_B \ ||\  \tilde{m}_{A,i}, \hspace{.1in} 
    \fbox{$\tau_A \gets \tau_A \ ||\ \tilde{m}_{A,i}$},
    \hspace{.1in} 
    \tau^* \gets \tau^* \ ||\ m_{A,i} 
    \hspace{.1in}
    \mbox{and}
     \hspace{.1in}
     e \gets e\ ||\  e_i
    $$ 
    after $A$'s transmission, and 
    $$\tau_A  \gets \tau_A \ ||\  \tilde{m}_{B,i}, \hspace{.1in} 
    \fbox{$\tau_B \gets \tau_B \ ||\ \tilde{m}_{B,i}$},
    \hspace{.1in} 
    \tau^* \gets \tau^* \ ||\ m_{B,i} 
    \hspace{.1in}
    \mbox{and}
     \hspace{.1in}
     e \gets e\ ||\  e_i
    $$ 
    after $B$'s transmission.
    \item 
    In our definition, it is possible for the protocol to depend on the exact noise parameter $p$. It would be more general if the protocol only knew an {\em upper bound} on $p$; indeed, the exact noise parameter could differ message-by-message or even bit-by-bit, as long as it is upper-bounded by $p$. The protocol we present in this section achieves correctness and security in the presence of such noise (although we will not formally prove it). The protocol only requires knowledge of the upper bound $p$.
    
    In contrast, the protocol in Section~\ref{sec:feedback_pnr-ke} will crucially use the knowledge of the exact noise parameter in each transmission. Once again, this suffices for our application to covert conversations by Theorem~\ref{thm:dep-rand-ke}.
\end{enumerate}

\subsection{Learning (Sparse) Parities with Noise}
\label{sec:lspn}

The learning parities with noise (LPN) problem~\cite{DBLP:conf/crypto/BlumFKL93,DBLP:journals/jacm/BlumKW03} $\LPN_{n,m,\eta}$ with parameters $n,m \in \mathbb{N}$ and $\eta \in [0,1/2)$ asks to distinguish between $m$ samples of the form 
$\big(\veca_i, \langle \veca_i, \vecs \rangle + e_i\big)_{i=1}^m$ and those of the form $\big(\veca_i, b_i\big)_{i=1}^m$ where 
$\veca_i \sim \fF_2^n, b_i\sim \fF_2$ are uniformly random, $e_i \sim \mathsf{Ber}(\eta)$,
and $\vecs \sim \fF_2^n$ is a uniformly random vector in $\fF_2^n$. We will use $\eta$ to denote the (intrinsic) LPN noise parameter to distinguish it from the (extrinisc) channel noise parameter, which we will call $p$. 
The best known algorithms for LPN in the high noise regime $\eta = \Omega(1/\log n)$ run in nearly exponential time, namely they require samples and runtime $2^{\Omega(n/\log n)}$~\cite{DBLP:journals/jacm/BlumKW03}. Consequently, the problem has seen wide application in cryptography.

In the learning {\em sparse} parities with noise variant $\LSPN_{n,k,\eta}$ with parameters $n, m, k \in \mathbb{N}$ and $\eta \in [0,1/2)$, the secret vector $\vecs \sim \mathsf{Sparse}(n,k)$ is a uniformly random vector 
in $\fF_2^n$ with Hamming weight $k$~\cite{DBLP:journals/siamcomp/FeldmanGKP09,DBLP:conf/focs/Valiant12,DBLP:conf/stoc/KolRT17}. 
We will consider $\LSPN$ in the high-noise regime where $\eta = \Omega(1)$. 

The problem plays an important role in learning theory: Feldman, Gopalan, Khot and Ponnuswami~\cite{DBLP:journals/siamcomp/FeldmanGKP09} reduced central problems in learning theory such as learning $k$-juntas and DNFs to learning sparse parities with noise. Despite considerable work~\cite{DBLP:conf/alt/GrigorescuRV11,DBLP:conf/focs/Valiant12,DBLP:conf/stoc/KolRT17,DBLP:journals/tcs/YanYLZZ21,DBLP:conf/tcc/Dachman-SoledGK21,DBLP:conf/colt/ChenSZ25}, the best known algorithm for $\LSPN$ in the high-noise regime runs in time $(n/k)^{O(k)}$ (using as many samples)~\cite{DBLP:conf/colt/ChenSZ25}. 
While many of the algorithmic works focus on solving the harder search problem, which requires finding $\vecs$ given the LPN or LSPN samples, (sample-preserving) search-to-decision reductions~\cite{DBLP:conf/crypto/BlumFKL93,DBLP:journals/joc/ApplebaumIK09} tell us that  the decisional version defined above is as hard as the search version.

In our protocol, we will use the hardness of the (decisional) LSPN problem with sparsity $k = O(\log n)$ which gives us 
conjectured quasi-polynomial security. 

We will need the following lemma, whose proof follows via a standard hybrid argument.
\begin{lemma}[L(S)PN Implies Multi-Secret L(S)PN]\label{lem:multi-secret}
    Under the L(S)PN assumption with parameters $n,m,\eta$ (and optionally $k$ in the sparse secret setting), the following two distributions are computationally indistinguishable, for any $m' = \mathsf{poly}(n)$:
    $$ \big( \veca_i, \langle \veca_i ,\vecs_j \rangle + e_{i,j} \big)_{i\in [m], j\in [m']} \approx_c \big( \veca_i, b_{i,j} \big)_{i\in [m], j\in [m']} $$
    where $\veca_i \sim \fF_2^n$, $e_{i,j} \sim \Bern(p)$, $b_{i,j} \sim \fF_2$, $\vecs_j \sim  \fF_2^n$ (or, in the case of LSPN, $\vecs_j \sim \mathsf{Sparse}(n,k)$).
\end{lemma}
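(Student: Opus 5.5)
The plan is a standard hybrid argument over the $m'$ secrets, with the public vectors $\veca_i$ shared across all hybrids. For $j^\ast\in\{0,1,\ldots,m'\}$, hybrid $H_{j^\ast}$ is the distribution in which columns $j\le j^\ast$ are uniform bits $b_{i,j}$ and columns $j>j^\ast$ are L(S)PN samples $\langle\veca_i,\vecs_j\rangle+e_{i,j}$ with independent secrets $\vecs_j$ and independent noise. Then $H_0$ is the left-hand distribution and $H_{m'}$ is the right-hand one. It suffices to show that consecutive hybrids are indistinguishable; since $m'=\poly(n)$, a distinguisher with advantage $\epsilon$ between the endpoints yields one with advantage $\epsilon/m'$ between some pair $H_{j^\ast-1},H_{j^\ast}$.

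The reduction for a fixed pair $(H_{j^\ast-1},H_{j^\ast})$ works as follows. It receives a single-secret L(S)PN challenge $(\veca_i,c_i)_{i\in[m]}$, which is either $c_i=\langle\veca_i,\vecs\rangle+e_i$ or uniform. It places $c_i$ in column $j^\ast$. For $j<j^\ast$ it samples fresh uniform bits. For $j>j^\ast$ it samples its own secrets $\vecs_j$, drawn uniformly or from $\mathsf{Sparse}(n,k)$ as appropriate, along with noise $e_{i,j}\sim\Bern(\eta)$, and computes $\langle\veca_i,\vecs_j\rangle+e_{i,j}$ using the same challenge vectors $\veca_i$. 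This simulation needs only the public $\veca_i$ and sampling access to the secret and noise distributions, all of which are efficient. When the challenge is real, the output is distributed exactly as $H_{j^\ast-1}$; when it is random, the output is exactly $H_{j^\ast}$. Guessing $j^\ast$ at random (or fixing the best one, non-uniformly) completes the reduction with a polynomial loss.

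There is no real obstacle here. The only points needing care are that all columns share the same $\veca_i$, so the reduction must reuse the challenge's $\veca_i$ rather than sample fresh ones, and that the noise rate in the statement, written $\Bern(p)$, should be read as the intrinsic rate $\eta$ so that the embedded column matches the assumed distribution. The sample count $m$ is unchanged, because each hybrid step consumes exactly one $m$-sample challenge.
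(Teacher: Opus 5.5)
Your proposal is correct and is the standard hybrid argument over the $m'$ secrets that the paper invokes without further detail. You correctly reuse the challenge vectors $\veca_i$ in every column, and you read the noise rate written as $\Bern(p)$ as the intrinsic rate $\eta$.
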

\begin{remark}
    The LSPN problem should not be confused with the ``sparse LPN''  or $k$-XOR 
    problem where the equations, namely the vectors $\veca$, are sparse and the 
    secret $\vecs$ is uniformly random.
\end{remark}

\subsection{Our Protocol}
\label{sec:simpleprot}

\paragraph{Our Starting Point.}
Our starting point is a simple (two message) key-exchange protocol based on the LPN problem: this is the Alekhnovich protocol ~\cite{DBLP:conf/focs/Alekhnovich03}. The public parameters used in the protocol consists of a matrix $\matC \in \fF_2^{n\times n}$. Alice samples secret and error vectors $\vecs, \vece \sim \Bern(\eta)^n$,  and sends $\matC \vecs + \vece$ (computed over $\fF_2$) to Bob.  Bob does the same: he samples $\vecr, \vecf \sim \Bern(\eta)^n$ and sends $\matC^T \vecr + \vecf$ to Alice. Alice and Bob now compute their key bits 
$$k_A = \vecs^T (\matC^T \vecr + \vecf) = \vecs^T \matC^T \vecr + \vecs^T \vecf 
\hspace{.1in}
\mbox{and}
\hspace{.1in}
k_B = \vecr^T (\matC \vecs + \vece) = \vecr^T \matC^T \vecs + \vecr^T \vece 
$$ 
The keys will be equal if $\vecs^T \vecf + \vecr^T \vece$ (computed modulo $2$) is $0$. The hope is that if $\vecs,\vecr, \vece$ and $\vecf$ are all sparse enough, this will happen with good probability. Indeed, the public-key encryption scheme of Alekhnovich~\cite{DBLP:conf/focs/Alekhnovich03} sets parameters so that the sparsity of all these vectors is $O(\sqrt{n})$ ensuring that the key exchange succeeds with constant probability.  This can be repeated several times (in parallel) to get a sequence of bits $\veck_A$ and $\veck_B$ which agree on a constant fraction of their bits. 
It is worth noting that this protocol is naturally noise-resilient to Bernoulli noise $\Bern(O(1/\sqrt{n})$.

Finally, note that while LPN requires secrets the  $\vecs$ and $\vecr$ to be uniformly random, the protocol above draws $\vecs$ and $\vecr$ from $\Bern(\eta)^n$. This turns out not to matter: the standard version of LPN where $\vecs$ is uniformly random and  $\vece \sim \Bern(\eta)^n$ is as hard as the sparse-secret version used in the protocol above where both the secret and error vectors are drawn from $\Bern(\eta)^n$, by a result of \cite{DBLP:conf/crypto/ApplebaumCPS09}. An application of this theorem, together with a standard hybrid argument, tells us that 
the transcript of the protocol is pseudorandom under the LPN assumption with noise parameter $\eta = O(1/\sqrt{n})$.

Now, to amplify correctness and obtain the same shared key, one uses an information reconciliation protocol~\cite{DBLP:conf/crypto/Maurer92,DBLP:conf/stoc/Holenstein05} where Alice computes the syndrome of $\veck_A$ with respect to an efficiently syndrome-decodable code and sends it to Bob. Bob then uses the syndrome to error-correct $\veck_B$ to $\veck_A$. The intuition is that if $\veck_A$ has sufficient entropy to begin with, sending the (short) syndrome does not completely reveal it. 

\paragraph{Overview of Our Protocol.}
Unfortunately, neither the basic protocol nor the correctness amplification work in the presence of channel noise that flips each bit with probability $p=\Theta(1)$. With channel noise $\tilde{\vece}$ (resp. $\tilde{\vecf}$) added to the messages, the keys are computed as 
$$ 
k_A = \vecs^T (\matC^T \vecr + \vecf + \tilde{\vecf}) = \vecs^T \matC^T \vecr + \vecs^T (\vecf + \tilde{\vecf}) 
\hspace{.1in}
\mbox{and}
\hspace{.1in}
k_B = \vecr^T (\matC\vecs + \vece + \tilde{\vece}) = \vecr^T \matC^T \vecs + \vecr^T (\vece + \tilde{\vece}) 
$$
where $\vece+\tilde{\vece}$ and $\vecf+\tilde{\vecf}$ are distributed like $\Bern(p+\eta - 2p\eta) = \Bern(\Theta(1))$. 

To solve this, we use the learning {\em sparse} secrets with noise (LSPN) with parameters $n,m = \Theta(n)$, $k = \Theta(\log n)$ and $\eta = \Theta(1)$. In other words, we make the secret considerably sparser but make the error vector considerably denser compared to Alekhnovich's protocol.  Indeed, if the secret $\vecs$ has sparsity $k$, the bits $k_A$ and $k_B$ agree with probability $1/2+2^{-\Theta(k)}$. When $k = O(\log n)$, the key bits $k_A$ and $k_B$ agree with probability $1/2+1/\mathsf{poly}(n)$ which is sufficient correlation for us to amplify later on. With this setting of $k$ and $\eta$, we also get conjectured quasi-polynomial security. 

The standard protocols for correctness amplification via syndrome decoding~\cite{DBLP:conf/eurocrypt/BrassardS93} also do not work for us. Recall that our setting requires two properties from the syndromes: first, the syndrome itself should be (pseudo-)random assuming that Alice's key $\veck_A$ is (pseudo-)random; and secondly, and more problematically, decoding should work even given a noisy syndrome. While the first property is not hard to obtain, syndrome decoding with noisy syndromes appears to be a somewhat non-standard problem, with few known positive results; see~\cite{DBLP:conf/focs/SipserS94,DBLP:conf/stoc/Spielman95,gandikota2025noisysyndromedecodinghypergraphproduct} and the references therein. The fact that the Hamming distance between $\veck_A$ and $\veck_B$ is close to a $1/2$ makes it doubly challenging even for the few known techniques.

Fortunately, there is a simple solution to this problem which uses the fact that even when Alice and Bob do not agree on the same sequence $\veck_A$ and $\veck_B$, the sequence $\veck_A$ is pseudorandom given the transcript. 
To use this observation, one first runs the basic protocol to obtain long-enough keys $\veck_A$ and $\veck_B$ in $\{0,1\}^{\ell}$ where, with high probability, the Hamming distance between $\veck_A$ and $\veck_B$ is $$\Delta(\veck_A,\veck_B) \leq  \ell/2 - \ell \cdot 2^{-\Theta(k)} \leq \ell/2 - \Theta(\ell^{0.51})~.$$ 
We ask Alice to pick a random bit $\kappa_A$ and send $\veck_A$ if $\kappa_A = 0$ and send a uniformly random string $\veck'$ if $\kappa_A = 1$. Bob checks if the received string has Hamming distance at most $\ell/2 - \Theta(\ell^{0.51})$ from $\veck_B$; if this is the case, he sets $\kappa_B = 0$, otherwies $\kappa_B = 1$.

On the one hand, if $\kappa_A = 0$, Bob receives a noisy version of $\veck_A$ (corrupted by $\Bern(p)$ noise) which is Hamming-close to $\veck_B$, so he will set $\kappa_B = 0$. On the other hand, when $\kappa_A = 1$, Alice transmits a uniformly random string to Bob. The chance that a uniformly random string is close to $\veck_B$ is exponentially (in $\ell$) small, which leads Bob to set $\kappa_B = 1$. In summary, this tells us that $\kappa_A = \kappa_B$ with overwhelming probability.

We refer the reader to Figure~\ref{fig:simpleprot} for a  formal description of our protocol, and to Lemmas~\ref{lem:simpleprot:correct} and \ref{lem:simpleprot:secure} for the correctness and security analysis. 

\begin{figure}
\begin{framed}
\normalsize 

\textbf{Parameters:} $\lambda$, the security parameter; $n$, the LSPN dimension; $k$, the LSPN sparsity parameter; $\eta \in [0,1/2)$, the LSPN noise parameter; and $\ell = \lambda\cdot 2^{\Theta(k)}$, the number of repetitions.
\medskip\noindent

The public parameter consists of  a uniformly random $n\times n$ matrix $\matC \in \fF_2^{n\times n}$. The protocol $(A,B)$ proceeds as follows.
\begin{itemize}
    \item For $i = 1$ to $\ell$: (these iterations can be done in parallel)
    \begin{itemize}
        \item $A$ generates a uniformly random $k$-sparse vector $\vecs_i \in \fF_2^n$ and an error vector $\vece_i \sim \mathsf{Ber}(\eta)^n$. $A$ sends to $B$ $$\veca_i = \matC \vecs_i + \vece_i~,$$ and $B$ receives a noisy version $\tilde{\veca}_i$. 

         \item Symmetrically, $B$ generates and sends to $A$ $$\vecb_i^T = \vecr_i^T \matC + \vecf_i^T$$ 
         where $\vecr_i \in \fF_2^n$ is a uniformly random $k$-sparse vector and $\vecf_i \sim \mathsf{Ber}(\eta)^n$. $A$ receives a noisy version $\tilde{\vecb}_i$. 

         \item $A$ computes the bit $k_{A,i} := \tilde{\vecb}_i^T \vecs_i$, and $B$ computes the bit $k_{B,i} := \vecr_i^T \tilde{\veca}_i$.   
    \end{itemize}

    \item $A$ sets $\veck_A := (k_{A,1},\ldots,k_{A,\ell})$. She picks a uniformly random key bit $\kappa_A \sim \{0,1\}$. 
    \begin{itemize}
        \item If $\kappa_A = 1$, $A$ sends $B$ a sequence of $\ell$ uniformly random bits $\vectt \sim \{0,1\}^\ell$; and
        \item If $\kappa_A = 0$, $A$ sends $B$ the sequence of bits $\vectt = \veck_A \oplus \vectt'$ where $\vectt'\sim \Bern(\eta)^\ell$. 
    \end{itemize} 
    $B$ receives $\tilde{\vectt}$ from $A$.

    \item $B$  sets $\veck_B = (k_{B,1},\ldots, k_{B,\ell})$ and checks that the Hamming distance between $\tilde{\vectt}$ and $\veck_B$ is larger than $\ell/2-\sqrt{\lambda\ell}$. If yes, $B$ sets his key bit $\kappa_B = 1$, else $\kappa_B = 0$. 
    
\end{itemize}
\end{framed}
\caption{Our Pseudorandom Noise-Resilient Key Exchange Protocol to agree on a single key bit, tolerating a constant channel noise rate. To agree on a long $\lambda$-bit  key, execute this protocol $\lambda$ times in parallel.}
\label{fig:simpleprot}
\end{figure}

\begin{lemma}[Correctness]\label{lem:simpleprot:correct}
  Let $n, m, k \in \mathbb{N}$, and $\eta,p \in [0,1]$ where $m = \Theta(n)$, $k = \Theta(\log n)$ and $\eta=p = \Theta(1)$.  The protocol in Figure~\ref{fig:simpleprot} with $\ell = 2^{\Theta(k)}\cdot \lambda$ achieves correctness with probability $1-2^{-\lambda}$  tolerating a channel noise rate of $p=\Theta(1)$.
\end{lemma}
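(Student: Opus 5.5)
Fix a single execution of the protocol in Figure~\ref{fig:simpleprot}. The plan has two stages: first bound the per-coordinate disagreement between $\veck_A$ and $\veck_B$, then show that Bob's threshold test recovers $\kappa_A$ except with probability $2^{-\lambda}$.

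\textbf{Per-coordinate correlation.} For coordinate $i$, write $\tilde\veca_i=\matC\vecs_i+\vece_i+\tilde\vece_i$ and $\tilde\vecb_i^T=\vecr_i^T\matC+\vecf_i^T+\tilde\vecf_i^T$. Then
\[
k_{A,i}\oplus k_{B,i}=(\vecf_i+\tilde\vecf_i)^T\vecs_i+\vecr_i^T(\vece_i+\tilde\vece_i).
\]
Since $\vecs_i$ and $\vecr_i$ have weight exactly $k$, this is the XOR of $2k$ independent bits, each distributed as $\Bern(\eta')$ with $\eta'=\eta+p-2\eta p<1/2$. By the Piling Up Lemma (Lemma~\ref{lem:pilingup}), $\Pr[k_{A,i}=k_{B,i}]=\tfrac12+\tfrac12(1-2\eta')^{2k}=\tfrac12+\gamma$, where $\gamma=2^{-\Theta(k)}$. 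The coordinates are independent across $i$ because each iteration uses fresh secrets, errors and channel noise, and $\matC$ does not enter the difference. Hoeffding (Lemma~\ref{lem:hoeffding}) then gives $\Delta(\veck_A,\veck_B)\le \ell/2-\gamma\ell+\sqrt{\lambda\ell}$, except with probability $e^{-2\lambda}$.

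\textbf{Case $\kappa_A=0$.} Bob receives $\tilde\vectt=\veck_A\oplus\vectt'\oplus(\text{channel noise})$, which equals $\veck_B\oplus \mathbf{z}$ for a vector $\mathbf{z}$ with independent coordinates. The cleanest route is to fold the extra noise per coordinate: $\vectt'$ and the channel noise are each $\Bern(\eta)$ or $\Bern(p)$ and independent of everything else. This lets us compute $\Pr[\tilde t_i=k_{B,i}]=\tfrac12+\gamma'$ directly by the Piling Up Lemma, with $\gamma'=\gamma\,(1-2\eta)(1-2p)=2^{-\Theta(k)}$, and all coordinates are independent. Hoeffding gives
\[
\Delta(\tilde\vectt,\veck_B)\le \ell/2-\gamma'\ell+\sqrt{\lambda\ell/2}
\]
with failure probability $e^{-\lambda}$. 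Choosing $\ell=C\lambda/\gamma'^2=\lambda 2^{\Theta(k)}$ with $C$ a large constant makes $\gamma'\ell\ge 2\sqrt{\lambda\ell}$, so the distance lies below the threshold $\ell/2-\sqrt{\lambda\ell}$ and Bob outputs $\kappa_B=0$.

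\textbf{Case $\kappa_A=1$.} Here $\tilde\vectt$ is uniform and independent of $\veck_B$. By Hoeffding, $\Pr[\Delta\le \ell/2-\sqrt{\lambda\ell}]\le e^{-2\lambda}$, so Bob outputs $\kappa_B=1$.

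Combining the cases and adjusting constants (for instance, taking threshold slack $\sqrt{c\lambda\ell}$ for a suitable constant $c$) gives failure probability at most $2^{-\lambda}$. The main obstacle is the parameter bookkeeping: one must make sure the signal $\gamma'\ell$ dominates the two fluctuation terms with exactly the stated threshold, which forces the hidden constant in $\ell=2^{\Theta(k)}\lambda$ to be at least $(1-2\eta')^{-4k}$ up to constants.
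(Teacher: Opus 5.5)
Your proposal is correct and follows the paper's proof essentially step for step. Both arguments use the piling-up lemma to get bias $2^{-\Theta(k)}$ for $k_{A,i}\oplus k_{B,i}$, then fold in $\vectt'$ and the channel noise to get bias $\tfrac12\bigl((1-2\eta)(1-2p)\bigr)^{2k+1}$, then take $\ell\ge 4\lambda/\gamma'^2$ and apply Hoeffding separately for $\kappa_A=0$ and $\kappa_A=1$.

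The differences are cosmetic. Your slack $\sqrt{\lambda\ell/2}$ gives per-bit error $e^{-\lambda}$ rather than the paper's $e^{-2\lambda}$; this already beats $2^{-\lambda}$, so the threshold adjustment you mention is unnecessary. The paper also ends with an explicit union bound over the $\lambda$ parallel copies to cover the full $\lambda$-bit key.
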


\begin{proof}
 After receiving $B$'s message, $A$ computes each bit $k_{A,i}$ as
 $$ k_{A,i} = \tilde{\vecb}_i^T \vecs_i = (\vecr_i^T \matC + \vecf_i^T + \tilde{\vecf}_i^T) \vecs_i = \vecr_i^T \matC \vecs_i + (\vecf_i+\tilde{\vecf}_i)^T \vecs_i \pmod{2}~. $$
 where $\tilde{\vecf}_i := \vecb_i \oplus \tilde{\vecb}_i$ is the channel noise introduced in $B$'s message. 
 
 Symmetrically, $B$ computes each bit $k_{B,i}$ as 
 $$ k_{B,i} = {\vecr}_i^T \tilde{\veca}_i = \vecr_i^T (\matC \vecs_i + \vece_i + \tilde{\vece}_i) = \vecr_i^T \matC \vecs_i + \vecr_i^T (\vece_i+\tilde{\vece}_i) \pmod{2}~. $$
 We first show that the vectors $\veck_A$ and $\veck_B$ have non-trivial agreement with high probability. Note that
 \begin{equation}\label{eqn} k_{A,i} \oplus k_{B,i} = (\vecf_i+\tilde{\vecf}_i)^T \vecs_i \oplus \vecr_i^T (\vece_i + \tilde{\vece}_i)
 \end{equation}

 Let $\eta,p < 1/2$ without loss of generality. For simplicity of notation, set $\theta := \frac{1}{2} - \eta$ and $\tau := \frac{1}{2} - p$ denote the bias of $\eta$ and $p$, respectively. Note that 
 by the piling up lemma (Lemma~\ref{lem:pilingup}), the random variables $\vecf_i + \tilde{\vecf}_i$ and $\vece_i + \tilde{\vece}_i$ are distributed like $\Bern(1/2-\zeta)^n$ where 
 $\zeta = 2\tau\theta$.
 
 Therefore, for any fixed $\vecs_i$ and $\vecr_i$ with Hamming weight $k$, the bit $k_{A,i} \oplus k_{B,i}$ is Bernoulli with parameter $$\frac{1}{2} - 2^{2k-1}\zeta^{2k}$$
 by an application of the piling up lemma (Lemma~\ref{lem:pilingup}).
 
 In the final message, if $\kappa_A = 0$, $A$ transmits $\veck_A \oplus \vectt'$ and $B$ receives $\veck_A \oplus \vectt' \oplus \vectt''$ where $\vectt' \sim \Bern(\eta)^{\ell}$ and $\vectt'' \sim \Bern(p)^{\ell}$.  By yet another application of the piling up lemma (Lemma~\ref{lem:pilingup}), each bit of $\veck_A \oplus \vectt' \oplus \vectt'' \oplus \veck_B$ is Bernoulli with parameter 
 $$\frac{1}{2} - 2^{2k} \zeta^{2k+1} = \frac{1}{2} - 2^{4k+1}(\tau\theta)^{2k+1} = \frac{1}{2} - \frac{1}{2} \cdot \big( (2\tau) \cdot (2\theta) \big)^{2k+1} = \frac{1}{2} - 2^{-ck}$$
 where $c = c(p,\eta) > 0$ when $\eta,p$ are constants less than $1/2$.
 
 The expected number of bits in $\veck_A \oplus \veck' \oplus \veck''$ and $\veck_B$ that agree is $\ell/2 + \ell\cdot 2^{-ck}$.
 We will set $\ell > 4 \lambda \cdot 2^{2ck}$ which ensures that $\ell\cdot 2^{-ck} > 2 \sqrt{\lambda \ell}$. 
 By an application of the Hoeffding bound (Lemma~\ref{lem:hoeffding}), the probability that less than $\ell/2+\sqrt{\lambda \ell}$ bits agree is at most $e^{-2\lambda}$.  Thus, with probability all but $e^{-2\lambda}$, $B$ will set $\kappa_B = 0$. 
 
 On the other hand, if $\kappa_A=1$, $A$ transmits a uniformly random string and $B$ receives a noisy version of it, which is also uniformly random. The expected number of bits of the string that agree with $\veck_B$ is $\ell/2$. The probability that more than $\ell/2+\sqrt{\lambda \ell}$ bits agree is at most $e^{-2\lambda}$ by another application of the Hoeffding bound (Lemma~\ref{lem:hoeffding}). Thus, with probability all but $e^{-2\lambda}$, $B$ will set $\kappa_B = 1$ in this case. 
 
 Put together, $A$ and $B$ will agree on the same key bit with probability $1- e^{-2\lambda}$ and the same $\lambda$-bit key with probability $1-\lambda \cdot e^{-2\lambda} \geq  1- 2^{-\lambda}$ (for large enough $\lambda$). 
\end{proof}

\begin{remark}
  It should be clear from the analysis that the protocol can tolerate channel noise as large as $\frac{1}{2}-n^{-o(1)}$ while paying with a correspondingly strong quantitative LSPN hardness assumption (with $k=\omega(1)$). When the channel noise rate becomes as large as $\frac{1}{2} - n^{-O(1)}$, the assumption  disintegrates to LSPN with constant sparsity which is just false.   
\end{remark}
The next order of business is to show that the protocol satisfies the uniform transcript property and the key indistinguishability property.
\begin{lemma}[Security]\label{lem:simpleprot:secure}
  Let $n, m, k \in \mathbb{N}$, and $\eta,p \in [0,1]$. Under the $\LSPN_{n,m,k,\eta}$ assumption with $m = \Theta(n)$, $k=\Theta(\log n)$ and $\eta = \Theta(1)$, the protocol in Figure~\ref{fig:simpleprot} achieves the uniform transcript property and the key indistiguishability property tolerating a channel noise rate of $p=\Theta(1)$.
\end{lemma}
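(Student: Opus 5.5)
The proof is a hybrid argument over the transcript $\tau^* = \big((\veca_i,\vecb_i)_{i\in[\ell]},\vectt\big)$ together with the channel noise $e$. Since the channel noise $e$ is sampled independently of everything the parties compute, and the adversary sees it in the clear, it can be generated by the reduction itself. So it suffices to show that $(\matC,(\veca_i,\vecb_i)_i,\vectt,\kappa_A)$ is indistinguishable from $(\matC,U,U,U_\ell,U_1)$. Here $\kappa_A=\kappa_B$ is the final key bit, which agrees with $\kappa_B$ except with probability $2^{-\lambda}$ by Lemma~\ref{lem:simpleprot:correct}; conditioning on agreement therefore costs only a negligible amount. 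The $\lambda$ parallel repetitions are then handled by a standard hybrid.

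The hybrids, in order:
\begin{enumerate}
\item \emph{Replace Alice's messages.} Replace each $\veca_i=\matC\vecs_i+\vece_i$ by uniform vectors. This is exactly the multi-secret LSPN distribution of Lemma~\ref{lem:multi-secret}, with the $n$ rows of $\matC$ as the public samples $\veca$ (so $m=n$) and $\ell=\poly(n)$ secrets $\vecs_i$. The reduction can compute Bob's side honestly because it knows $\vecr_i,\vecf_i$. The difficulty is that it must also produce $\veck_A$, and $k_{A,i}=\tilde\vecb_i^T\vecs_i$ depends on the secrets. I handle this by first rewriting the last message: $\vectt=\veck_A\oplus\vectt'$ when $\kappa_A=0$.
\item \emph{Replace $\veck_A$ by uniform.} This is the step I expect to be the main obstacle. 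Write $k_{A,i}=\vecr_i^T(\matC\vecs_i)+(\vecf_i+\tilde\vecf_i)^T\vecs_i$ and treat the pair $(\veca_i,\,\vecr_i^T\matC\vecs_i+\text{noise})$ as one extended LSPN instance. Concretely, append the row $\vecb_i^T=\vecr_i^T\matC+\vecf_i^T$ to $\matC$ as an additional sample for secret $\vecs_i$. Its "noisy inner product" $\vecb_i^T\vecs_i+\tilde\vecf_i^T\vecs_i$ is $\langle\vecb_i,\vecs_i\rangle$ plus a term correlated with the secret, so the noise is not of the standard LSPN form. The cleanest fix I would try is to first replace the $\vecb_i$ by uniform (hybrid on Bob's side, symmetric LSPN with secrets $\vecr_i$ and public matrix $\matC^T$). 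After that, $\vecb_i$ is a fresh uniform sample, so $(\vecb_i,\vecb_i^T\vecs_i)$ is an honest LSPN-type sample except for the extra channel-noise term $\tilde\vecf_i^T\vecs_i$.

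The order therefore has to be: Bob's messages first, with key bits computed from $\vecs_i$, which the reduction knows; then Alice's messages together with the key bits jointly, as $n+1$ samples per secret, where $\veca_i$ uses rows of $\matC$ and $k_{A,i}$ uses the row $\tilde\vecb_i$. Here I would use that $\tilde\vecb_i=\vecb_i+\tilde\vecf_i$ is itself uniform and that the reduction can sample it, so $k_{A,i}=\langle\tilde\vecb_i,\vecs_i\rangle$ is a noiseless or extra-noise sample. If noiseless samples are an issue, I would fold the $\Bern(\eta)$ mask $\vectt'$ into it as the LSPN noise, since $\vectt$ only ever reveals $\veck_A\oplus\vectt'$. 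This is precisely why Alice adds $\vectt'$. The adversary sees $e$, including the components $\tilde\vecf_i$, but those are independent of the secrets, so they can be sampled by the reduction.
\item \emph{Conclude.} In the final hybrid, $\vectt$ is uniform regardless of $\kappa_A$, so the transcript is uniform and $\kappa_A$ is independent of it. This gives both the uniform transcript property and key indistinguishability.
\end{enumerate}

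The losses are $O(\lambda\ell)$ LSPN distinguishing advantages, which sum to quasi-polynomially small quantities.
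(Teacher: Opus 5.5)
Your final ordering is exactly the paper's hybrid sequence $H1$--$H3$. You replace Bob's messages first via LSPN with secrets $\vecr_i$, where the reduction knows $\vecs_i$ and so can still form the key bits. You then replace Alice's messages together with the $\vectt'$-masked key bits as $n+1$ LSPN samples per secret, with the uniform $\tilde\vecb_i$ as the extra row (the paper's relabeling of $(\vecb_i,\tilde\vecf_i)$). The one thing to fix is your opening claim that it suffices to ignore the channel noise $e$: this is not justified, because $\tilde\vecf_i$ is correlated with $\vectt$ through $\veck_A$, so $e$ must stay in the joint distribution in every hybrid, as your Step~2 in fact keeps it.
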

\begin{proof}
    We will show that the joint distribution of the first two transmitted messages in the protocol and the intermediate shared key  $\veck_A$ are jointly random given the public parameters and the channel noise. That is, 
    \begin{align*}
    \bigg( \underbrace{\matC}_{\pp}, \underbrace{(\tilde{\vece}_i, \tilde{\vecf}_i)_{i=1}^{\ell}}_{\mbox{\small noise}}, 
    & \underbrace{(\matC \vecs_i + \vece_i, \vecr_i^T \matC + \vecf_i^T)_{i=1}^\ell, (k_{A,i} = (\vecr_i^T \matC + \vecf_i^T + \tilde{\vecf}_i^T)\vecs_i + t_i')_{i=1}^\ell}_{\tau^*}
    \bigg) \\
    & \approx_c
     \bigg( \matC, (\tilde{\vece}_i, \tilde{\vecf}_i)_{i=1}^{\ell}, 
    (\vecu_i, \vecv_i^T)_{i=1}^\ell, (w_{i})_{i=1}^\ell
    \bigg)
    \end{align*}
    where $\approx_c$ denotes computational indistinguishability (Definition~\ref{def:compind}), $\vecu_i,\vecv_i \sim \fF_2^n$ and $w_{i} \sim \fF_2$ are uniformly random, $\tilde{\vece}_i, \tilde{\vecf}_i \sim \Bern(\eta)^n$, and all other variables are as in Figure~\ref{fig:simpleprot}. This, in particular, gives us the uniform transcript property (Definition~\ref{def:pnrke}).

    Since both of the distributions above is a product distribution conditioned on $\matC$, it suffices to show:
    \begin{align}\label{eq:key}
    \bigg( {\matC}, \tilde{\vece}_i, \tilde{\vecf}_i, 
    & \matC \vecs_i + \vece_i, \vecr_i^T \matC + \vecf_i^T, k_{A,i} = (\vecr_i^T \matC + \vecf_i^T + \tilde{\vecf}_i^T)\vecs_i + t_i'
    \bigg) \\
    & \label{eq:RHS} \approx_c
     \bigg( \matC, \tilde{\vece}_i, \tilde{\vecf}_i, 
    \vecu_i, \vecv_i^T, w_{i}
    \bigg)
    \end{align}
    
    We will show the above via a hybrid argument transforming the distribution on the LHS to the one on the RHS step by step.
    
    \medskip\noindent
    \textit{Hybrid $H0$.} This is the distribution (\ref{eq:key}).

    \medskip\noindent
    \textit{Hybrid $H1$.} This is the same as $H0$ except all instances of $\vecr_i^T \matC + \vecf_i^T$ are replaced with a uniformly random vector $\vecv_i^T \sim \fF_2^n$. That is, the distribution looks like 
    \begin{align}\label{eq:H1} 
    \bigg( \matC, \tilde{\vece}_i, \tilde{\vecf}_i, 
    \matC \vecs_i + \vece_i, \vecv_i^T, k_{A,i} = (\vecv_i^T + \tilde{\vecf}_i^T)\vecs_i + t_i'
    \bigg) \end{align}
    $H1$ and $H0$ are computationally indistinguishable by the LSPN assumption. Indeed, the LSPN assumption implies (via Lemma~\ref{lem:multi-secret}) that $(\matC, \vecr_i^T \matC + \vecf_i^T)$ is computationally indistinguishable from $(\matC, \vecv_i^T)$.  Given the first of these distributions, it is not hard to see that there is a polynomial-time procedure that generates the distribution (\ref{eq:key}). The same procedure, given the second of these distributions, generates (\ref{eq:H1}). Now, if there is a polynomial-time distinguisher between (\ref{eq:key}) and (\ref{eq:H1}), it immediately implies a polynomial-time algorithm to solve LSPN. 

     \medskip\noindent
    \textit{Hybrid $H2$.} This hybrid simply relabels each triple $(\tilde{\vecf}_i, \vecv_i, \vecv_i + \tilde{\vecf}_i)$ into $(\tilde{\vecf}_i, \vecv_i + \tilde{\vecf}_i, \vecv_i)$ where $\vecv_i \sim \fF_2^n$ and $\tilde{\vecf}_i \sim \Bern(p)^n$. The observation is simply that the second and third elements in each triples are individually uniformly random subject to the constraint that their XOR is the first element. The resulting distibution is 
    \begin{align}\label{eq:H2} 
    \bigg( \matC, \tilde{\vece}_i, \tilde{\vecf}_i, 
    \matC \vecs_i + \vece_i, \vecv_i^T + \tilde{\vecf}_i^T, k_{A,i} = \vecv_i^T \vecs_i + t_i'
    \bigg) \end{align}

      \medskip\noindent
    \textit{Hybrid $H3$.} This hybrid replaces the pair $(\matC \vecs_i + \vece_i, k_{A,i})$ by a uniformly random $(\vecu_i, w_i) \sim \fF_2^n \times \fF_2$. The resulting distribution is 
    \begin{align}\label{eq:H3} 
    \bigg( \matC, \tilde{\vece}_i, \tilde{\vecf}_i, 
    \vecu_i, \vecv_i^T + \tilde{\vecf}_i^T, w_{i}
    \bigg) \end{align}
    $H2$ and $H3$ are computationally indistinguishable by the LSPN assumption. Indeed, the LSPN assumption with parameters $n,m=n+1,k$ and $\eta$ implies (via Lemma~\ref{lem:multi-secret}) that 
    $$ \bigg( \matC, \vecv_i, 
    \left[ \begin{array}{c} 
    \matC \\ \vecv_i^T 
    \end{array}\right] \vecs_i + 
    \left[ \begin{array}{c} 
    \vece_i \\ t_i' 
    \end{array}
    \right] \bigg) 
    \approx_c 
    \bigg( \matC, \vecv_i, 
    \left[ \begin{array}{c} 
    \vecu_i \\ w_i 
    \end{array}\right]
    \bigg)
    $$
    Given the first of these distributions, it is not hard to see that there is a polynomial-time procedure that generates the distribution (\ref{eq:H2}). The same procedure, given the second of these distributions, generates (\ref{eq:H3}). Now, if there is a polynomial-time distinguisher between (\ref{eq:H2}) and (\ref{eq:H3}), it immediately implies a polynomial-time algorithm to solve LSPN. 

    Finally, (\ref{eq:H3}) is the same as (\ref{eq:RHS}) simply because $\vecv_i \sim \fF_2^n$ and $\vecv_i + \tilde{\vecf}_i$ are the same (uniform) distribution.  This finishes the proof.

    Now, the shared key $\kappa_A$ is $0$ if the transcript distribution is (\ref{eq:key}) and $\kappa_A$ is $1$ if the transcript distribution is (\ref{eq:RHS}). Since (\ref{eq:key}) $\approx_c$ (\ref{eq:RHS}), key indistinguishability (Definition~\ref{def:pnrke}) follows immediately.
\end{proof}

Putting Lemmas~\ref{lem:simpleprot:correct} and \ref{lem:simpleprot:secure} together, we obtain the main theorem of this section.

\begin{theorem}\label{thm:simpleprot}
  Let $n, m, k \in \mathbb{N}$, and $\eta,p \in [0,1]$. Under the $\LSPN_{n,m,k,\eta}$ assumption with $m = \Theta(n)$, $k=\Theta(\log n)$ and $\eta = \Theta(1)$, the protocol in Figure~\ref{fig:simpleprot} is a pseudorandom noise-tolerant key exchange scheme that exchanges a single key bit tolerating a channel noise rate of $p=\Theta(1)$.
\end{theorem}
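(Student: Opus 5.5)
The plan is to obtain Theorem~\ref{thm:simpleprot} by combining the two lemmas just proved and checking them against each clause of Definition~\ref{def:pnrke}. The protocol of Figure~\ref{fig:simpleprot} is run with $\ell=2^{\Theta(k)}\cdot\lambda$, which is $\poly(n)\cdot\lambda$ because $k=\Theta(\log n)$. I take $n=\poly(\lambda)$ and the LSPN parameters $m=\Theta(n)$, $k=\Theta(\log n)$, $\eta=\Theta(1)$.

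\textbf{Correctness.} Lemma~\ref{lem:simpleprot:correct} shows that $\kappa_A=\kappa_B$ except with probability $e^{-2\lambda}$. This holds over $\pp=\matC$, the private coins and the channel noise $e$, which is the probability space Definition~\ref{def:pnrke} asks for. The theorem exchanges a single key bit, so this is already the required $1-\negl(\lambda)$ bound. For a $\lambda$-bit key I would run $\lambda$ parallel copies and take a union bound, as in the last line of that proof. The one thing to check is that the Hoeffding step uses only $\Pr[k_{A,i}\neq k_{B,i}]\le \tfrac12-2^{-\Theta(k)}$ and independence across $i$. Both follow from fresh $\vecs_i,\vecr_i,\vece_i,\vecf_i$ and independent channel noise in each iteration.

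\textbf{Uniform transcript and key indistinguishability.} Lemma~\ref{lem:simpleprot:secure} shows that the tuple (public matrix, channel noise, first two messages, masked $\veck_A\oplus\vectt'$) is computationally indistinguishable from uniform. The final message $\vectt$ is either this masked string (if $\kappa_A=0$) or a truly uniform string (if $\kappa_A=1$). In both branches $\tau^*$, together with the noise $e$, is therefore pseudorandom, which is the uniform-transcript property. For key indistinguishability, compare $(\tau^*,e,\kappa_A)$ with $(\tau^*,e,U_1)$. The transcript conditioned on $\kappa_A=0$ and conditioned on $\kappa_A=1$ are two computationally indistinguishable distributions. A distinguisher that predicts $\kappa_A$ with non-negligible advantage would then separate these two conditionals, so $\kappa_A$ is hidden. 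The LSPN instance is sampled with $m=\Theta(n)$, and the multi-secret version comes from Lemma~\ref{lem:multi-secret}.

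\textbf{Main obstacle.} The delicate step is the last one: Definition~\ref{def:pnrke} quantifies key secrecy over $k=k_A=k_B$, and on the failure event Bob's key differs from Alice's. I would take $\kappa_A$ as the protocol key, with correctness covering agreement. If the channel noise $\tilde{\vecf}_i$ is sampled from $\Bern(p)$ rather than $\Bern(\eta)$, the hybrid $H2$ relabeling is unaffected, since it uses only that $\vecv_i$ is uniform. Conditioning on the failure event, which has probability $e^{-2\lambda}$, changes the distinguishing advantage by at most a negligible amount.
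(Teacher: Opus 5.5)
Your proposal is correct and takes the same route as the paper. The paper proves Theorem~\ref{thm:simpleprot} simply by combining Lemma~\ref{lem:simpleprot:correct} with Lemma~\ref{lem:simpleprot:secure}, whose last paragraph derives key indistinguishability from the indistinguishability of the $\kappa_A=0$ and $\kappa_A=1$ transcripts, just as you do. Your extra bookkeeping is consistent with that argument and slightly more careful: you treat the $\kappa_A=1$ branch of the uniform-transcript property, take $\kappa_A$ as the key and absorb the $e^{-2\lambda}$ failure event, and note that the $\Bern(p)$ versus $\Bern(\eta)$ channel-noise label does not affect hybrid $H2$.
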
